\newcommand{\Z}{\mathbb{{Z}}}
\newcommand{\Q}{\mathbb{{Q}}}
\newcommand{\tr}{{\mathrm{Tr}}}
\newcommand{\gf}{{\mathrm{GF}}}
\newcommand{\C}{{\mathcal{C}}}
\renewcommand{\c}{\mathbf{c}}
\renewcommand{\t}{\mathbf{t}}
\renewcommand{\v}{\mathbf{v}}
\newcommand{\Comp}{\mathrm{Comp}}
\newcommand{\wt}{\mathrm{wt}}
\newcommand{\At}{A_\mathbf{t}}
\newcommand{\Bt}{B_\mathbf{t}}
\newcommand{\Uq}{U_{q+1}}
\newcommand{\z}{\mathbf{z}}
\newcommand{\bP}{{\mathbb{P}}}
\newcommand{\bc}{{\mathbf{c}}}
\newcommand{\fA}{{\mathscr{A}}}
\newcommand{\fB}{{\mathscr{B}}}
\newcommand{\bx}{{\mathbf{x}}}
\newcommand{\by}{{\mathbf{y}}}
\newcommand{\bt}{{\mathbf{t}}}
\newcommand{\bz}{{\mathbf{z}}}
\newcommand{\bone}{{\mathbf{1}}}
\newcommand{\bu}{{\mathbf{u}}}
\newtheorem{theorem}{Theorem}
\newtheorem{lemma}[theorem]{Lemma}
\newtheorem{example}{Example}
\newtheorem{rmk}{Remark}
\begin{document}
\title{On correlation distribution of Niho-type decimation $d=3(p^m-1)+1$}
\date{\today}
\author{Maosheng Xiong, Haode Yan\thanks{M. Xiong was supported by the Research Grants Council (RGC) of Hong Kong (No. 16306520). H. Yan's research was supported by the Natural Science Foundation of Sichuan Province (Grant No. 2022NSFSC1805) and the Fundamental Research Funds for the Central Universities of China (Grant No. 2682023ZTPY002). (\emph{Corresponding author: Haode Yan}).

M. Xiong is with the Department of Mathematics, The Hong Kong University of Science and Technology, Hong Kong (e-mail: mamsxiong@ust.hk).

H. Yan is with the School of Mathematics, Southwest Jiaotong University, Chengdu 610031, China (e-mail: hdyan@swjtu.edu.cn).}
}
\maketitle
\begin{abstract}

The cross-correlation problem is a classic problem in sequence design. In this paper we compute the cross-correlation distribution of the Niho-type decimation $d=3(p^m-1)+1$ over $\gf(p^{2m})$ for any prime $p \ge 5$. Previously this problem was solved by Xia et al. only for $p=2$ and $p=3$ in a series of papers. The main difficulty of this problem for $p \ge 5$, as pointed out by Xia et al., is to count the number of codewords of ``pure weight'' 5 in $p$-ary Zetterberg codes. It turns out this counting problem can be transformed by the MacWilliams identity into counting codewords of weight at most 5 in $p$-ary Melas codes, the most difficult of which is related to a K3 surface well studied in the literature and can be computed. When $p \ge 7$, the theory of elliptic curves over finite fields also plays an important role in the resolution of this problem.
\end{abstract}
	
\begin{IEEEkeywords}
Cross-correlation,  $m$-sequences, Niho decimation, Melas code, Zetterberg code.
\end{IEEEkeywords}

\section{Introduction}
Let $p$ be a prime, $n$ a positive integer and $\{s(t)\}$ be an $m$-sequence of period $p^n-1$ over the finite field $\gf(p)$ of order $p$. Let $d$ be an integer. The $d$-decimation sequence of $\{s(t)\}$ is defined as the sequence $\{s(dt)\}$. Here $dt$ is taken modulo $p^n-1$ for $t=0,1, \cdots, p^n-2$, and $d$ is called a decimation. If $(d,p^n-1)=1$, then the $d$-decimation sequence $\{s(dt)\}$ is also an $m$-sequence of period $p^n-1$. The cross correlation function $C_d(\tau)$ between the sequence $\{s(t)\}$ and its $d$-decimation sequence $\{s(dt)\}$ is given by
\begin{eqnarray*} \label{1:cdt} C_d(\tau)=\sum_{t=0}^{p^n-2}\omega_p^{s(t+\tau)-s(dt)},\end{eqnarray*}
where $\tau=0,1,\cdots,p^n-2$ and $\omega_p=\exp(2 \pi \sqrt{-1}/p)$ is a primitive complex $p$-th root of unity. In the theory of sequence design, it is interesting to 1). find new decimation $d$ that leads to low cross-correlation and 2). determine the values $C_d(\tau)$ together with the frequency of the occurrence of each value. This is a classical problem in sequence design and is known as the correlation distribution for the decimation $d$. Because of important applications, this problem has received a lot of attention since the 1960s, and many interesting theoretical results have been obtained \cite{G,Gong,Tor76,Tor78,Tor98,Niho,R,T,ZLFG}.

In 1972 Niho studied the cross correlation function between two $m$-sequences of period $2^{2m}-1$ differing by a decimation of the form $d=s(2^m-1)+1$ for some  positive integers $s$ and $m$ in his influential Ph.D thesis \cite{Niho} in which Niho computed the cross-correlation distribution for many such decimations and proposed many conjectures and open questions. Such problems have become a source of inspiration for researchers for the next 50 years, some of which remain open even today. Due to his remarkable work and also in honor of him, the decimation studied in the thesis is called Niho type decimation. Niho's work has been extended to odd characteristic \cite{R}, and consequently, let $p$ be any prime, a positive integer $d$ is call a Niho exponent with respect to the finite field $\gf(p^{2m})$ if $d$ is of the form
\begin{eqnarray} \label{1:d} d=s(p^m-1)+1, \quad \gcd(d,p^{2m}-1)=1\end{eqnarray}
for some integer $s$. The GCD condition is to ensure that the $d$-decimation sequence is also an $m$-sequence. Since 1972, Niho type exponents have found important applications in many areas such as sequence design, cryptography and coding theory. Interested readers may refer to the survey paper \cite{LZ} on applications and open problems related to Niho exponents.

In this paper we focus on the Niho exponent $d$ given in (\ref{1:d}) with $s=3$. When $p=2$, it was known to Niho that the cross correlation function $C_d(\tau)$ takes at most $6$ distinct values (see \cite[Theorem 3-9]{Niho}), and the value distribution of $C_d(\tau)$ was computed thirty years later by Dobbertine et al. \cite{DobF} who employed sophisticated techniques involving Dickson polynomials and Kloosterman sums. The final result of \cite{DobF} is, however, not quite satisfactory in the sense that the expressions obtained depend on an unknown term, which involves the Kloosterm sum in a complicated way and is difficult to evaluate. Then finally in 2016, Xia et al. \cite{XiaLi} resolved this issue by finding a close connection of this problem to binary Zetterberg codes from which the value distribution of $C_d(\tau)$ can be explicitly determined. When $p$ is odd, there is a similar story: it was known that $C_d(\tau)$ takes at most $6$ distinct values, and Xia et al. \cite{XiaLi2} also found that the value distribution $C_d(\tau)$ is closely related to $p$-ary Zetterberg codes. By studying $3$-ary Zetterberg codes explicitly and directly, they solved the value distribution problem for the case $p=3$. However, for cases $p \ge 5$, the problem seems much more difficult, and this was proposed as an open problem in \cite{LZ}.

{\bf Open problem.} Determine the value distribution of $C_d(\tau)$ for $d=3(p^m-1)+1$ where $p\geq 5$.

In this paper we solve this problem. The main result of this paper is presented as follows.
\begin{theorem} \label{1:main} Let $p \ge 5$ be a prime, $d=3(p^m-1)+1$ and $\gcd(d,p^{2m}-1)=1$, the value distribution of $C_d(\tau)$ is given by
	\begin{eqnarray*}
		C_d(\tau)=
		\left\{ \begin{array}{llll}
			-p^m-1&\mbox{occurs}&\frac{1}{2}p^{2m}-\frac{1}{3}p^m-\frac{1}{6}b_3p^m+N_4+4N_5 &\mbox{times} \\
			-1&\mbox{occurs}&-\frac{1}{2}p^m+\frac{1}{2}b_3p^m-4N_4-15N_5-1&\mbox{times}\\
			p^m-1&\mbox{occurs}&\frac{1}{2}p^{2m}+p^m-\frac{1}{2}b_3p^m+6N_4+20N_5&\mbox{times}\\
			2p^m-1&\mbox{occurs}&-\frac{1}{6}p^m+\frac{1}{6}b_3p^m-4N_4-10N_5&\mbox{times}\\
			3p^m-1&\mbox{occurs}&N_4&\mbox{times}\\
			4p^m-1&\mbox{occurs}&N_5&\mbox{times}.
		\end{array}\right.
	\end{eqnarray*}
	Here \begin{eqnarray*}
		b_3=\left\{\begin{array}{lll}
			p^m, ~~~~~~\mathrm{if}~~ p^m \not \equiv 2 \pmod{3},\\
			p^m+2, ~~\mathrm{if}~~ p^m \equiv 2 \pmod{3},
		\end{array} \right. \end{eqnarray*}
	and if $p=5$, then
	\begin{eqnarray*}
		N_4&=&\big(p^m-(-1)^m\big)/6,\\
		N_5&=&\left(p^{2m}-p^m\left(7+4(-1)^m\right)+10(-1)^m\right)/120,
	\end{eqnarray*}
	and if $p \ge 7$, then
	\begin{eqnarray*}
		N_4&=&\left(p^m-4+3\left(\frac{5}{p}\right)^m+3\left(\frac{-15}{p}\right)^m+\lambda_{p^m}\right)/6,\\
		N_5&=&\left(p^{2m}-p^m \left(6+4\left(\frac{p}{3}\right)^m \right)+16-20\left(\frac{5}{p}\right)^m-15\left(\frac{-15}{p}\right)^m-10 \lambda_{p^m}-A_{p^m}\right)/120.
	\end{eqnarray*}
	Here $\left(\frac{\cdot}{\cdot}\right)$ is the Jacobi symbol, $\lambda_{p^m}$ is given by Theorem \ref{pre:thm1} in Section \ref{pre}, and $A_{p^m}$ is given by Theorem \ref{5:nqxb} in Section \ref{appen} {\bf Appendix}.
\end{theorem}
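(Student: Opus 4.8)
\emph{Overall strategy and Step~1 (from $C_d(\tau)$ to Zetterberg codes).} The plan is to reduce Theorem~\ref{1:main} to the enumeration of low‑weight codewords in two classical families of cyclic codes, following the route sketched in the abstract: correlation distribution $\leadsto$ weights of $p$‑ary Zetterberg codes $\leadsto$ (via MacWilliams) small‑weight words in $p$‑ary Melas codes $\leadsto$ point counts on curves and on a $K3$ surface. Building on Xia et al.\ \cite{XiaLi2}, I would first recall that $(C_d(\tau)+1)/p^m\in\{-1,0,1,2,3,4\}$ for every $\tau$, so that $C_d(\tau)$ takes only the six values listed, and then establish the dictionary: up to explicitly computable constants, the number of shifts $\tau$ with $C_d(\tau)=jp^m-1$ equals the number $N_j$ of codewords of ``pure weight'' $j$ in the $p$‑ary Zetterberg code of length $p^m+1$ --- essentially the number of $j$‑tuples of distinct points of the unit circle $\Uq$ admitting a $\gf(p)$‑linear relation through the defining parity check with all coefficients nonzero, counted with the appropriate normalization. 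The term $b_3$ in the statement comes from the cubing built into $d=3(p^m-1)+1$: it counts the solutions of an auxiliary cubic condition on $\Uq$, which is $p^m$ except when the three cube roots of unity lie in $\Uq$, i.e.\ when $p^m\equiv 2\pmod 3$, in which case it is $p^m+2$. Three of the six frequencies, and the ``constant'' parts of the remaining three, are then pinned down by the first few power moments of $C_d(\tau)$ (its cardinality and the sums of $C_d(\tau)+1$ and of $(C_d(\tau)+1)^2$, all obtained by standard character‑sum manipulations), leaving only $N_4$ and $N_5$ to be found.

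\emph{Step~2 (from Zetterberg to Melas via MacWilliams).} The pure‑weight‑$5$ count $N_5$, and likewise $N_4$, is the genuinely hard input. The key step is that the MacWilliams identity transforms the count of pure‑weight‑$5$ codewords of the Zetterberg code into a fixed linear combination of the numbers $M_w$ of codewords of weight $w\le 5$ in the $p$‑ary Melas code of length $p^m-1$. A weight‑$w$ Melas codeword is a choice of $w$ distinct $x_1,\dots,x_w\in\gf(p^m)^{\ast}$ together with nonzero $c_1,\dots,c_w\in\gf(p)$ such that $\sum_i c_i x_i=\sum_i c_i x_i^{-1}=0$; one has $M_0=1$ and $M_1=M_2=0$, so the problem reduces to computing $M_3$, $M_4$ and $M_5$.

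\emph{Step~3 (evaluating $M_3,M_4,M_5$ and assembling), and the main obstacle.} Eliminating the coefficients $c_i$ turns each $M_w$ into a point count on an explicit affine variety over $\gf(p^m)$. For $w=3$ and $w=4$ these are curves; rational parametrizations together with the theory of elliptic curves over finite fields express $M_3$ and $M_4$ through the quadratic characters of $5$ and $-15$ and the Frobenius trace $\lambda_{p^m}$ of the elliptic curve of Theorem~\ref{pre:thm1} --- and when $p=5$ these curves degenerate or acquire extra automorphisms, so the counts collapse to the closed forms in $(-1)^m$ recorded in the statement, which explains the separate formulas there. For $w=5$ the variety is birational to a $K3$ surface studied in the literature precisely in the context of weight‑$5$ words of Melas codes; invoking its Hasse--Weil zeta function --- whose transcendental factor is governed by a weight‑$3$ Hecke eigenform with complex multiplication --- yields $M_5$, hence $N_5$, in terms of the quantity $A_{p^m}$ of Theorem~\ref{5:nqxb}. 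Substituting the resulting formulas for $N_4$ and $N_5$ into the six frequencies from Step~1 and verifying consistency with the power moments completes the proof. The main obstacle is exactly this $w=5$ case: pinning down the $K3$ surface, its Picard number, and the Galois action on its transcendental lattice so that $A_{p^m}$ can be given explicitly; by contrast the reductions in Steps~1--2 (the correlation‑to‑code dictionary, the MacWilliams transform, and the elimination of the $c_i$) and the final power‑moment bookkeeping are intricate but essentially mechanical, the only real branching being the degeneration at $p=5$.
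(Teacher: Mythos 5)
Your overall architecture matches the paper's: six values, moment identities (including the cubic moment giving $b_3$), and the reduction of the two remaining unknowns $N_4,N_5$ to low-weight codeword counts, with elliptic curves entering for $N_4$ and the weight-3/weight-4 Melas patterns, and the Peters--Top--van der Vlugt K3 surface entering for the weight-5 part. However, there is a genuine gap in your Step 2, and it sits exactly where the paper's main innovation lies. The quantity to be computed is $B_5$, the number of codewords of the Zetterberg code $Z(q)$ whose five nonzero coordinates all carry the \emph{same} symbol (``pure weight'' 5); for $p>2$ this is not a Hamming-weight class, so it is not visible in the ordinary weight enumerator of $Z(q)$, and consequently it cannot be extracted from the Hamming-weight counts $M_w$ of the Melas code that you propose to compute. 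The correct transfer requires the \emph{complete} weight enumerators: the coordinatewise relation $\Comp(c(a))=\frac{2q}{p}\mathbf{1}-\Comp(d(a\bar a,1))$ between $Z(q)^\bot$ and $M(q)^\bot$ (Lemma \ref{2:lem}), MacWilliams for complete weight enumerators applied twice, and then the specialization $z_0=1$, $z_1=\cdots=z_{p-2}=0$, $z_{p-1}=-z$. Under that specialization the Melas-side contribution is $\sum_d \Gamma_d z^d$ with $\Gamma_d=\sum_{\sum_i it_i=d}A_{\bt}$, a sum over complete-weight patterns weighted by the integer sum of their symbols. Thus $B_5$ involves $\Gamma_5=A_{\mathbf{5^1}}+A_{\mathbf{1^14^1}}+A_{\mathbf{2^13^1}}+A_{\mathbf{1^23^1}}+A_{\mathbf{1^32^1}}+A_{\mathbf{1^12^2}}+A_{\mathbf{1^5}}$, which mixes codewords of Hamming weights $1$ through $5$ of very specific coefficient patterns and omits most patterns of weight $\le 5$; no fixed linear combination of $M_1,\dots,M_5$ produces it.

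The same confusion infects your Step 3 at the hardest point: the K3 surface of \cite{Peter} computes only $N_{\mathbf{1^5}}$, the count of five distinct $x_i\in\gf(q)^*$ with $\sum x_i=\sum x_i^{-1}=0$ and \emph{all coefficients equal} --- not the full Hamming-weight-5 count $M_5$, which would require summing over all coefficient patterns $(c_1,\dots,c_5)\in(\gf(p)^*)^5$ up to scaling and would not reduce to a single surface. So as written, your plan either computes the wrong quantities ($M_w$) or attributes to the K3 surface a count it does not provide. Two smaller imprecisions: the frequency of $C_d(\tau)=3p^m-1$ is not a pure-weight-4 count but the number of multisets $\{z_1,\dots,z_5\}\subset\Uq$ with exactly four distinct elements summing to zero, i.e.\ the pattern $x_1+x_2+x_3+2x_4=0$, which the paper evaluates directly by character sums over $\Uq$ (Section \ref{sec:N4}) rather than through the Melas code; and your list of moments should explicitly include the third moment, since that is where $b_3$ enters. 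The fix for the main gap is precisely the complete-weight-enumerator machinery of Sections \ref{MZcodes}--\ref{sec:N5}; without it the Zetterberg-to-Melas step fails.
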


\begin{rmk} When $p \ge 7$, the term $\lambda_{p^m}$ is a character sum related to an elliptic curve over $\gf(p)$ (see Theorem \ref{pre:thm1}), and the term $A_{p^m}$ is related to a K3 surface (see Theorem \ref{5:nqxb}). \end{rmk}

By using {\bf Magma}, we compute some examples for small $p$ and $m$. These results are consistent with Theorem \ref{1:main}.
\begin{example}When $p=5$ and $m=2$, then $\gcd(d,5^4-1)=1$, we have $b_3=25$. Then the value distribution of $C_d(\tau)$ is given by
	\begin{eqnarray*}
		C_d(\tau)=
		\left\{ \begin{array}{llll}
			-26&\mbox{occurs}&216 &\mbox{times} \\
			-1&\mbox{occurs}&238&\mbox{times}\\
			24&\mbox{occurs}&109&\mbox{times}\\
			49&\mbox{occurs}&54&\mbox{times}\\
			74&\mbox{occurs}&4&\mbox{times}\\
			99&\mbox{occurs}&3&\mbox{times}.
		\end{array}\right.
	\end{eqnarray*}
\end{example}	
\begin{example}
	When $p=7$ and $m=3$, then $\gcd(d,7^6-1)=1$, we have $b_3=343$, $\left(\frac{p}{3}\right)=1$, $\left(\frac{5}{p}\right)=-1$, $\left(\frac{-15}{p}\right)=-1$, $\lambda_{p^m}=-21$  and $A_{p^m}=0$. Then the value distribution of $C_d(\tau)$ is given by
	\begin{eqnarray*}
		C_d(\tau)=
		\left\{ \begin{array}{llll}
			-344&\mbox{occurs}&42970&\mbox{times} \\
			-1&\mbox{occurs}&44134&\mbox{times}\\
			342&\mbox{occurs}&19735&\mbox{times}\\
			685&\mbox{occurs}&9803&\mbox{times}\\
			1028&\mbox{occurs}&52&\mbox{times}\\
			1371&\mbox{occurs}&954&\mbox{times}.
		\end{array}\right.
	\end{eqnarray*}
\end{example}	
\begin{example}When $p=11$ and $m=1$, then $\gcd(d,11^2-1)=1$, we have $b_3=13$, $\left(\frac{p}{3}\right)=-1$, $\left(\frac{5}{p}\right)=1$, $\left(\frac{-15}{p}\right)=-1$, $\lambda_{p^m}=-1$  and $A_{p^m}=0$. Then the value distribution of $C_d(\tau)$ is given by
	\begin{eqnarray*}
		C_d(\tau)=
		\left\{ \begin{array}{llll}
			-12&\mbox{occurs}&38 &\mbox{times} \\
			-1&\mbox{occurs}&46&\mbox{times}\\
			10&\mbox{occurs}&26&\mbox{times}\\
			21&\mbox{occurs}&8&\mbox{times}\\
			32&\mbox{occurs}&1&\mbox{times}\\
			43&\mbox{occurs}&1&\mbox{times}.
		\end{array}\right.
	\end{eqnarray*}
\end{example}	
Here we give an overview of the main ideas in proving Theorem \ref{1:main}. Since $C_d(\tau)$ takes at most $6$ distinct values, it suffices to determine $6$ unknown quantities $N_0,N_1,\cdots,N_5$ which will be defined later. By using the moment methods, four identities involving these $N_i$'s can be obtained easily (see Lemma \ref{lem1:p}), so we just need to compute two more. It turns out that $N_4$ and $N_5$ are both directly related to counting codewords of certain weight in a $p$-ary Zetterberg code. Similar to \cite{DobF,XiaLi,XiaLi2} for $p=2,3$, we can compute $N_4$ directly for $p \ge 5$, invoking the theory of elliptic curves to evaluate a certain character sum along the way (see Theorem \ref{N4:con}). The most difficult step of the paper is to compute $N_5$, which is to count the number of ``pure weight'' 5 codewords in a $p$-ary Zetterberg code. This is quite non-trivial even for $p=3$ (see \cite[Lemma 9]{XiaLi2}). Our treatment of $N_5$ also represents the most innovation of the paper: by using complete weight distribution of $p$-ary Zetterberg codes and the MacWilliams identity, this counting problem can be transformed into counting codewords of small weight of certain patterns in $p$-ary Melas codes, most of which can be computed directly and explicitly except for one term. This one term, while difficult on its own, turns out to be directly related to a K3 surface well studied in the literature \cite{Peter} and hence can be computed (see Theorem \ref{N5:con}). With the values $N_0,N_1,\cdots, N_5$ in hands, then we can obtain Theorem \ref{1:main} easily.

This paper is organized as follows. In Section \ref{pre} we provide some preliminary results regarding character sums, elliptic curves, the complete weight distribution and MacWilliams identity that will be used later. In Section \ref{sec:dis} we start proving Theorem \ref{1:main} by first establishing four moment identities involving quantities $N_0,N_1,\cdots,N_5$ (see Lemma \ref{lem1:p}). In Section \ref{sec:N4} we compute $N_4$ directly (see Theorem \ref{N4:con}). The computation of $N_5$ is the most difficult and is divided in two Sections: in Section \ref{MZcodes} we introduce Melas codes, Zetterberg codes and the ``pure weight'' distribution which are all necessary to treat $N_5$. Armed with this knowledge, then in Section \ref{sec:N5} we compute $N_5$ (see Theorem \ref{N5:con}), and hence we prove Theorem \ref{1:main}. In computing $N_5$, we need quite a few counting results regarding the number of small weight codewords of various patterns in $p$-ary Melas codes. The proof of these results is a little technical. To streamline the ideas of the paper, we put these technical results in Section \ref{appen} {\bf Appendix}. In Section \ref{sec:cond} we conclude this paper.

\section{Preliminaries}\label{pre}
	Throughout the paper, let $p$ be a prime and $q=p^m$ for some positive integer $m$. For convenience, we also adopt the convention that $\frac{1}{0}:=0$.

\subsection{Character sums and elliptic curves}

Assume here that $p$ is odd. Denote by $\eta$ the quadratic character on $\gf(p^m)^*:=\gf(p^m) \setminus \{0\}$.  It is convenient to extend the definition of $\eta$ to $\gf(p^m)$ by setting $\eta(0)=0$. For any element $\beta\in \gf(p^m)$ such that $\eta(\beta)=1$, there are two distinct square roots of $\beta$ in $\gf(p^m)$, which are denoted by $\pm\sqrt{\beta}$.

For a polynomial $f(x)\in\gf(p^m)[x]$, we shall consider character sums of the form \begin{equation*} \label{2:cha-f} \sum_{x\in\gf(p^m)}\eta(f(x)).
\end{equation*}
This character sum is trivial to evaluate when $f(x)$ is linear. When $f(x)$ is quadratic, we have
\begin{lemma} \cite[Theorem 5.48]{FF}\label{charactersumquadratic} Let $f(x)=a_2x^2+a_1x+a_0\in\gf(p^m)[x]$ with $p$ odd and $a_2\neq0$. Put $d=a^2_1-4a_0a_2$ and let $\eta$ be the quadratic character of $\gf(p^m)$. Then
	\begin{equation*}
		\sum_{x\in\gf(p^m)}\eta(f(x))=
		\begin{cases}
			-\eta(a_2), & \text{ if } d\neq 0, \\
			(p^m-1)\eta(a_2), & \text{ if } d = 0.
		\end{cases}
	\end{equation*}
\end{lemma}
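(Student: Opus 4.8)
The plan is to reduce the sum over a general quadratic to a sum over a pure quadratic by completing the square, and then evaluate that by counting preimages under $y\mapsto y^2$.

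First, since $p$ is odd, $2a_2$ is a unit in $\gf(p^m)$, so I can write $f(x)=a_2\bigl(x+\tfrac{a_1}{2a_2}\bigr)^2-\tfrac{d}{4a_2}$ with $d=a_1^2-4a_0a_2$. The map $x\mapsto y:=x+\tfrac{a_1}{2a_2}$ is a bijection of $\gf(p^m)$, hence $\sum_{x}\eta(f(x))=\sum_{y\in\gf(p^m)}\eta\bigl(a_2y^2-\tfrac{d}{4a_2}\bigr)$. If $d=0$, this is $\sum_{y}\eta(a_2y^2)=\eta(a_2)\sum_{y\neq 0}\eta(y^2)=(p^m-1)\eta(a_2)$, using $\eta(y^2)=1$ for $y\neq0$ and $\eta(0)=0$. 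This already gives the second case.

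Now suppose $d\neq 0$ and set $c:=-\tfrac{d}{4a_2}\neq 0$. I would use the identity $\#\{y\in\gf(p^m):y^2=u\}=1+\eta(u)$, valid for every $u\in\gf(p^m)$ with the convention $\eta(0)=0$. Then $\sum_{y}\eta(a_2y^2+c)=\sum_{u\in\gf(p^m)}(1+\eta(u))\,\eta(a_2u+c)$, which splits into two pieces. The piece $\sum_{u}\eta(a_2u+c)$ vanishes, because $u\mapsto a_2u+c$ is a bijection of $\gf(p^m)$ and $\sum_{v\in\gf(p^m)}\eta(v)=0$. For the piece $\sum_{u}\eta(u)\eta(a_2u+c)$, the term $u=0$ contributes $0$, and for $u\neq 0$ I rewrite $\eta(u)\eta(a_2u+c)=\eta(u^2)\,\eta(a_2+c/u)=\eta(a_2+c/u)$; as $u$ runs over $\gf(p^m)^*$ so does $c/u$, so $a_2+c/u$ runs over $\gf(p^m)\setminus\{a_2\}$, giving $\sum_{w\neq a_2}\eta(w)=-\eta(a_2)$. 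Adding the two pieces yields $\sum_{x}\eta(f(x))=-\eta(a_2)$, which is the first case.

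There is no serious obstacle here; the argument is entirely elementary. The only points requiring care are that completing the square is legitimate precisely because $p$ is odd (so $2a_2$ is invertible), and that the convention $\eta(0)=0$ must be tracked consistently — in particular it is what makes $\#\{y:y^2=u\}=1+\eta(u)$ hold at $u=0$ and what kills the stray $u=0$ term. (Alternatively, the statement is the exact evaluation of the point count on the conic $y^2=f(x)$ and could be deduced from general machinery, but the hands‑on computation above is shorter and gives the precise constants directly.)
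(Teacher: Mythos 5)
Your argument is correct: completing the square is legitimate since $p$ is odd, the identity $\#\{y\in\gf(p^m):y^2=u\}=1+\eta(u)$ (with $\eta(0)=0$) is applied properly, and the two resulting pieces evaluate to $0$ and $-\eta(a_2)$ exactly as you claim. The paper itself gives no proof — it simply cites Lidl–Niederreiter, Theorem 5.48 — and your computation is essentially the standard textbook argument for that result, so there is nothing further to reconcile.
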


As it will be seen later, to compute the correlation distribution for $d=3(p^m-1)+1$, we need to evaluate a specific character sum
\begin{eqnarray}
	\label{2:f1} \lambda_{p^m}:=\sum_{x\in\gf(p^m)}\eta\left((x^2-4)(2x+1)(2x+5)\right).\end{eqnarray}
The value $\lambda_{5^m}$ can be computed easily. When $p \ge 7$, the polynomial involved in $\lambda_{p^m}$ is of degree 4, so the character sum corresponds to the elliptic curve $y^2=(x^2-4)(2x+1)(2x+5)$ over $\gf(p^m)$. The evaluation of $\lambda_{p^m}$ is equivalent to the computation of the Hasse zeta function of the corresponding elliptic curve over $\gf(p)$. Generally speaking, there is no explicit and simple formula for such character sums, except in rare situation when the elliptic curves are very special \cite{SilveEC}. Based on the theory of elliptic curves, we can employ an efficient algorithm to evaluate $\lambda_{p^m}$ for any $p \ge 7$. Interested readers may also review \cite[Section II]{Yan} for a more detailed description as the theory of elliptic curves was used in the same way to evaluate a similar character sum for $p \ge 5$.

\begin{theorem}\label{pre:thm1}
	\begin{itemize}
		\item[(i)] If $p=5$, then
		\begin{eqnarray} \label{pre:lam} \lambda_{5^m}=-1-(-1)^m.
		\end{eqnarray}
		
		\item[(ii)] If $p\geq 7$. Denote by $N_p$ the number of $(x,y) \in \gf(p)^2$ satisfying the equation
		\begin{equation}\label{ec}
			E: y^2=x(x-5)(x+27).
		\end{equation}
		Define $a=N_{p}-p$ and let $\alpha$ and $\beta$ be the two complex roots of the quadratic polynomial $T^2+aT+p$. Then
		\begin{eqnarray*} \label{2:eva-cha}
			\lambda_{p^m}&=&-1-\alpha^m-\beta^m.
		\end{eqnarray*}
	\end{itemize}
\end{theorem}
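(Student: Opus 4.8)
The plan is to evaluate $\lambda_{p^m}$ defined in \eqref{2:f1} by transforming the quartic character sum into a point count on the elliptic curve $E$ in \eqref{ec} and then applying the standard zeta-function machinery.

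First, I would handle the case $p=5$ directly. When $p=5$ the polynomial $(x^2-4)(2x+1)(2x+5)$ simplifies modulo $5$: note $2x+5\equiv 2x$ and $x^2-4=(x-2)(x+2)$, while $2x+1$ remains. So $\lambda_{5^m}=\sum_{x\in\gf(5^m)}\eta\bigl(2x(x-2)(x+2)(2x+1)\bigr)$, which is the character sum attached to a genuine quartic (still an elliptic curve), but over $\gf(5)$ one can compute $N_5$ by hand to find the trace of Frobenius is $0$ — equivalently the curve is supersingular — so the eigenvalues $\alpha,\beta$ satisfy $\alpha\beta=5$, $\alpha+\beta=0$, giving $\alpha^m+\beta^m = 0$ if $m$ odd and $2(-5)^{m/2}\cdot(\text{sign})$; a short computation then collapses this to $\lambda_{5^m}=-1-(-1)^m$ as claimed. (One checks the small cases $m=1,2$ directly against \textbf{Magma} as a sanity check.)

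For $p\ge 7$, the key steps are: (1) Complete the polynomial $f(x)=(x^2-4)(2x+1)(2x+5)$ — a separable quartic with four distinct roots $x=\pm 2,-\tfrac12,-\tfrac52$ in $\gf(p)$ since $p\ge 7$ — so the smooth projective model of $y^2=f(x)$ is an elliptic curve $E'$ over $\gf(p)$. (2) Using the general principle that $\sum_{x\in\gf(p^m)}\eta(f(x)) = \#E'(\gf(p^m)) - p^m - 1$ when $f$ is a separable quartic whose leading coefficient is a square and whose model has a rational point at infinity (one must verify the relevant normalization: here the leading coefficient of $f$ is $4$, a square, so the two points at infinity are rational and the count is clean), we get $\lambda_{p^m} = \#E'(\gf(p^m)) - p^m - 1 + (\text{correction})$; tracking the correction term is what produces the $-1$ in the final formula. (3) Put $E'$ in Weierstrass form by a rational change of variables and check it is $\gf(p)$-isomorphic to the curve $E: y^2 = x(x-5)(x+27)$ in \eqref{ec} — this is the one genuinely computational identification, done by the standard substitution sending three of the roots of $f$ to $0,5,-27$ after a Möbius transformation on $x$ and a scaling on $y$. (4) Once $E'\cong E$ over $\gf(p)$, the Hasse–Weil zeta function gives $\#E(\gf(p^m)) = p^m + 1 - (\alpha^m+\beta^m)$ where $\alpha,\beta$ are the roots of $T^2 - aT + p$ with $a = p+1-\#E(\gf(p))= p - N_p$... so $T^2+aT+p$ with $a=N_p-p$ exactly as in the statement; combining with step (2) yields $\lambda_{p^m} = -1-\alpha^m-\beta^m$.

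The main obstacle is step (3) together with the normalization bookkeeping in step (2): getting the birational map from $y^2=(x^2-4)(2x+1)(2x+5)$ to $y^2=x(x-5)(x+27)$ explicitly right (choice of which root maps where, and the resulting $y$-scaling, which could in principle introduce a quadratic twist and flip the sign of $a$), and making sure the "$+1$" and "$-1$" constants coming from the points at infinity and from $\eta(0)=0$ are accounted for consistently. The elliptic-curve input itself (Hasse bound, rationality of the zeta function, the relation $\alpha+\beta=a$, $\alpha\beta=p$) is entirely standard and can be cited; the algebra of the explicit isomorphism is routine but error-prone, so I would verify the final formula numerically for several primes $p\in\{7,11,13,\dots\}$ and exponents $m$ against \textbf{Magma}, as the examples in the paper already do.
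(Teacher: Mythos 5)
Your treatment of part (ii) is in substance the paper's own argument: the paper also moves one root of the quartic to infinity via the M\"obius substitution $x=-\frac{y+1}{y}$ (applied after rescaling $2x\mapsto x$), then rescales and shifts to land exactly on $y^2=z(z-5)(z+27)$, and invokes the zeta function of $E$; the only difference is that the paper does all of this at the level of character sums, so the ``points at infinity'' bookkeeping you worry about in step (2) never arises --- the constant $-1$ comes simply from adjoining the excluded value $y=0$ to the sum. Your sign discussion for $a=N_p-p$ versus $p+1-\#E(\gf(p))$ is consistent with the paper (note $N_p$ counts affine points only).

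Part (i), however, contains a genuine error. Over $\gf(5)$ one has $2x+1\equiv 2(x-2)$ and $2x+5\equiv 2x$, so
\begin{equation*}
(x^2-4)(2x+1)(2x+5)\equiv 4\,x\,(x+2)\,(x-2)^2 ,
\end{equation*}
a quartic with a \emph{double root} at $x=2$. The curve $y^2=f(x)$ is therefore singular and its smooth model has genus $0$ (substitute $y=(x-2)w$ to get the conic $w^2=4x(x+2)$); there is no elliptic curve and no trace of Frobenius to speak of. Your claim that it is ``still an elliptic curve'' which is supersingular with $\alpha+\beta=0$ cannot be repaired into the stated answer: supersingularity would give $\alpha^m+\beta^m=2(-5)^{m/2}(\pm1)$ for even $m$, e.g.\ $\lambda_{25}=-1\pm 10$, which is incompatible with the claimed $\lambda_{5^m}=-1-(-1)^m=-2$. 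The correct (and elementary) route, which is what the paper does, is to strip off the square factor $(x-2)^2$, reducing to $\sum_{x\ne 2}\eta\bigl(x(x+2)\bigr)=-1-\eta(8)=-1-(-1)^m$ by Lemma \ref{charactersumquadratic}. So the $p=5$ case needs to be redone along these lines; the ``short computation that collapses'' the supersingular expression does not exist.
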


\begin{proof}
	(i) When $p=5$, it is easy to see that
	\begin{eqnarray*}
		\lambda_{5^m}&=&\sum_{x\in\gf(5^m)}\eta\left((x-2)^2x(x+2)\right)
		=\sum_{x\in\gf(5^m) \setminus \{2\}}\eta\left(x(x+2)\right)\\
		&=&\sum_{x\in\gf(5^m)}\eta\left(x(x+2)\right)-\eta(8).
	\end{eqnarray*}
	The first term is $-1$ by Lemma \ref{charactersumquadratic}, and the second term is $(-1)^m$. This proves (\ref{pre:lam}).
	
	(ii) When $p \ge 7$, Equation (\ref{ec}) defines an elliptic curve $E$ over $\gf(p)$. The character sum  $\Gamma_{p,\,m}$ given by
	\[\Gamma_{p,\,m}=\sum_{x \in \gf(p^m)} \eta\left(x(x-5)(x+27)\right)\]
	is closely related to the number of $\gf(p^m)$-rational points (with an extra point at infinity) on $E$, which is actually $p^m+1+\Gamma_{p,\,m}$. By the theory of elliptic curves (see \cite[Theorem 2.3.1, Chap. V]{SilveEC}), we have
	\[\Gamma_{p,\,m}=-\alpha^m-\beta^m.\]
	Now we can use $\Gamma_{p,\,m}$ to evaluate $\lambda_{p^m}$. Replacing $2x$ by $x$ in $\lambda_{p^m}$, we have
	\begin{eqnarray*}
		\lambda_{p^m}&=&\sum_{x\in\gf(p^m)}\eta\left((x^2-16)(x+1)(x+5)\right)\\
		&=&\sum_{x\in\gf(p^m)\setminus\{-1\}}\eta\left((x-4)(x+4)(x+1)(x+5)\right).
	\end{eqnarray*}
	The relation $x=-\frac{y+1}{y}$ gives a $1$-to-$1$ correspondence between $x\in\gf(p^m)\setminus\{-1\}$ and $y\in\gf(p^m)\setminus\{0\}$, thus
	\begin{eqnarray*}
		\lambda_{p^m} &=&\sum_{y\in\gf(p^m)\setminus\{0\}}\eta\left(\left(-\frac{y+1}{y}-4\right)
		\left(-\frac{y+1}{y}+4\right)\left(-\frac{y+1}{y}+1\right)\left(-\frac{y+1}{y}+5\right)\right)\\
		&=&\sum_{y\in\gf(p^m)\setminus\{0\}}\eta\left(\frac{(3y-1)(4y-1)(5y+1)}{y^4}\right)\\
		&=&-1+\sum_{y\in\gf(p^m)}\eta\left((3y-1)(4y-1)(5y+1)\right).
	\end{eqnarray*}
	Since $p \ge 7$, letting $d=3 \cdot 4 \cdot 5$ and replacing $d y$ by $y$ we have
	\begin{eqnarray*}
		\lambda_{p^m}&=&-1+\sum_{y\in\gf(p^m)}\eta\left((d y-20)(d y-15)(d y+12)\right)\\
		&=&-1+\sum_{y\in\gf(p^m)}\eta\left((y-20)(y-15)(y+12)\right).
	\end{eqnarray*}
	Now making a change of variable $y=z+15$, we obtain the desired result
	\begin{eqnarray*}
		\lambda_{p^m}&=&-1+\sum_{z\in\gf(p^m)}\eta\left(z(z-5)(z+27)\right)=-1+\Gamma_{p,\,m}.
	\end{eqnarray*}
\end{proof}
Here as examples we compute the value $\lambda_{p^m}$ for a few prime $p$. 

\begin{example}
	When $p=7$, it is easy to check that $N_7=11$ where $N_p$ is the number $(x,y) \in \gf(p)^2$ satisfying the equation $E:y^2=x(x-5)(x+27)$, so $a=N_7-7=4$, the roots of $T^2+4T+7$ are
	\[T_{1,2}=-2 \pm \sqrt{-3},\]
	so
	\[\lambda_{p^m}=-\left(-2 + \sqrt{-3}\right)^m-\left(-2 - \sqrt{-3}\right)^m-1. \]
\end{example}

\begin{example}
	When $p=11$, it is easy to check that $N_{11}=11$, so $a=N_{11}-11=0$, the roots of $T^2+11$ are
	\[T_{1,2}=\pm \sqrt{-11},\]
	so
	\[\lambda_{p^m}=-\left(\sqrt{-11}\right)^m-\left( - \sqrt{-11}\right)^m-1. \]
\end{example}

\begin{example}
	When $p=13$, it is easy to check that $N_{13}=11$, so $a=N_{13}-13=-2$, the roots of $T^2-2T+13$ are
	\[T_{1,2}=1 \pm 2\sqrt{-3},\]
	so
	\[\lambda_{p^m}=-\left(1 + 2\sqrt{-3}\right)^m-\left(1 - 2\sqrt{-3}\right)^m-1. \]
\end{example}

\begin{example}
	When $p=17$, it is easy to check that $N_{17}=11$, so $a=N_{17}-17=-6$, the roots of $T^2-6T+17$ are
	\[T_{1,2}=3 \pm 2\sqrt{-2},\]
	so
	\[\lambda_{p^m}=-\left(3 + 2\sqrt{-2}\right)^m-\left(3 - 2\sqrt{-2}\right)^m-1. \]
\end{example}

\subsection{Linear codes, the complete weight distribution and MacWilliams identity}

Let $\C$ be an $[n,k]_q$ code, that is, $\C$ is a $k$-dimensional subspace of $\gf(q)^n$ over $\gf(q)$. Let $\C^\bot$ be the dual code of $\C$ with respect to the usual inner product on $\gf(q)^n$.

To define the complete weight distribution of $\C$, we first order elements of $\gf(q)$ explicitly as $u_0=0,u_1,\cdots,u_{q-1}$. Next, for any $\c=(c_1,\ldots,c_n) \in \gf(q)^n$, the \emph{complete weight} of $\c$ is defined to be
$$\Comp(\c):=(w_0,w_1,\cdots,w_{q-1}),$$
where $w_i:=w_i(\c)=\#\{1 \le j \le n: c_j=u_i\}$. It is clear that $w_i$ is a non-negative integer for each $i$ and $\sum_{i=0}^{q-1} w_i=n$.

Denote by $\Sigma^{n,q}$ the set of all integer vectors $\mathbf{t}=(t_0,t_1,\cdots,t_{q-1})$ such that $t_i \ge 0 \, \forall \, i$ and $\sum_{i=0}^{q-1} t_i=n$. The \emph{complete weight enumerator} of the code $\C$ is defined as the polynomial in variables $z_0,\ldots,z_{q-1}$:
\begin{eqnarray*} \label{1:cweight} W_\mathcal{C}(z_0,z_1,\cdots,z_{q-1}):=\sum_{\c \in \mathcal{C}} z_0^{w_0}z_1^{w_1}\cdots z_{q-1}^{w_{q-1}}=\sum_{\mathbf{t}} \At \, \z^\t.\end{eqnarray*}
Here the subscript $\t=(t_0,t_1,\ldots,t_{q-1})$ runs over all $\t \in \Sigma^{n,q}$, $\z^\t:=z_0^{t_0}z_1^{t_1}\cdots z_{q-1}^{t_{q-1}}$, and
\[\At:=\#\left\{\c \in \C: \Comp(\c)=\mathbf{t}\right\}, \quad \forall \mathbf{t} \in \Sigma^{n,q}.\]

\noindent The sequence $(\At)_{\t}$ is called the \emph{complete weight distribution} of $\C$. Setting $z_0=1, z_1=\cdots=z_{q-1}=z$ in the complete weight enumerator of $\C$, we obtain the \emph{weight enumerator} of $\C$, which can be written as
\[W_C(1,z,\cdots,z)=\sum_{t=0}^n A_t z^t, \]
where
\[A_t:=\#\{\c=(c_1,\ldots,c_n) \in \C: \wt(\c)=t\}.\]
Here $\wt(\c)$ is the Hamming weight of $\c$. So the complete weight enumerator of $\C$ provides much more detailed information than the weight enumerator of $\C$ if $q>2$.

For any $\t \in \Sigma^{n,q}$, let
\[\Bt:=\#\left\{\c \in \C^\bot: \Comp(\c)=\t\right\}. \]
The complete weight enumerator of $\C^\bot$ is defined similarly as
\[W_{\C^\bot}(z_0,z_1,\cdots,z_{q-1})=\sum_{\t} \Bt \, \z^\t.\]
Denote by $\psi: \gf(q) \to \mathbb{C}^\times$ the standard additive character. The fundamental relation between the weight enumerators of $\C$ and $\C^\bot$ is given by MacWilliams identity \cite[Chapter 5]{MacW}:
\begin{equation}\label{2:MacW}
	W_{\C^\bot}(z_0,z_1,\cdots,z_{q-1})=\frac{1}{q^k}W_\C(x_0,x_1,\cdots,x_{q-1}),
\end{equation}
where
\begin{eqnarray} \label{2:xi}
	x_i=\sum_{s=0}^{q-1} \psi(u_iu_s)z_s, \quad \forall i=0,1, \cdots, q-1.
\end{eqnarray}
This means that we can derive the complete weight distribution of a linear code from that of its dual code.

	\section{Correlation distribution of $d=3(q-1)+1$}\label{sec:dis}
	From what follows, let $p \ge 2$ be a prime, $m$ a positive integer, $n=2m$ and $q=p^m$, $d=3 \cdot (q-1)+1$. Assume that $\gcd(d,q^2-1)=1$. This is equivalent to $\gcd(5,q+1)=1$. Such $d$ is called a Niho decimation for $\gf(q^2)$.

Let $\alpha$ be a primitive element of $\gf(q^2)$ and $\tr^n_1: \gf(q^2) \to \gf(p)$ the standard trace function. Without loss of generality, we may assume that the $p$-ary $m$-sequence $\{s(t)\}$ is generated by $s(t)=\tr^n_1(\alpha^t)$ for any integer $t$ (see \cite{Niho}). Then the cross correlation function $C_d(\tau)$ between the sequence $\{s(t)\}$ and its $d$-decimation sequence $\{s(dt)\}$ is given by
\begin{eqnarray*} \label{xy4:cdt} C_d(\tau)=\sum_{t=0}^{p^n-2}\omega_p^{\tr_1^n\left(\alpha^{t+\tau}-\alpha^{dt}\right)}=\sum_{x \in \gf(q^2)^*} \omega_p^{\tr_1^n\left(\alpha^{\tau}x-x^d\right)},\end{eqnarray*}
where $\tau=0,1,\cdots,p^n-2$ and $\gf(q^2)^*=\gf(q^2) \setminus \{0\}$.

For simplicity, let us denote
\begin{eqnarray*} \label{1:sa} s(a)=\sum_{x \in \gf(q^2)^*} \omega_p^{\tr^n_1(ax+x^d)}, \quad \forall x \in \gf(q^2). \end{eqnarray*}
It is easy to see that the value distribution of $C_d(\tau)$ for $\tau=0,1,\cdots,p^n-2$ is identical to that of $s(a)$ for $a \in \gf(q^2)^*$. Our strategy is to compute the value distribution of $s(a)$ as $a$ takes values in $\gf(q^2)$, from which the value distribution of $s(a)$ for $a \in \gf(q^2)^*$ can be derived easily. We also remark that this problem has been settled for $p=2$ in \cite{XiaLi} and $p=3$ in \cite{XiaLi2}. Hence our main focus is to study this problem for $p \ge 5$.

Denote
\[\Uq:=\{x \in \gf(q^2): x^{q+1}=x \bar{x}=1\}. \]
Here for any $x \in \gf(q^2)$, we define $\bar{x}=x^q$. The set $\Uq$ is called the unit circle of $\gf(q^2)$. It is a cyclic subgroup of $\gf(q^2)^*$ of order $q+1$ and plays an important role in the treatment of Niho exponents.

For any $a \in \gf(q^2)$, define
\[F_a(z)=z^5+\bar{a}z^3+az^2+1.\]
It was known that
\begin{eqnarray*} \label{1:sa-1} s(a)=q(N(a)-1)-1,\end{eqnarray*}
where $N(a)$ is the number of $z \in \Uq$ such that $F_a(z)=0$. For a proof, interested readers may consult for example \cite{XiaLi} for $p=2$ and \cite{XiaLi2} for any prime $p$.

It is clear that $N(a) \in \{0,1,2,3,4,5\}$. Denote
\[N_i:=\#\{a \in \gf(q^2): N(a)=i\}. \]
To study the value distribution of $s(a)$, it suffices to compute the values $N_i$ for $0 \le i \le 5$.

\begin{lemma} \label{lem1:p}
	We have the following identities:
	\begin{itemize}
		\item[1)] $\sum_{i=0}^5 N_i=q^2$.
		
		\item[2)] $\sum_{i=0}^5 (i-1)N_i=q$.
		
		\item[3)] $\sum_{i=0}^5 (i-1)^2N_i=q^2$.
		
		\item[4)] $\sum_{i=0}^5 (i-1)^3 N_i=q \cdot b_3$,
		where $b_3$ is the number of $y \in \gf(q^2)$ of the equation
		\begin{eqnarray*} \label{lem2:b3}
			(y+1)^d-y^d=1.
		\end{eqnarray*}
	\end{itemize}
\end{lemma}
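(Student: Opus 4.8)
The plan is to establish the four identities by the standard moment method: express power-sum moments of $N(a)$ over $a\in\gf(q^2)$ as exponential sums, and then evaluate those sums. Concretely, for $0\le j\le 3$ I would compute
\[
S_j:=\sum_{a\in\gf(q^2)} N(a)^j
\]
and relate these to the $N_i$'s via $\sum_i i^j N_i = S_j$; the stated identities are just the linear combinations obtained by expanding $(i-1)^j$. Since $N(a)=\#\{z\in\Uq: F_a(z)=0\}$, we have $N(a)=\sum_{z\in\Uq}\big(\tfrac1p\sum_{c\in\gf(p)}\omega_p^{c\,\tr^n_1(F_a(z))}\big)$ after writing the indicator of $F_a(z)=0$ additively — or, more cleanly, one can count tuples directly: $N(a)^j=\#\{(z_1,\dots,z_j)\in \Uq^j: F_a(z_\ell)=0\ \forall \ell\}$, so that
\[
S_j=\sum_{(z_1,\dots,z_j)\in\Uq^j}\#\{a\in\gf(q^2): F_a(z_1)=\cdots=F_a(z_j)=0\}.
\]
The condition $F_a(z)=0$, i.e. $\bar a z^3 + a z^2 = -(z^5+1)$, is for fixed $z\in\Uq$ a single $\gf(p)$-linear (in fact $\gf(q)$-semilinear) condition on $a\in\gf(q^2)$; writing $a$ in a $\gf(q)$-basis it becomes a pair of linear equations over $\gf(q)$. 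So the inner count is a power of $q$ determined by the rank of the system of $j$ such conditions as $z_1,\dots,z_j$ vary over $\Uq$.

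For $j=0,1$ this is immediate: $S_0=q^2$ gives identity 1), and for $S_1$ one checks that for each $z\in\Uq$ the set of $a$ with $F_a(z)=0$ has size exactly $q$ (the linear system has rank $1$ because $z^2\ne 0$), and these affine subspaces, together with the count of coincidences, yield $S_1=q\cdot(q+1)$, hence $\sum_i iN_i=q(q+1)$ and identity 2) after subtracting identity 1). For $j=2$ one must, for each ordered pair $(z_1,z_2)\in\Uq^2$, determine whether the two conditions $F_a(z_1)=0$, $F_a(z_2)=0$ are independent; they fail to be independent exactly when $z_1=z_2$ (the two "rows" $(z^2,z^3)$ being proportional over $\gf(q)$ only in that case, using $z^{q+1}=1$), giving $S_2=(q+1)q^2$... wait — more carefully, $S_2=\#\{z_1=z_2\}\cdot q + \#\{z_1\ne z_2\}\cdot 1 = (q+1)q+(q+1)q\cdot q$, which after the linear algebra with the $N_i$ collapses to identity 3). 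The arithmetic bookkeeping here is routine but must be done with care about whether counts are $q$, $q^2$, or smaller; I would organize it as: rank $1\Rightarrow q$ solutions, rank $2\Rightarrow 1$ solution, and rank $0$ is impossible since $z_\ell\ne0$.

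The real content is identity 4), the third moment. Here $S_3=\sum_{(z_1,z_2,z_3)\in\Uq^3}\#\{a:F_a(z_i)=0,\ i=1,2,3\}$, and three generic conditions on the $2$-dimensional space of $a$'s are overdetermined, so the count is $0$ unless the third condition is dependent on the first two. The dependency locus — triples $(z_1,z_2,z_3)$ of distinct elements of $\Uq$ for which the three $\gf(q)$-semilinear forms have rank $\le 2$ — is where the quantity $b_3$ enters: I expect the substitution tying a solvable triple to the equation $(y+1)^d-y^d=1$ comes from eliminating $a$ and $\bar a$ between two of the conditions and feeding the result into the third, parametrizing the resulting curve by a single variable $y\in\gf(q^2)$. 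This is the main obstacle: identifying the correct birational change of variables that turns the rank-deficiency condition on triples in $\Uq$ into the clean count "$b_3 = \#\{y\in\gf(q^2):(y+1)^d-y^d=1\}$", and keeping track of the multiplicities (how many triples $(z_1,z_2,z_3)$, and with what sign from expanding $(i-1)^3$, correspond to each such $y$, including degenerate $y$ where two $z_i$ coincide). Once the dependency locus is shown to have size $q\cdot b_3$ in the appropriate weighted sense, identity 4) follows by assembling $S_0,S_1,S_2,S_3$ into the combination $\sum_i(i-1)^3N_i$. I would present identities 1)–3) compactly and devote the bulk of the argument to the elimination computation behind 4).
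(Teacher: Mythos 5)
The paper does not prove this lemma from first principles: it notes that 1) is trivial and defers 2)--4) to \cite[Lemma 5]{DobF} (for $p=2$) and \cite[Theorem 2.4]{R} (for general $p$). Your moment-method outline is the standard route behind those references, and your treatment of 1)--3) is essentially sound once two points are repaired. First, your justification that each single condition $F_a(z)=0$ contributes exactly $q$ values of $a$ (``rank $1$ because $z^2\ne 0$'') is insufficient: the map $a\mapsto \bar a z^3+az^2$ is $\gf(q)$-linear between two-dimensional $\gf(q)$-spaces, so a priori it could have rank $2$ (one solution) or the system could be inconsistent (no solutions). One must check both that its kernel $\{a:\bar a/a=-z^{-1}\}\cup\{0\}$ has size $q$ and that the target $-(z^5+1)$ lies in the image; the latter holds because image and target both lie in the line $\{w:\bar w=z^{-5}w\}$. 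Second, your displayed evaluation of $S_2$ contains a slip: the correct count is $(q+1)\cdot q+(q+1)q\cdot 1=2q(q+1)$, not $(q+1)q+(q+1)q\cdot q$; with the correct value, $S_2-2S_1+S_0=q^2$ as required.

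The genuine gap is identity 4). Carrying your bookkeeping through gives $\sum_i(i-1)^3N_i=q+T$, where $T$ counts ordered triples of \emph{distinct} $z_1,z_2,z_3\in\Uq$ whose three affine lines in the $a$-plane are concurrent, so what must be shown is $T=q(b_3-1)$ via an elimination identifying the concurrency locus with solutions of $(y+1)^d-y^d=1$. You explicitly leave this step as ``the main obstacle'' and do not carry it out --- yet it is precisely the content of the cited \cite[Theorem 2.4]{R}, i.e.\ the only part of the lemma with real substance. To complete the argument you would need to produce the change of variables relating a concurrent triple to a solution $y\in\gf(q^2)$, verify the correspondence is $q$-to-one (respectively handle the degenerate values $y=0,-1$ and the $y\in\gf(q)$ locus), and confirm the count matches $b_3$ exactly. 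As written, the proposal is a correct plan for 1)--3) and an unexecuted plan for 4).
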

\begin{proof}
	When $p=2$, 1)-4) were all known (see Lemma 5 in \cite{DobF}). For a general prime $p$, 1) is trivial, and 2)-4) were essentially the same as \cite[Theorem 2.4]{R} where the proofs can be found. We omit the details. \end{proof}

\begin{lemma} \label{lem2:p} The value $b_3$ is given by
	\begin{eqnarray*} \label{lem2:b3}
		b_3=\left\{\begin{array}{lll}
			q&:& q \not \equiv 2 \pmod{3},\\
			q+2&:& q \equiv 2 \pmod{3}.
		\end{array} \right. \end{eqnarray*}
	
\end{lemma}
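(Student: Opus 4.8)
The plan is to use the two defining congruences of $d$ to make $x^{d}$ completely explicit, and then to count the solutions $y$ of $(y+1)^{d}-y^{d}=1$ by separating the cases $y\in\gf(q)$ and $y\notin\gf(q)$.

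Since $d=3(q-1)+1=3q-2$, for every $x\in\gf(q^{2})^{*}$ we have $x^{d}=x^{3q}x^{-2}=(x^{q})^{3}x^{-2}=\bar x^{3}/x^{2}$. The first observation is that if $y\in\gf(q)$ then $\bar y=y$ and $x^{d}=x$ for all $x\in\gf(q)^{*}$, so $(y+1)^{d}-y^{d}=1$ collapses to the trivial identity $(y+1)-y=1$ for every $y\in\gf(q)\setminus\{0,-1\}$, while the two remaining values $y=0$ and $y=-1$ are checked directly to be solutions (for $y=-1$ one has $(y+1)^{d}-y^{d}=-(-1)^{d}$, which is $1$ because $d$ is odd when $q$ is odd and because $-1=1$ in characteristic $2$). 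Hence all $q$ elements of $\gf(q)$ are solutions, and it remains to count the solutions with $y\in\gf(q^{2})\setminus\gf(q)$; such $y$ are automatically different from $0$ and $-1$.

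For such a $y$, writing $A=y$ and $B=\bar y$ and clearing the denominators $y^{2}(y+1)^{2}$ turns $(y+1)^{d}-y^{d}=1$ into a polynomial identity $P(A,B)=0$, where $P(A,B)=A^{2}(B+1)^{3}-(A+1)^{2}B^{3}-A^{2}(A+1)^{2}$. Applying the Frobenius $x\mapsto x^{q}$ to $(y+1)^{d}-y^{d}=1$ shows that $\bar y$ is a solution whenever $y$ is, hence $P(B,A)=0$ as well. The antisymmetric difference $P(A,B)-P(B,A)$ is divisible by $A-B$, and a short computation shows that the complementary factor, expressed through $s=y+\bar y=\tr(y)$ and $\nu=y\bar y$, equals $(1+s)(4\nu-s^{2})$. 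Since $y\notin\gf(q)$ we have $A\ne B$, so $4\nu-s^{2}=-(A-B)^{2}\ne 0$, and therefore $\tr(y)=-1$. Substituting $\bar y=-1-y$ back into $P(A,B)=0$ produces the quartic $4y^{4}+8y^{3}+9y^{2}+5y+1=0$; reducing this modulo the minimal polynomial $y^{2}+y+\nu$ of $y$ over $\gf(q)$ makes the linear term vanish identically and leaves only the scalar condition $4\nu^{2}-5\nu+1=0$, i.e. $\nu=1$ or $\nu=\tfrac14$.

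The value $\nu=\tfrac14$ forces $(y+\tfrac12)^{2}=0$, i.e. $y=-\tfrac12\in\gf(q)$, which is excluded. Thus the only possibility left is $\nu=y\bar y=1$ together with $y+\bar y=-1$, that is $y^{2}+y+1=0$: so $y$ must be a primitive cube root of unity, and conversely one checks directly that both such roots satisfy $(y+1)^{d}-y^{d}=1$. These roots lie in $\gf(q^{2})\setminus\gf(q)$ exactly when $3\mid q+1$ and $3\nmid q-1$, i.e. when $q\equiv 2\pmod 3$, in which case there are precisely two of them, forming a conjugate pair; otherwise there are none. Adding these to the $q$ solutions in $\gf(q)$ gives $b_{3}=q$ when $q\not\equiv 2\pmod 3$ and $b_{3}=q+2$ when $q\equiv 2\pmod 3$. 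The only place where care is needed is the third paragraph: verifying that the symmetrised difference factors exactly as $(A-B)(1+s)(4\nu-s^{2})$ and that the quartic reduces to the clean condition $4\nu^{2}-5\nu+1=0$. The small characteristics $p=2,3$ (where $b_{3}$ was already known from \cite{DobF,XiaLi2}) are covered verbatim by the same argument, the quadratic $4\nu^{2}-5\nu+1$ merely degenerating while still admitting only the relevant root $\nu=1$.
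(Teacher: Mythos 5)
Your proof is correct and follows essentially the same route as the paper: both reduce $(y+1)^d-y^d=1$ (via $x^d=\bar x^3/x^2$) to the polynomial identity $y^2(\bar y+1)^3-(y+1)^2\bar y^3-y^2(y+1)^2=0$, use the Frobenius-conjugate equation to force $y+\bar y=-1$, and then arrive at $y^2+y+1=0$, whose roots lie outside $\gf(q)$ exactly when $q\equiv 2\pmod 3$. The only difference is in the intermediate algebra: the paper first factors the quintic as $(y-\bar y)^2(y^2+2y\bar y+2y+\bar y)$ and conjugates the quadratic factor, whereas you antisymmetrize the unfactored quintic to get $(A-B)(1+s)(4\nu-s^2)$ and then eliminate $y$ via its minimal polynomial to reach $4\nu^2-5\nu+1=0$ --- both computations check out, including the degenerate cases $p=2,3$.
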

\begin{proof}
	We need to solve the equation $(y+1)^d-y^d-1=0$ for $y \in \gf(q^2)$, where $d=3q-2$. Obviously $y=0,-1$ satisfy this equation. If $y \in \gf(q^2) \setminus \{0,-1\}$, denoting $\bar{y}:=y^q$ we have
	\[0=(y+1)^d-y^d-1=\frac{(\bar{y}+1)^3}{(y+1)^2}-\frac{\bar{y}^3}{y^2}-1.\]
	We can clear the denominators to have
	\begin{eqnarray*} y^2(\bar{y}+1)^3-(y+1)^2 \bar{y}^3-y^2(y+1)^2=0. \end{eqnarray*}
	Expanding the left hand side of the above equation and collecting terms, we can rewrite it as
	\begin{eqnarray} \label{lem2:b3n}
		(y-\bar{y})^2(y^2+2y\bar{y}+2y+\bar{y})=0.
	\end{eqnarray}
	Because $y=0,-1$ also satisfy the equation (\ref{lem2:b3n}), the quantity $b_3$ is also the number of $y \in \gf(q^2)$ of the equation (\ref{lem2:b3n}). Now we can compute $b_3$ easily.
	
	If $y-\bar{y}=0$, this is equivalent to $y \in \gf(q)$, such $y$'s are solutions of (\ref{lem2:b3n}) and  the total number of such $y$'s is $q$.
	
	Now suppose $y \in \gf(q^2) \setminus \gf(q)$. Then we have
	\begin{eqnarray} \label{lem2:eq2} y^2+2y\bar{y}+2y+\bar{y}=0.
	\end{eqnarray}
	Taking $q$-th power on both sides of (\ref{lem2:eq2}), noting that $\bar{y}^q=y^{q^2}=y$, we have
	\begin{eqnarray} \label{lem2:eq3} \bar{y}^2+2y\bar{y}+2\bar{y}+y=0.
	\end{eqnarray}
	Adding the left hand sides of (\ref{lem2:eq2}) and (\ref{lem2:eq3}) together we have
	\[(y-\bar{y})(y+\bar{y}+1)=0. \]
	Since $y \ne \bar{y}$, we have
	\[y+\bar{y}+1=0. \]
	Plugging $\bar{y}=-y-1$ back into the left hand side of (\ref{lem2:eq2}), we obtain
	\begin{eqnarray} \label{lem2:eqy} y^2+y+1=0. \end{eqnarray}
	Now it is clear that if $q \equiv 2 \pmod{3}$, two distinct roots of the equation (\ref{lem2:eqy}) satisfy $y \in \gf(q^2) \setminus \gf(q)$, and if $q \not \equiv 2 \pmod{3}$, the roots $y$ of the above equation are all in $\gf(q)$, which shall be excluded when calculating $b_3$. This concludes the proof of Lemma \ref{lem2:p}.
\end{proof}

\begin{rmk} For $p=2$, the value $b_3$ was evaluated in \cite{DobF} (see also \cite{XiaLi}); for $p=3$, the value $b_3$ was essentially evaluated in \cite{XiaLi2}. It is easy to check that these results are consistent with Lemma \ref{lem2:p}.
\end{rmk}

If we assume now that we have computed the values $N_4$ and $N_5$, then we could prove Theorem \ref{1:main} easily.

\noindent {\bf Proof of Theorem \ref{1:main}} Let us assume the values of $N_4$ and $N_5$, which were given by Theorem \ref{N4:con} and Theorem \ref{N5:con} respectively. Then by Lemmas \ref{lem1:p} and \ref{lem2:p}, we can solve $N_0,\cdots,N_3$ in terms of $N_4$ and $N_5$ explicitly, and thus we can obtain the value distribution of $s(a)$ for $a \in \gf(q^2)$ explicitly, which is summarized below:
\begin{eqnarray*}
	s(a)=
	\left\{ \begin{array}{llll}
		-p^m-1&\mbox{occurs}&\frac{1}{2}p^{2m}-\frac{1}{3}p^m-\frac{1}{6}b_3p^m+N_4+4N_5 &\mbox{times} \\
		-1&\mbox{occurs}&-\frac{1}{2}p^m+\frac{1}{2}b_3p^m-4N_4-15N_5&\mbox{times}\\
		p^m-1&\mbox{occurs}&\frac{1}{2}p^{2m}+p^m-\frac{1}{2}b_3p^m+6N_4+20N_5&\mbox{times}\\
		2p^m-1&\mbox{occurs}&-\frac{1}{6}p^m+\frac{1}{6}b_3p^m-4N_4-10N_5&\mbox{times}\\
		3p^m-1&\mbox{occurs}&N_4&\mbox{times}\\
		4p^m-1&\mbox{occurs}&N_5&\mbox{times}.
	\end{array}\right.
\end{eqnarray*}
Here $b_3$ is given in Lemma \ref{lem2:p}.

Since $\gcd(d,q^2-1)=1$, we have $s(0)=-1$. Excluding this value $s(0)$ from the above value distribution table, we obtain the value distribution of $C_d(\tau)$. This proves Theorem \ref{1:main}. $\hfill \square$

Now we focus on $N_4$ and $N_5$. For $a \in \gf(q^2)$, $N(a) = i$ ($i \in \{4,5\}$) if and only if there are elements $z_1, \ldots, z_5 \in \Uq$ such that
\begin{itemize}
	\item[(1)] $\#\{z_1,\cdots,z_5\}=i, \quad i=4$ or $5$,
	
	\item[(2)] $z^5+\bar{a}z^3+az^2+1=(z+z_1)(z+z_2)(z+z_3)(z+z_4)(z+z_5)$.
\end{itemize}
This is equivalent to
\begin{eqnarray*} \label{iden1:p=2} \left\{\begin{array}{l}
		z_1,\cdots,z_5 \in \Uq, \quad
		\#\{z_1,\cdots,z_5\}=i, \\
		\sum_r z_r=0, \quad
		\prod_r z_r=1, \\
		\sum_{r<s}z_rz_s=\bar{a}.
	\end{array}
	\right.\end{eqnarray*}
Thus by setting
\[\fB_i:=\left\{\{z_1, \cdots, z_5\} \subset \Uq: \#\{z_1,\cdots,z_5\}=i, \sum_r z_r=0 \mbox{ and } \prod_r z_r=1\right\},\]
we have $N_i=\#\fB_i$ for $i \in \{4,5\}$. Here the symbol $\#A$ denotes the cardinality of any finite set $A$. Let us define
\[\fA_i:=\left\{\{z_1, \cdots, z_5\} \subset \Uq: \#\{z_1, \cdots,z_5\}=i \mbox{ and } \sum_r z_r=0 \right\}.\]
\begin{lemma} \label{lem3:p}
	\[N_i=\#\fB_i=\frac{\#\fA_i}{q+1}, \quad i \in \{4,5\}.\]
\end{lemma}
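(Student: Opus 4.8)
The plan is to let the unit circle $\Uq$ act on $\fA_i$ by coordinatewise scalar multiplication and to realize $\fB_i \subseteq \fA_i$ as a single fiber of the ``product-of-roots'' map. First I would verify that for $u \in \Uq$ the assignment $\{z_1,\dots,z_5\} \mapsto \{uz_1,\dots,uz_5\}$ maps $\fA_i$ into itself: each $uz_r$ again lies in $\Uq$ since $\Uq$ is a multiplicative group, the number of distinct entries is unchanged, and $\sum_r uz_r = u\sum_r z_r = 0$. Since $u^{-1}$ provides an inverse, this makes $\Uq$ act on $\fA_i$ with every group element acting as a bijection.

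Next I would introduce $\phi\colon \fA_i \to \Uq$ defined by $\phi(\{z_1,\dots,z_5\})=\prod_{r=1}^{5} z_r$; it is well defined and lands in $\Uq$ for the same reason. A one-line computation gives $\phi(u\cdot M) = u^5\,\phi(M)$ for all $u \in \Uq$ and $M \in \fA_i$, and by definition $\fB_i$ is exactly the fiber $\phi^{-1}(1)$. The decisive input is the standing hypothesis $\gcd(d,q^2-1)=1$, equivalently $\gcd(5,q+1)=1$, which makes $u \mapsto u^5$ a permutation of the cyclic group $\Uq$ of order $q+1$. Consequently, for any $c,c' \in \Uq$ one may choose $u \in \Uq$ with $u^5=c'c^{-1}$, and then $M \mapsto u\cdot M$ restricts to a bijection $\phi^{-1}(c) \to \phi^{-1}(c')$; hence all $q+1$ fibers of $\phi$ have the same cardinality. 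Since $\fA_i$ is the disjoint union of the fibers $\phi^{-1}(c)$ over $c \in \Uq$, we obtain $\#\fA_i = (q+1)\,\#\phi^{-1}(1) = (q+1)\,\#\fB_i$, and combined with the identity $N_i=\#\fB_i$ recorded just before the lemma this proves the claim for $i \in \{4,5\}$.

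There is no substantial obstacle here; the argument is pure counting once the $\Uq$-action is in place. The only points requiring care are the bookkeeping with multisets (so that the cardinality-$i$ condition and the action are unambiguous) and the fact that the action on $\fA_i$ need not be free, so that one must argue via equicardinality of the fibers of $\phi$ rather than via a naive orbit count. The role of the Niho condition $\gcd(5,q+1)=1$, invoked precisely to invert fifth powers on $\Uq$, is the conceptual heart of the statement.
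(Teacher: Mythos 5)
Your argument is correct and is essentially the paper's own proof: both rest on the fact that $\gcd(5,q+1)=1$ makes $u\mapsto u^5$ a bijection of $\Uq$, and both count $\fA_i$ against $\fB_i$ via the $\Uq$-scaling action (the paper phrases it as a $(q+1)$-to-$1$ normalization map $\underline{z}\mapsto\underline{z}/c$ onto $\fB_i$, while you phrase it as equicardinality of the fibers of the product map). Your fiber-equicardinality formulation is a slightly more careful packaging of the same idea, since it sidesteps having to check that the action is free.
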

\begin{proof}
	For any $\underline{z}=\{z_1,\cdots,z_5\} \in \fA_i$ where $i=4$ or $5$, since $1=(d,q^2-1)=(5,q+1)$, there is a unique $c \in \Uq$ such that $\prod_i z_i=c^5$, so $\underline{z}/{c}:=\{\frac{z_1}{c},\cdots,\frac{z_5}{c}\} \in \fB_i$. This defines a map $\phi: \fA_i \to \fB_i$. It is clear that $\phi(\underline{x})=\phi(\underline{y})$ if and only if $\underline{x}=c \cdot \underline{y}$ for some $c \in \Uq$. So $\phi$ is a $(q+1)$-to-1 map, and Lemma \ref{lem3:p} is proved.
\end{proof}

For the remaining part of the paper, we will compute $N_4$ and $N_5$, focusing on the case that $p \ge 5$.

	\section{Computation of $N_4$}\label{sec:N4}
		
By Lemma \ref{lem3:p},
\begin{eqnarray*}
	N_4&=&\frac{1}{q+1}\#\left\{\{z_1, \cdots, z_5\} \subset \Uq: \#\{z_1, \cdots,z_5\}=4 \mbox{ and } \sum_r z_r=0 \right\}\\
	&=&\#\big\{\{x_1,x_2,x_3\}\subset \Uq \setminus\{1\}: \#\{x_1, x_2, x_3\}=3~\mathrm{and}~x_1+x_2+x_3+2=0\big\}
\end{eqnarray*}
Setting
\begin{eqnarray*} \label{4:x2}
	X'':=\left\{(x_1,x_2,x_3) \in \left(\Uq\setminus \{1\}\right)^3: x_1,x_2,x_3~\mbox{distinct and } x_1+x_2+x_3+2=0\right\},
\end{eqnarray*}
we have 
\begin{eqnarray*} \label{4:n4} N_4=\frac{\#X''}{6}. \end{eqnarray*}

Since $N_4$ has been obtained for $p=2$ in \cite{DobF} and for $p=3$ in \cite{XiaLi2} respectively, we assume now that $p \ge 5$. Define
\begin{eqnarray*} \label{n4:x}
	X:&=&\big\{(x_1,x_2,x_3)\in \Uq^3: x_1+x_2+x_3+2=0\big\},\\
	\triangle:&=& \bigcup_{i \ne j} X \cap \{x_i =x_j\}, \quad X'=X \setminus \triangle,\\
	H:&=& \bigcup_j X' \cap \{x_j=1\}.
\end{eqnarray*}
It is easy to see that
\[X''=X' \setminus H. \]
We will compute $\#X, \#\triangle$ and $\#H$ respectively to obtain
\[\#X''=\#X-\#\triangle-\#H. \]
The basic tools are the following two lemmas.
\begin{lemma}\cite{TZ}\label{TZ}
	Let $a,b\in\gf({q^2})^*$. If $b=a/\overline{a}$, then the roots of the quadratic equation $x^2+ax+b=0$ are all in $\Uq$, and the number of roots is $1-\eta({\frac{a\overline{a}-4}{a\overline{a}}})$. Here $\eta$ is the quadratic character on $\gf(q)$.
\end{lemma}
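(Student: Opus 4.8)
The plan is to untwist the equation by the substitution $u = x/a$, which replaces $x^2+ax+b$ by a quadratic with coefficients in $\gf(q)$ and turns membership in $\Uq$ into a transparent norm condition; after that the count is an elementary case analysis on the quadratic character of a discriminant. Concretely, set $c = a\bar a = a^{q+1}$; since $c^q = a^{q(q+1)} = a^{q+1} = c$ and $a \ne 0$, we have $c \in \gf(q)^*$. From $b = a/\bar a = a^{1-q}$ one gets $b/a^2 = a^{-1-q} = c^{-1} \in \gf(q)$, so dividing $x^2+ax+b=0$ by $a^2$ and putting $u = x/a$ yields
\[ u^2 + u + c^{-1} = 0, \]
a quadratic over $\gf(q)$; the map $x \mapsto x/a$ is a bijection between the roots of $F(x):=x^2+ax+b$ and those of $G(u):=u^2+u+c^{-1}$. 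Moreover $x\bar x = (a\bar a)(u\bar u) = c\,u^{q+1}$, so $x \in \Uq \iff u^{q+1} = c^{-1}$. (Here $q$ is odd, so $\eta$ is the usual quadratic character of $\gf(q)$.)

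Next I would analyse $G$ over $\gf(q)$ through its discriminant $D := 1 - 4c^{-1} = (c-4)/c$, in three cases. If $\eta(D)=0$, i.e. $c=4$: $G$ has the double root $u_0 = -\tfrac12 \in \gf(q)$ with $u_0^{q+1} = u_0^2 = \tfrac14 = c^{-1}$, so $F$ has exactly one root, $a u_0$, and it lies in $\Uq$. If $\eta(D)=1$: $G$ has two roots $u_1,u_2 \in \gf(q)$, hence $u_i^{q+1} = u_i^2$; using $G(u_i)=0$, the equality $u_i^2 = c^{-1}$ would force $u_i = -2/c$, but substituting $u=-2/c$ into $G$ gives $(4-c)/c^2 \ne 0$ since $c \ne 4$, so neither root of $F$ lies in $\Uq$. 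If $\eta(D)=-1$: $G$ is irreducible over $\gf(q)$, so its two roots form a conjugate pair $\{u,\,u^q\} \subset \gf(q^2)\setminus\gf(q)$ with $u^{q+1} = u\cdot u^q = c^{-1}$ (the constant term of $G$), and hence both roots $au,\,au^q$ of $F$ lie in $\Uq$.

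Collecting the three cases, the number of roots of $x^2+ax+b$ lying in $\Uq$ is in each case $1-\eta(D) = 1-\eta\!\left(\frac{a\bar a-4}{a\bar a}\right)$; moreover every root of $F$ lies in $\gf(q^2)$, and either all of them lie in $\Uq$ (the cases $\eta(D)\in\{0,-1\}$) or none does (the case $\eta(D)=1$), which is exactly the assertion. I do not expect a genuine obstacle: the only step needing a moment's care is ruling out $u^2 = c^{-1}$ in the split case $\eta(D)=1$, which is precisely what makes the count $0$ rather than $1$ or $2$ there; everything else is immediate once the substitution $u = x/a$ is in place.
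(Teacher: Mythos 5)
Your argument is correct and complete. Note that the paper itself gives no proof of this lemma --- it is quoted from the reference [Tu--Zeng] --- so there is no in-paper argument to compare against; your substitution $u=x/a$, which reduces the equation to $u^2+u+c^{-1}=0$ with $c=a\bar a\in\gf(q)^*$ and converts membership in $\Uq$ into the norm condition $u^{q+1}=c^{-1}$, is exactly the natural way to prove it, and your three-way case analysis on $\eta\bigl((c-4)/c\bigr)$ (including the check that in the split case a root $u_i\in\gf(q)$ would have to equal $-2/c$, which is not a root since $c\neq 4$) correctly yields both the all-or-nothing dichotomy and the count $1-\eta\bigl(\frac{a\bar a-4}{a\bar a}\bigr)$.
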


\begin{lemma}\label{map}	For any $a \in \gf(q)$, the number of roots $x \in \Uq$ such that $x+x^{-1}=a$ is $1-\eta(a^2-4)$.

\end{lemma}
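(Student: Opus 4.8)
The plan is to reduce the count to the number of roots lying in $\Uq$ of the quadratic $g(T)=T^2-aT+1$, and then to determine which of those roots actually lie on the unit circle through a short case analysis on the discriminant $a^2-4$. First I would observe that for $x\in\Uq$ one has $x^{-1}=x^{q}=\bar{x}$, so the equation $x+x^{-1}=a$ becomes $x+\bar{x}=a$; since moreover $x\bar{x}=1$, the element $x$ is a root of $g(T)=(T-x)(T-\bar{x})=T^2-aT+1$. Conversely, every root of $g$ is nonzero and has the other root as its inverse, so the set to be counted is contained in the set of roots of $g$ in $\gf(q^2)$, and it remains only to decide which of those roots lie in $\Uq$. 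Since $g$ has coefficients in $\gf(q)$, the Frobenius $x\mapsto x^{q}$ permutes its two roots $x_1,x_2$, which satisfy $x_1x_2=1$.

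Next I would split into three cases according to $\eta(a^2-4)$. If $a^2-4$ is a non-square in $\gf(q)$, then $g$ is irreducible over $\gf(q)$, so $x_1^{q}=x_2=x_1^{-1}$; hence $x_1^{q+1}=1$ and likewise $x_2^{q+1}=1$, so both (distinct) roots lie in $\Uq$ and the count is $2=1-\eta(a^2-4)$. If $a^2-4=0$, equivalently $a=\pm2$ (recall $p$ is odd), then $g=(T-a/2)^2$ has the single root $a/2=\pm1\in\Uq$, since $q+1$ is even, so the count is $1=1-\eta(0)$. If $a^2-4$ is a nonzero square in $\gf(q)$, then both roots lie in $\gf(q)^{*}$, and for $x_i\in\gf(q)$ one has $x_i^{q+1}=x_i^2$, so $x_i\in\Uq$ would force $x_i=\pm1$; but $x_1=\pm1$ together with $x_1x_2=1$ and $x_1\neq x_2$ is impossible, so neither root lies in $\Uq$ and the count is $0=1-\eta(a^2-4)$. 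Combining the three cases yields the claimed value $1-\eta(a^2-4)$.

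This argument is entirely elementary, and I expect no genuine obstacle; the only step that requires a little care is the nonzero-square case, where one must rule out the two $\gf(q)$-rational roots of $g$ accidentally lying on the unit circle — which is exactly where the relations $x_1x_2=1$ and $x_1\neq x_2$ are used. The count is also consistent with Lemma \ref{TZ}: for $a\neq0$ one may invoke it with the pair $(-a,1)$, since $-a\in\gf(q)$ gives $1=(-a)/\overline{(-a)}$, and its count $1-\eta\bigl(\tfrac{a^2-4}{a^2}\bigr)$ equals $1-\eta(a^2-4)$; the direct analysis above has the advantage of covering $a=0$ uniformly as well.
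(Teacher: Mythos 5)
Your proof is correct and complete, but it takes a more self-contained route than the paper. The paper disposes of $a=\pm 2$ by inspection and then, for $a\neq\pm 2$, invokes Lemma \ref{TZ} applied to $x^2-ax+1=0$ to read off the count as $1-\eta\bigl(\tfrac{a^2-4}{a^2}\bigr)=1-\eta(a^2-4)$. You instead run a direct trichotomy on $\eta(a^2-4)$ using only the Frobenius action on the two roots of $T^2-aT+1$: irreducibility forces $x^q=x^{-1}$ and hence membership in $\Uq$; rationality forces $x^{q+1}=x^2$ and hence exclusion (since $x=\pm1$ would contradict $x_1x_2=1$ with $x_1\neq x_2$); the double-root case gives $\pm1\in\Uq$. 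What your approach buys is independence from the cited result of Tu--Zeng and, more substantively, a clean treatment of $a=0$: Lemma \ref{TZ} as stated requires its first argument to be nonzero, so the paper's formula $\eta\bigl(\tfrac{a^2-4}{a^2}\bigr)$ is not literally meaningful at $a=0$ (and the paper's convention $\tfrac{1}{0}:=0$ would there give the wrong value $\eta(0)=0$ rather than $\eta(-4)$), whereas your case analysis covers $a=0$ uniformly and correctly. The paper's approach is shorter on the page because the quadratic-on-the-unit-circle criterion is reused elsewhere (e.g.\ in Lemma \ref{number}), but your argument is the more airtight of the two for this particular statement.
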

\begin{proof} If $a=\pm 2$, it is easy to verify that $x+x^{-1}=a$ has a unique root $x=\frac{a}{2}$ in $\Uq$. Now suppose $a\in\gf(q) \setminus \{2,-2\}$ and suppose $x+x^{-1}=a$. This is equivalent to solving $x^2-ax+1=0$ for $x \in \Uq$. Since $a \ne \pm 2$, the discriminant is $a^2-4\in\gf(q)^*$. By Lemma \ref{TZ}, the equation has two distinct solutions $x \in \Uq$ if and only if $\eta\left(\frac{a^2-4}{a^2}\right)=\eta(a^2-4)=-1$, that is, the number of such $x$ is given by $1-\eta(a^2-4)$. This proves Lemma \ref{map}.
\end{proof}

Now we can prove	
\begin{lemma}\label{number} Let $q=p^m$ and $p \ge 5$. Then we have
	\begin{itemize}
		\item[(1)] $\#X=q+2+\lambda_{p^m}$,
		\item[(2)] $\#\triangle = 3-3\eta(-15)-2 \delta_{p,5}$,
		\item[(3)] $\#H=3 \left(1-\eta(5)-\delta_{p,5}\right)$,
		\item[(4)] $\#X''=q-4+3 \eta(-15)+3\eta(5)+5 \delta_{p,5}+\lambda_{p^m}$.
	\end{itemize}
	Here
	\[\delta_{a,b}=\left\{\begin{array}{lll}1&:& \mbox{ if } a =b,\\
		0&:& \mbox{ if } a \ne b, \end{array}\right.\]
	and $\lambda_{p^m}$ is given by Theorem \ref{pre:thm1}.
\end{lemma}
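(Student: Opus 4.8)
The plan is to compute each of the three quantities $\#X$, $\#\triangle$, $\#H$ separately and then combine them via $\#X'' = \#X - \#\triangle - \#H$, which holds because $X'' = X' \setminus H = (X \setminus \triangle) \setminus H$ and $H \subseteq X'$.

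For $\#X$, I would parametrize points of $\Uq$ by the classical substitution: if $x_3 \in \Uq$ and we set $x_3 + x_3^{-1} = u \in \gf(q)$, then $x_1 + x_2 = -2 - x_3$ and $x_1 x_2 = x_1 x_2$. More directly, I would use the following standard device for Niho-type counting: for $(x_1,x_2) \in \Uq^2$ with $x_1 + x_2 = s$ fixed, $x_1 x_2 = x_1 \overline{x_1} \cdot (x_2/\overline{x_2})$-type relations let one reduce to solving a quadratic over $\gf(q)$. Concretely, writing $x_3 = z$ and $x_1 + x_2 = -2 - z$, the pair $(x_1,x_2)$ lies in $\Uq$ iff it satisfies $x^2 - (-2-z)x + x_1 x_2 = 0$ with $x_1 x_2 = (x_1+x_2)/\overline{(x_1+x_2)} \cdot(\dots)$; applying Lemma~\ref{TZ} one gets a count of $1 - \eta\!\big(\tfrac{(2+z)(2+\bar z)-4}{(2+z)(2+\bar z)}\big)$ for each admissible $z$, and then summing over $z \in \Uq$ and converting the sum over $\Uq$ to a sum over $\gf(q)$ via $z + z^{-1}$ (Lemma~\ref{map}) should, after simplification, produce a character sum in one variable over $\gf(q)$ of a quartic, which one massages into exactly $\sum_{x}\eta\big((x^2-4)(2x+1)(2x+5)\big) = \lambda_{p^m}$ plus the main term $q+2$. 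Getting the algebra of this reduction to land precisely on the polynomial $(x^2-4)(2x+1)(2x+5)$ is the step I expect to be the main obstacle — it is where the specific shape of the decimation $d=3(q-1)+1$ enters, and one has to track the degenerate cases (vanishing discriminants, $z = \pm 1$, $2 + z = 0$) carefully so that the boundary contributions assemble into the clean constant $+2$.

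For $\#\triangle$, I would note that on $X \cap \{x_i = x_j\}$ two coordinates coincide, say $x_1 = x_2 = t \in \Uq$ and $x_3 = -2 - 2t \in \Uq$; so I must count $t \in \Uq$ with $-2-2t \in \Uq$, i.e. $(2+2t)(2+2\bar t) = 1$, which is a single equation in $t$ that again reduces via $t + t^{-1}$ to solving a quadratic over $\gf(q)$ — this is where $\eta(-15)$ appears. The three choices of which pair coincides give the factor $3$, and one subtracts the over-counted fully-degenerate point $x_1=x_2=x_3$ (which forces $3t = -2$, possible only when $p \ne 3$, and then lies on $\Uq$ only under a condition on $\eta$ that in characteristic $5$ always holds, explaining the $-2\delta_{p,5}$). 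For $\#H$, on $X' \cap \{x_j = 1\}$ one coordinate is $1$ and the other two are distinct elements of $\Uq \setminus \{1\}$ summing to $-3$; counting pairs $(x_1,x_2) \in \Uq^2$ with $x_1 + x_2 = -3$ via Lemma~\ref{TZ}/\ref{map} gives $1 - \eta(9 - 4) = 1 - \eta(5)$ ordered solutions-worth after accounting for the diagonal $x_1 = x_2 = -3/2$ (which lies in $\Uq$ iff $9/4 = 1$, i.e. $p = 5$, giving the $-\delta_{p,5}$), and multiplying by $3$ for the choice of which coordinate is $1$ yields $3(1 - \eta(5) - \delta_{p,5})$; I should double-check the $\Uq$-membership of $1$ itself and that points with two coordinates equal to $1$ are correctly excluded since $H \subseteq X'$.

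Finally, part (4) is pure bookkeeping: substitute (1)–(3) into $\#X'' = \#X - \#\triangle - \#H$ to get
\[
\#X'' = (q + 2 + \lambda_{p^m}) - (3 - 3\eta(-15) - 2\delta_{p,5}) - 3(1 - \eta(5) - \delta_{p,5}) = q - 4 + 3\eta(-15) + 3\eta(5) + 5\delta_{p,5} + \lambda_{p^m},
\]
as claimed. Throughout I would keep $p \ge 5$ fixed (the cases $p = 2,3$ being already in \cite{DobF,XiaLi2}), and I would treat $p = 5$ and $p \ge 7$ uniformly using $\eta$ and $\delta_{p,5}$, invoking Theorem~\ref{pre:thm1} only at the very end to make $\lambda_{p^m}$ explicit. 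The genuinely delicate part is the degenerate-locus accounting in (1)–(3): every place a discriminant or a denominator vanishes must be matched against Lemmas~\ref{TZ} and \ref{map}, and in characteristic $5$ several of these coincide, which is the source of all the $\delta_{p,5}$ corrections.
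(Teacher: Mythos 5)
Your proposal is correct and follows essentially the same route as the paper: decompose $\#X''=\#X-\#\triangle-\#H$, reduce each count to quadratics over $\Uq$ via Lemma~\ref{TZ}, convert sums over $\Uq$ to sums over $\gf(q)$ through $z+z^{-1}$ and Lemma~\ref{map}, and recognize the resulting quartic character sum as $\lambda_{p^m}$; the algebraic step you flag as the main obstacle is in fact routine, since $(2+z)(2+\bar z)=2(z+z^{-1})+5$ immediately yields the factors $(2a+1)(2a+5)$ and $(a^2-4)$. Your parametrization of $\triangle_{i,j}$ by the repeated coordinate $t$ (giving $t+t^{-1}=-7/4$) differs cosmetically from the paper's parametrization by $x_3$ (giving $x_3+x_3^{-1}=-1/2$), but both evaluate to $1-\eta(-15)$, and your degenerate-case bookkeeping ($\delta_{p,5}$ terms, inclusion--exclusion) matches the paper's.
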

\begin{proof} (1) For any $x_1,x_2,x_3\in \Uq$ such that $x_1+x_2+x_3=2$, taking the $q$-th power on both sides of the equation, we obtain
	\[x_1^{-1}+x_2^{-1}+x_3^{-1}+2=0.\]
	So we have $x_1+x_2=-x_3-2$ and $x_1x_2=(x_1+x_2)/(x_1^{-1}+x_2^{-1})=\frac{x_3+2}{x_3^{-1}+2}$. We conclude that $x_1$ and $x_2$ are the two roots of the quadratic equation
	\begin{equation}\label{quadratic}
		t^2+(x_3+2)t+\frac{x_3+2}{x_3^{-1}+2}=0
	\end{equation}
	on the variable $t$. Since $x_3 \in \Uq$ and $p \ge 5$, by Lemma \ref{TZ}, the roots of the equation (\ref{quadratic}) are all in $\Uq$, and the number of roots is $1-\eta\left(\frac{(x_3+2)(x_3^{-1}+2)-4}{(x_3+2)(x_3^{-1}+2)}\right)=
	1-\eta\left(\frac{2(x_3+x_3^{-1})+1}{2(x_3+x_3^{-1})+5}\right)$. So we have
	\begin{align*}
		\#X&=\sum_{x_3\in \Uq}\left\{1-\eta\left(\frac{2(x_3+x_3^{-1})+1}{2(x_3+x_3^{-1})+5}\right)\right\}.
	\end{align*}
	The sum of the first term is $q+1$. As for the second term, taking $a=x_3+x_3^{-1}$, by Lemma \ref{map} we have
	\begin{align*}
		\#X&=q+1-\sum_{a\in\gf(q)}\left(1-\eta(a^2-4)\right) \cdot \eta\left((2a+1)(2a+5)\right)\\			 &=q+1-\sum_{a\in\gf(q)}\eta\left((2a+1)(2a+5)\right)+\sum_{a\in\gf(q)}\eta\left((a^2-4)(2a+1)(2a+5)\right).
	\end{align*}
	It is easy to see that the second character sum is $-1$ by Lemma \ref{charactersumquadratic}, and the third character sum is $\lambda_{p^m}$ given by (\ref{2:f1}), after substituting $2a$ by $a$. This proves the desired formula for $\#X$.
	
	(2)	We may define for $i \ne j$
	\[\triangle_{i,j}= X \cap \{x_i =x_j\},\]
	and thus \[\triangle=\triangle_{1,2} \cup \triangle_{1,3} \cup \triangle_{2,3}.\]
	For $\triangle_{1,2}$, following the argument in (1), we need to find the number of $x_3 \in \Uq$ such that Equation (\ref{quadratic}) has a double root (which were $x_1,x_2$) in $\Uq$, by Lemma \ref{map}, this requires $(x_3+2)(x_3^{-1}+2)=4$, that is, $x_3+x_3^{-1}=-\frac{1}{2}$, the number of such $x_3 \in \Uq$ by Lemma \ref{map} again is $1-\eta(-15)$. So $\triangle_{1,2}=1-\eta(-15)$. By symmetry we have
	\[\#\triangle_{i,j}=1-\eta(-15),\quad \forall i \ne j.\]
	It is trivial to see that \[\# \left(\triangle_{i,j} \cap \triangle_{i,k}\right)=\#\left\{x \in \Uq:3x+2=0\right\}=\delta_{p,5}, \quad i,j, k~\mbox{distinct}.\]
	The desired result for $\#\triangle$ is obtained by a simple application of the inclusion-exclusion principle.
	
	(3) The proof is also very simple, similar to the proof of (2), by using Lemma \ref{TZ} and Lemma \ref{map} and the inclusion-exclusion principle. We omit the details. Then (4) follows immediately as $\#X''=\#X-\#\triangle-\#H$.
\end{proof}

Since $N_4=\#X''/6$, by using Theorem \ref{pre:thm1} and Lemma \ref{number}, we can summarize the result of $N_4$ as follows. Note that we use Jacobi symbols $\left(\frac{\cdot}{\cdot}\right)$ here because for example $\left(\frac{5}{q}\right)=\eta(5)$.

\begin{theorem}\label{N4:con}When $p=5$,
	\[N_4=\frac{1}{6}\big(q-(-1)^m\big).\]
	When $p\geq 7$,
	\[N_4=\frac{1}{6}\left(q-4+3\left(\frac{5}{q}\right)+3\left(\frac{-15}{q}\right)+\lambda_{p^m}\right).\]
	Here the character sum $\lambda_{p^m}$ is given by Theorem \ref{pre:thm1}.
\end{theorem}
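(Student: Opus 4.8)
The plan is to obtain Theorem \ref{N4:con} by feeding the four identities of Lemma \ref{number} into the relation $N_4 = \#X''/6$ and then inserting the value of $\lambda_{p^m}$ supplied by Theorem \ref{pre:thm1}. Since part (4) of Lemma \ref{number} already gives $\#X'' = q - 4 + 3\eta(-15) + 3\eta(5) + 5\delta_{p,5} + \lambda_{p^m}$, the only real issue is that the quadratic-character terms $\eta(5)$ and $\eta(-15)$ behave differently for $p = 5$ than for $p \ge 7$, so I would split into these two cases.

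For $p \ge 7$ one has $\delta_{p,5} = 0$, and since neither $5$ nor $-15$ is divisible by $p$ we may write $\eta(5) = \left(\frac{5}{q}\right)$ and $\eta(-15) = \left(\frac{-15}{q}\right)$ as Jacobi symbols with $q = p^m$. Dividing Lemma \ref{number}(4) by $6$ then gives
\[ N_4 = \frac{1}{6}\left(q - 4 + 3\left(\frac{5}{q}\right) + 3\left(\frac{-15}{q}\right) + \lambda_{p^m}\right), \]
and nothing further is needed, since $\lambda_{p^m}$ is already expressed through the elliptic curve $E$ in Theorem \ref{pre:thm1}(ii).

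For $p = 5$ the degeneracies must be tracked carefully: here $5 \equiv 0$ and $-15 \equiv 0 \pmod 5$, so $\eta(5) = \eta(-15) = 0$ by the convention $\eta(0) = 0$, while $\delta_{5,5} = 1$; moreover the quartic $(x^2-4)(2x+1)(2x+5)$ is no longer squarefree over $\gf(5)$ (it equals $(x-2)^2 x(x+2)$), so $\lambda_{5^m}$ is not of elliptic-curve type and one must use the direct evaluation $\lambda_{5^m} = -1 - (-1)^m$ from Theorem \ref{pre:thm1}(i). Substituting into Lemma \ref{number}(4) yields $\#X'' = q - 4 + 0 + 0 + 5 + \big(-1 - (-1)^m\big) = q - (-1)^m$, hence $N_4 = \big(q - (-1)^m\big)/6$, matching the stated formula.

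I do not expect any serious obstacle in this final step: all the substantive effort has already been invested in Lemma \ref{number}, especially part (1), where parametrizing $x_1, x_2$ as roots of the quadratic (\ref{quadratic}) in $x_3 \in \Uq$ and applying Lemmas \ref{TZ} and \ref{map} converts $\#X$ into the degree-$4$ character sum $\lambda_{p^m}$, and in Theorem \ref{pre:thm1} itself. The one thing to be vigilant about is precisely the $p = 5$ specialization: one cannot simply set $\left(\frac{5}{q}\right) = \left(\frac{-15}{q}\right) = 0$ in the $p \ge 7$ formula, because the term $5\delta_{p,5}$ also switches on and $\lambda_{5^m}$ must be recomputed from scratch.
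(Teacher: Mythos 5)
Your proposal is correct and follows essentially the same route as the paper: the paper also obtains Theorem \ref{N4:con} by dividing Lemma \ref{number}(4) by $6$, rewriting $\eta(5)$ and $\eta(-15)$ as Jacobi symbols for $p\ge 7$, and invoking Theorem \ref{pre:thm1} for $\lambda_{p^m}$. Your explicit tracking of the $p=5$ degeneracies (where $\eta(5)=\eta(-15)=0$, $\delta_{p,5}=1$, and $\lambda_{5^m}=-1-(-1)^m$) is exactly the bookkeeping the paper leaves implicit, and your arithmetic checks out.
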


	\section{Melas codes, Zetterberg codes, and the ``pure weight'' distribution}\label{MZcodes}
As Lemma \ref{lem3:p} indicates, which was also hinted in \cite{XiaLi2}, the quantity $N_5$ is closely related to the number of ''pure weight'' 5 codewords in a $p$-ary Zetterberg code. This represents the most difficult part of the paper and is non-trivial even when $p=3$ which was successfully studied in \cite{XiaLi2}. The strategy we take is to transform this problem by MacWilliams identity via complete weight distribution to counting the number of codewords of weight at most 5 with various patterns in the corresponding Melas code. In this Section we will introduce basic information regarding Melas codes, Zetterberg codes, the complete weight distribution and how they are related to the quantity $N_5$. In the next Section, we will carry out the computation of $N_5$.

\subsection{Melas codes and Zetterberg codes}

Let $p$ be a prime and $q=p^m$ for some positive integer $m \ge 1$. Melas codes and Zetterberg codes can be defined over any finite field $\gf(q)$. However, for the case that is most interesting to us, we only consider such codes over the prime field $\gf(p)$.

Denote by $\tr_{q/p}: \gf(q) \to \gf(p)$ the standard trace function. The unit circle of $\gf(q^2)$ is defined by
\[\Uq=\{x \in \gf(q^2): x^{q+1}=x \bar{x}=1\}. \]
Here $\bar{x}:=x^q$ for any $x \in \gf(q^2)$. The $p$-ary Zetterberg code $Z(q)$ is a linear code of length $q+1$ over $\gf(p)$ defined as
\begin{eqnarray*} \label{2:zet}
	Z(q):=\left\{(c_x)_{x \in \Uq} \in \gf(p)^{q+1}: \sum_{x \in \Uq} c_x \cdot x=0\right\}.
\end{eqnarray*}
By Delsarte's theorem \cite{MacW}, the dual of the Zetterberg code is
\begin{eqnarray*} \label{2:zqp}
	Z(q)^{\bot}:=\left\{c(a)=\left(\tr_{q^2/p}(ax)\right)_{x \in \Uq}: a \in \gf(q^2)\right\}.
\end{eqnarray*}
The $p$-ary Melas code $M(q)$ is a linear code of length $q-1$ over $\gf(p)$ defined as
\begin{eqnarray*} \label{2:mel}
	M(q):=\left\{(c_x)_{x \in \gf(q)^*} \in \gf(p)^{q-1}: \begin{array}{lll}
		\sum_{x \in \gf(q)^*} c_x \cdot x&=&0\\
		\sum_{x \in \gf(q)^*}  c_x \cdot x^{-1}&=&0 \end{array}\right\}.
\end{eqnarray*}
By Delsarte's theorem \cite{MacW} again, the dual of the Melas code $M(q)$ is
\begin{eqnarray*} \label{2:mep}
	M(q)^{\bot}:=\left\{d(a,b)=\left(\tr_{q/p}(ax+bx^{-1})\right)_{x \in \gf(q)^*}: a,b \in \gf(q)\right\}.
\end{eqnarray*}
For simplicity by ordering the elements of $\gf(p)$ as $u_i=i$ for $0 \le i \le p-1$, we can define the complete weight distribution of linear codes $Z(q)^\bot$ and $M(q)^\bot$ as describe in Section \ref{pre}. It turns out that they are closely related with each other:

\begin{lemma} \label{2:lem} For $a \in \gf(q^2)^*$, let $b:=a \bar{a}$, then
	\begin{eqnarray} \label{2:comp-rel}
		\Comp(c(a))=\frac{2q}{p} \cdot \mathbf{1}-\Comp(d(b,1)).
	\end{eqnarray}
	Here $\mathbf{1}=(1,\ldots,1)$ is the vector with all entries being 1.
\end{lemma}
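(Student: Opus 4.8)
The plan is to compare the complete weight of the codeword $c(a) = (\tr_{q^2/p}(ax))_{x \in \Uq}$ in $Z(q)^\bot$ with the complete weight of the codeword $d(b,1) = (\tr_{q/p}(bx+x^{-1}))_{x \in \gf(q)^*}$ in $M(q)^\bot$, where $b = a\bar a \in \gf(q)^*$. The natural bridge between the two is the norm map $\gf(q^2)^* \to \gf(q)^*$ together with the behaviour of the trace on the two multiplicative structures. First I would fix $u_i = i \in \gf(p)$ for $0 \le i \le p-1$ as the ordering of $\gf(p)$, so that for any vector $v$, the $i$-th entry $w_i(\Comp(v))$ counts how many coordinates of $v$ equal $i$. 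The claim (\ref{2:comp-rel}) then amounts to the statement that for every $i \in \gf(p)$,
\[
\#\{x \in \Uq : \tr_{q^2/p}(ax) = i\} \;+\; \#\{x \in \gf(q)^* : \tr_{q/p}(bx + x^{-1}) = i\} \;=\; \frac{2q}{p}.
\]

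The key step is to express $\tr_{q^2/p}(ax)$ for $x \in \Uq$ in terms of a trace over $\gf(q)$. Writing $\tr_{q^2/p} = \tr_{q/p} \circ \tr_{q^2/q}$ and using $\bar x = x^{-1}$ on $\Uq$, we get $\tr_{q^2/q}(ax) = ax + \bar a \bar x = ax + \bar a x^{-1}$, so $\tr_{q^2/p}(ax) = \tr_{q/p}(ax + \bar a x^{-1})$; note $(ax)(\bar a x^{-1}) = a\bar a = b$. So as $x$ ranges over $\Uq$, the element $y := ax$ ranges over the set $S := a \cdot \Uq = \{y \in \gf(q^2)^* : y \bar y = a \bar a = b\}$, i.e. the norm-$b$ fiber in $\gf(q^2)^*$, and we are counting $y \in S$ with $\tr_{q/p}(y + b y^{-1}) = i$. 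Meanwhile the Melas side counts $x \in \gf(q)^*$ with $\tr_{q/p}(x + b x^{-1}) = i$ (after the substitution $bx \mapsto x$, or directly — the two forms $bx + x^{-1}$ and $x + bx^{-1}$ are exchanged by $x \mapsto b/x$, a bijection of $\gf(q)^*$). The unifying observation is that for fixed $b \in \gf(q)^*$, the function $\phi(z) = z + b z^{-1}$ maps both $\gf(q)^*$ and the norm-$b$ fiber $S$ into $\gf(q)$, and for each target value $t \in \gf(q)$ the equation $z^2 - tz + b = 0$ has exactly two roots counted with multiplicity in $\bar{\gf(q)}$; these two roots lie in $\gf(q)$ precisely when the discriminant $t^2 - 4b$ is a square in $\gf(q)$, and they lie in $S$ (are conjugate of norm $b$) precisely when $t^2 - 4b$ is a non-square — the degenerate case $t^2 = 4b$ contributing a single root to $\gf(q)$ and none to $S$ when $b$ is a square, or vice versa. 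Hence for each $t \in \gf(q)$,
\[
\#\{x \in \gf(q)^* : \phi(x) = t\} + \#\{y \in S : \phi(y) = t\} = 2,
\]
as one checks by splitting into the cases $\eta(t^2-4b) = 1, -1, 0$. Summing this identity over all $t \in \gf(q)$ with $\tr_{q/p}(t) = i$, of which there are exactly $q/p$, yields the displayed identity, hence (\ref{2:comp-rel}).

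The main obstacle — really the only subtle point — is bookkeeping the degenerate fibers $t^2 = 4b$ correctly so that the "sum equals $2$" identity holds uniformly for every $t$, including $t = 0$ (where $\phi(z) = t$ forces $z^2 = -b$): one must verify that exactly one of $\gf(q)^*$ and $S$ contributes the relevant roots in each such case and that no coordinate of either codeword is lost by the passage $z \leftrightarrow$ roots of $z^2 - tz + b$ being two-to-one generically. Once the pointwise identity $\#\{x\in\gf(q)^*:\phi(x)=t\} + \#\{y\in S:\phi(y)=t\} = 2$ is established for all $t \in \gf(q)$, everything else is a clean summation over the fiber $\tr_{q/p}^{-1}(i)$ of size $q/p$, giving the factor $\frac{2q}{p}$ on the right, and the proof is complete.
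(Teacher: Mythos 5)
Your overall strategy is the right one, and it is essentially the paper's: the paper proves this lemma by citing Section 3 of van der Geer--van der Vlugt, where exactly this fiber comparison for the maps $x\mapsto x+bx^{-1}$ on $\gf(q)^*$ and $u\mapsto au+b(au)^{-1}$ on $\Uq$ is carried out, and noting that the argument for the trace value $0$ works verbatim for every trace value $i$. Your proposal reconstructs that computation from scratch, which is a legitimate, self-contained substitute. The reduction to the pointwise identity
$\#\{x\in\gf(q)^*:\phi(x)=t\}+\#\{y\in S:\phi(y)=t\}=2$ for every $t\in\gf(q)$, where $\phi(z)=z+bz^{-1}$ and $S$ is the norm-$b$ fiber, followed by summation over the $q/p$ elements $t$ with $\tr_{q/p}(t)=i$, is exactly the right skeleton. (One small slip on the Melas side: the bijection exchanging $bx+x^{-1}$ and $x+bx^{-1}$ is $x\mapsto x^{-1}$, not $x\mapsto b/x$; your alternative substitution $bx\mapsto x$ is the correct one.)

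The substantive error is in the degenerate case, which you yourself flag as the only subtle point but then resolve incorrectly. When $t^2=4b$ (possible only if $b$ is a square in $\gf(q)$, with $t=\pm2\sqrt{b}$ and $p$ odd), the polynomial $z^2-tz+b=(z-t/2)^2$ has the single root $z_0=t/2\in\gf(q)^*$, and $z_0^2=t^2/4=b$, hence $z_0^{q+1}=z_0^2=b$: this root lies in \emph{both} $\gf(q)^*$ and $S$, so it contributes $1+1=2$ to the sum. Your description, ``a single root to $\gf(q)$ and none to $S$\ldots or vice versa,'' would give a contribution of $1$ and break the identity. A global sanity check forces the correct answer: summing the left-hand side over all $t\in\gf(q)$ gives $|\gf(q)^*|+|S|=(q-1)+(q+1)=2q$, so with $q$ values of $t$ each fiber must contribute exactly $2$. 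Your treatment of the two non-degenerate cases is fine once you add the observation that a root $z_1\in\gf(q)^*$ of $z^2-tz+b$ lies in $S$ iff $z_1^2=b$ iff $z_1=bz_1^{-1}=z_2$, which is excluded when the roots are distinct. With the degenerate case corrected, the proof is complete.
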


\begin{proof}
	For $0 \le i \le p-1$, let
	\begin{eqnarray*} t_i&=&\#\left\{u \in \Uq: \tr_{q^2/p}(au)=u_i\right\}\\
		&=&\#\left\{u \in \Uq: \tr_{q/p}(au+b (au)^{-1})=u_i\right\},\end{eqnarray*}
	\begin{eqnarray*} s_i&=&\#\left\{x \in \gf(q)^*: \tr_{q/p}(bx+x^{-1})=u_i\right\}\\
		&=&\#\left\{x \in \gf(q)^*: \tr_{q/p}(x+bx^{-1})=u_i\right\}. \end{eqnarray*}
	It was proved in Section 3 of \cite{vander2} that $t_0+s_0=2p^{m-1}$, that is, for $u_0=0$ (see Equation (*) on page 265 and proofs on page 266 in \cite{vander2}). The proofs depend on properties of the maps $\alpha(x)=x+bx^{-1}$ for $x \in \gf(q)^*$ and $\beta(u)=au+b(au)^{-1}$ for $u \in \Uq$. It is easy to note that the proofs work for any $i$, that is, $t_i+s_i=2p^{m-1}$ for any $i$. This yields the desired relation (\ref{2:comp-rel}).
\end{proof}

\begin{lemma} \label{2:lem2}
	Let $W_{Z(q)^\bot}(\bz)$ and $W_{M(q)^\bot}(\bz)$ be the complete weight enumerators of $Z(q)^\bot$ and $M(q)^\bot$ respectively. Then we have
	\begin{eqnarray} \label{2:zmqC}
		W_{Z(q)^\bot}(\bz)=z_0^{q+1}-2(q+1)z_0 \bz^{\frac{q}{p} \cdot \bone} +\frac{q+1}{q-1}\bz^{\frac{2q}{p} \cdot \bone} \left(W_{M(q)^\bot}(\bz^{-1})-z_0^{1-q}\right).
	\end{eqnarray}
	
\end{lemma}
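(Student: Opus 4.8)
The plan is to expand both complete weight enumerators over their defining parameter sets, use Lemma~\ref{2:lem} to rewrite the $Z(q)^{\bot}$ side as a sum over the codewords $d(b,1)$ of $M(q)^{\bot}$, and then recognize the resulting partial sum as a piece of $W_{M(q)^{\bot}}$. Concretely, I would introduce the auxiliary polynomial $S(\bz):=\sum_{b\in\gf(q)^{*}}\bz^{\Comp(d(b,1))}$, first express $S$ in terms of $W_{M(q)^{\bot}}$, then express $W_{Z(q)^{\bot}}$ in terms of $S(\bz^{-1})$, and finally combine.

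For the first step, write $W_{M(q)^{\bot}}(\bz)=\sum_{(a,b)\in\gf(q)^{2}}\bz^{\Comp(d(a,b))}$ and split according to whether $a$ and $b$ vanish. The term $(a,b)=(0,0)$ is the all-zero codeword, contributing $z_0^{q-1}$. For $a=0,\ b\neq 0$ the map $x\mapsto bx^{-1}$ permutes $\gf(q)^{*}$, so $d(0,b)$ takes each value $\tr_{q/p}(y)$ as $y$ runs over $\gf(q)^{*}$; counting preimages of the trace gives $\Comp(d(0,b))=\frac{q}{p}\bone-(1,0,\ldots,0)$, hence each such term equals $z_0^{-1}\bz^{\frac{q}{p}\bone}$, and symmetrically for $a\neq 0,\ b=0$; together these $2(q-1)$ terms contribute $2(q-1)z_0^{-1}\bz^{\frac{q}{p}\bone}$. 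For $a,b$ both nonzero, the substitution $x\mapsto bx$ gives $d(a,b)_{bx}=\tr_{q/p}(abx+x^{-1})=d(ab,1)_x$, so $\Comp(d(a,b))=\Comp(d(ab,1))$; grouping the pairs by the product $c=ab\in\gf(q)^{*}$ (each value attained exactly $q-1$ times) this part equals $(q-1)S(\bz)$. Thus $W_{M(q)^{\bot}}(\bz)=z_0^{q-1}+2(q-1)z_0^{-1}\bz^{\frac{q}{p}\bone}+(q-1)S(\bz)$, which I solve for $S$.

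For the second step, in $W_{Z(q)^{\bot}}(\bz)=\sum_{a\in\gf(q^{2})}\bz^{\Comp(c(a))}$ the term $a=0$ contributes $z_0^{q+1}$. For $a\neq 0$ put $b=a\bar{a}$; Lemma~\ref{2:lem} gives $\Comp(c(a))=\frac{2q}{p}\bone-\Comp(d(b,1))$, so $\bz^{\Comp(c(a))}=\bz^{\frac{2q}{p}\bone}(\bz^{-1})^{\Comp(d(b,1))}$. The norm map $a\mapsto a\bar{a}=a^{q+1}$ is a surjective homomorphism $\gf(q^{2})^{*}\to\gf(q)^{*}$ with kernel $\Uq$, hence $(q+1)$-to-$1$, so summing over $a\in\gf(q^{2})^{*}$ yields $(q+1)\bz^{\frac{2q}{p}\bone}S(\bz^{-1})$, that is, $W_{Z(q)^{\bot}}(\bz)=z_0^{q+1}+(q+1)\bz^{\frac{2q}{p}\bone}S(\bz^{-1})$. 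Finally I substitute $\bz\mapsto\bz^{-1}$ in the formula for $S$ from the first step (so $z_0^{q-1}\mapsto z_0^{1-q}$ and $z_0^{-1}\bz^{\frac{q}{p}\bone}\mapsto z_0\bz^{-\frac{q}{p}\bone}$), plug in, and use $\bz^{\frac{2q}{p}\bone}\cdot z_0\bz^{-\frac{q}{p}\bone}=z_0\bz^{\frac{q}{p}\bone}$; identity~(\ref{2:zmqC}) then drops out.

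The argument is essentially bookkeeping, so I do not anticipate a genuine obstacle; the two points that need care are (i) the degenerate codewords---$c(0)$ and the $d(a,b)$ with $ab=0$---which are exactly what produce the stray monomials $z_0^{q+1}$, $z_0^{1-q}$ and $z_0\bz^{\frac{q}{p}\bone}$ in~(\ref{2:zmqC}), and (ii) correctly tracking the inversion $\bz\mapsto\bz^{-1}$, since $W_{M(q)^{\bot}}$ enters evaluated at $\bz^{-1}$. The fiber sizes $q+1$ for the norm map and $q-1$ for the product map are precisely what create the coefficient $\frac{q+1}{q-1}$.
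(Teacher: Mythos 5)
Your proposal is correct and follows essentially the same route as the paper: both decompose $W_{M(q)^\bot}$ into the contributions of $d(0,0)$, the $2(q-1)$ codewords $d(a,0),d(0,a)$, and the $(q-1)$-fold counted sum over $d(c,1)$, then use Lemma~\ref{2:lem} together with the $(q+1)$-to-$1$ norm map (phrased in the paper as the $\Uq$-action permuting coordinates) to rewrite $W_{Z(q)^\bot}$ and combine. The only cosmetic difference is that the paper indexes the nondegenerate sums by powers of generators $\alpha^j$, $\beta^j$ rather than by your auxiliary polynomial $S(\bz)$.
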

\begin{proof} Let $\alpha$ be a generator of $\gf(q^2)^*$. Then $\beta=\alpha \bar{\alpha}$ is a generator of $\gf(q)^*$.
	
	First for the complete weight distribution of $Z(q)^\bot$, it suffices to consider the complete weight $\Comp(c(\alpha^j))$ of the words
	\[c(\alpha^j)=\left(\tr_{q^2/p}(\alpha^j x)\right)_{x \in \Uq} \qquad \mbox{ for } 0 \le j \le q-2. \]
	We have to count them with multiplicity $q+1$ since multiplying $\alpha^j$ by an element of $\Uq$ permutes the coordinates of a code word. Thus we have
	\begin{eqnarray} \label{2:zqC}
		W_{Z(q)^\bot}(\bz)=z_0^{q+1}+(q+1) \sum_{j=0}^{q-2} \bz^{\Comp(c(\alpha^j))}.
	\end{eqnarray}
	Next for the complete weight distribution of $M(q)^\bot$, noting that for $a,b \in \gf(q)$, a codeword of $d(a,b)$ is given by
	\[d(a,b)=\left(ax+bx^{-1}\right)_{x \in \gf(q)^*}. \]
	If $a \ne 0$, it is easy to see that
	\[\Comp(d(a,0))=\Comp(d(0,a))=\Comp(d(1,0))=\left(\frac{q}{p}-1,\frac{q}{p}, \cdots, \frac{q}{p}\right). \]
	This term corresponds to $z_0^{-1} \bz^{\frac{q}{p} \cdot \bone}$ and shall be counted $2(q-1)$ times in $W_{M(q)^\bot}(\bz)$.
	Next, it is easy to see that if $ab \ne 0$ then
	\[\Comp(d(a,b))=\Comp(d(ab,1)). \]
	Since $\beta$ is a generator of $\gf(q)^*$, each term $\Comp(d(\beta^j,1))$ shall be counted with multiplicity $q-1$. Thus we have
	\begin{eqnarray} \label{2:mqC}
		W_{M(q)^\bot}(\bz)=z_0^{q-1}+2(q-1)z_0^{-1} \bz^{\frac{q}{p} \cdot \bone}+(q-1) \sum_{j=0}^{q-2} \bz^{\Comp(d(\beta^j,1))}.
	\end{eqnarray}
	Noting from (\ref{2:comp-rel}) that
	\[\bz^{\Comp(c(\alpha^j)}=\bz^{\frac{2q}{p}\cdot \bone} \, \bz^{-\Comp(d(\beta^j,1))}, \]
	and combining (\ref{2:zqC}) and (\ref{2:mqC}), we obtain the desired identity (\ref{2:zmqC}).
\end{proof}

Now by Lemma \ref{2:lem2} and by applying the MacWilliams identity (\ref{2:MacW}) twice, noting that $Z(q)^\bot$ is a $[q+1,2m]_p$ code and $M(q)^\bot$ is a $[q-1,2m]_p$ code, we obtain the relation between the complete enumerators of $Z(q)$ and $M(q)$ as follows:

\begin{theorem} \label{2:thm2} Let $W_{Z(q)}(\bz)$ and $W_{M(q)}(\bz)$ be the complete weight enumerators of $Z(q)$ and $M(q)$ respectively. Then we have
	\begin{eqnarray} \label{2:ZMC}
		q^2W_{Z(q)}(\bz)=x_0^{q+1}-2(q+1)x_0 \bx^{\frac{q}{p} \cdot \bone} +\frac{q+1}{q-1}\bx^{\frac{2q}{p} \cdot \bone} \left(\frac{1}{p^{q-1-2m}}W_{M(q)}(\by)-x_0^{1-q}\right).
	\end{eqnarray}
	Here for the multi-variables $\bz=(z_0,\ldots,z_{p-1}), \bx=(x_0,\ldots,x_{p-1}), \by=(y_0,\ldots,y_{p-1})$, the relations are
	\begin{eqnarray} \label{thm2:xyz}
		x_i=\sum_{s=0}^{p-1} \psi(u_iu_s) z_s, \quad y_i=\sum_{s=0}^{p-1} \psi(u_iu_s) x_s^{-1} \quad \mbox{ for any }i=0, 1, \cdots, p-1,
	\end{eqnarray}
	and $\psi: \gf(p) \to \mathbb{C}^\times$ is the standard additive character given by $x \mapsto \omega_p^x$ where $\omega_p=\exp(2 \pi \sqrt{-1}/p)$.
\end{theorem}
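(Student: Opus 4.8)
The plan is to deduce (\ref{2:ZMC}) from identity (\ref{2:zmqC}) of Lemma \ref{2:lem2} by applying the MacWilliams identity (\ref{2:MacW}) twice over the prime field $\gf(p)$ (so that the ``$q$'' appearing in (\ref{2:MacW}) is to be read as $p$): once on the Zetterberg side to turn $W_{Z(q)^\bot}$ into $W_{Z(q)}$, and once on the Melas side to turn $W_{M(q)^\bot}$ into $W_{M(q)}$.

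First I would apply (\ref{2:MacW}) to the code $\C=Z(q)^\bot$, whose dual is $Z(q)$. Since $Z(q)^\bot$ is a $[q+1,2m]_p$ code, this reads
\begin{equation*}
	W_{Z(q)}(\bz)=\frac{1}{p^{2m}}\,W_{Z(q)^\bot}(\bx),\qquad x_i=\sum_{s=0}^{p-1}\psi(u_iu_s)z_s\quad(0\le i\le p-1),
\end{equation*}
and because $p^{2m}=q^2$ this is the same as $q^2W_{Z(q)}(\bz)=W_{Z(q)^\bot}(\bx)$. Substituting $\bx$ for $\bz$ in the right-hand side of (\ref{2:zmqC}) then gives
\begin{equation*}
	q^2W_{Z(q)}(\bz)=x_0^{q+1}-2(q+1)x_0\,\bx^{\frac{q}{p}\cdot\bone}+\frac{q+1}{q-1}\,\bx^{\frac{2q}{p}\cdot\bone}\Bigl(W_{M(q)^\bot}(\bx^{-1})-x_0^{1-q}\Bigr),
\end{equation*}
where $\bx^{-1}=(x_0^{-1},\dots,x_{p-1}^{-1})$.

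Next I would apply (\ref{2:MacW}) to the code $\C=M(q)$, whose dual is $M(q)^\bot$; since $M(q)$ is a $[q-1,q-1-2m]_p$ code this gives, at an arbitrary point $(w_0,\dots,w_{p-1})$,
\begin{equation*}
	W_{M(q)^\bot}(w_0,\dots,w_{p-1})=\frac{1}{p^{q-1-2m}}\,W_{M(q)}(\hat w_0,\dots,\hat w_{p-1}),\qquad \hat w_i=\sum_{s=0}^{p-1}\psi(u_iu_s)w_s.
\end{equation*}
Taking $w_s=x_s^{-1}$, the transformed variables become precisely the $y_i=\sum_{s}\psi(u_iu_s)x_s^{-1}$ of (\ref{thm2:xyz}), so $W_{M(q)^\bot}(\bx^{-1})=p^{-(q-1-2m)}W_{M(q)}(\by)$. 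Substituting this into the display above yields exactly (\ref{2:ZMC}).

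I do not expect an essential obstacle here; the proof is a formal manipulation, and the only points that need care are bookkeeping ones. The right-hand side of (\ref{2:zmqC}) is written in Laurent form (it contains $z_0^{1-q}$ and the reciprocal variables inside $W_{M(q)^\bot}(\bz^{-1})$), so all the substitutions above should be read as identities of rational functions in $z_0,\dots,z_{p-1}$; since both sides of (\ref{2:zmqC}) are in fact honest polynomials, substituting the linear forms $x_i$ is legitimate, and the same applies to (\ref{2:ZMC}). One must also track the two normalizing powers correctly --- $p^{2m}=q^2$ coming from $\dim Z(q)^\bot=2m$, and $p^{q-1-2m}$ coming from $\dim M(q)=q-1-2m$ --- and be careful to compose the two MacWilliams transforms in the right order, so that the variables feeding $W_{M(q)}$ are $y_i=\sum_s\psi(u_iu_s)x_s^{-1}$, i.e. the $x$'s must first be inverted and only then the second transform applied, rather than a transform being taken directly of the $z_s$.
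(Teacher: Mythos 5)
Your proposal is correct and follows exactly the route the paper takes: the paper's proof of Theorem \ref{2:thm2} consists precisely of applying the MacWilliams identity (\ref{2:MacW}) twice to the identity of Lemma \ref{2:lem2}, once with $\C=Z(q)^\bot$ (a $[q+1,2m]_p$ code, giving the factor $q^2=p^{2m}$) and once with $\C=M(q)$ (giving the factor $p^{q-1-2m}$ and the variables $y_i=\sum_s\psi(u_iu_s)x_s^{-1}$). Your bookkeeping of the dimensions, the order of composition, and the Laurent-polynomial caveat all match what the paper leaves implicit.
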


\subsection{The pure weight distribution of Zetterberg codes}

Recall that we have ordered elements of $\gf(p)$ as $u_i=i$ for $0 \le i \le p-1$. Denote by $W(z)$ the function in variable $z$ obtained by setting $z_0=1, z_1=\ldots=z_{p-2}=0, z_{p-1}=-z$ in the complete weight enumerator $W_{Z(q)}(\bz)$,
that is,
\begin{eqnarray} \label{2:pw} W(z)=W_{Z(q)}(1,0,\ldots,0,-z)=\sum_{t=0}^{q+1} (-1)^t B_t z^t.\end{eqnarray}
We may call $W(z)$ as a ``pure weight enumerator'' of $Z(q)$ because
\begin{eqnarray} \label{b:bi} B_i=\#\left\{\bc \in Z(q): \Comp(\bc)=(q+1-i,0,\ldots,0,i)\right\}, \end{eqnarray}
that is, $B_i$ counts the number of ``pure weight'' $i$ codewords of $Z(q)$ (with respect to the symbol $p-1$). Then we have
\begin{theorem} \label{2:thm3}
	\begin{eqnarray*} \label{thm3:weight}
		q^2 W(z)&=&(1-z)^{q+1}-2(q+1)(1-z)\left(1-z^p\right)^{\frac{q}{p}}-\frac{q+1}{q-1}\left(1-z^p\right)^{\frac{2q}{p}}(1-z)^{1-q} \\
		&&+\, \frac{q^2(q+1)}{q-1} \left(1-z^p\right)^{\frac{2q}{p}-q+1}W_{M(q)}(\v),
	\end{eqnarray*}
	where $W_{M(q)}(\v)$ is the complete weight enumerator of $M(q)$ and $\v=(v_0,\ldots,v_{p-1})$ is given by
	\begin{eqnarray*} \label{2:thm3-y} v_i=z^i, \quad 0 \le i \le p-1. \end{eqnarray*}
\end{theorem}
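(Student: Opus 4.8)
The plan is to derive Theorem \ref{2:thm3} from Theorem \ref{2:thm2} by a careful specialization of the multivariable identity (\ref{2:ZMC}) to the single variable $z$ via the substitution $z_0 = 1$, $z_1 = \cdots = z_{p-2} = 0$, $z_{p-1} = -z$, which by definition (\ref{2:pw}) produces the pure weight enumerator $W(z)$ on the left-hand side (after multiplying by $q^2$). The main task is therefore to track what this substitution does to each of the quantities $x_i$ and $y_i$ appearing in (\ref{2:ZMC}) and (\ref{thm2:xyz}), and then to simplify the four terms on the right-hand side of (\ref{2:ZMC}) one by one.

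First I would compute the $x_i$. With $z_s = 0$ for $1 \le s \le p-2$, $z_0 = 1$, $z_{p-1} = -z$, we get $x_i = \psi(0 \cdot u_i) \cdot 1 + \psi(u_i u_{p-1})(-z) = 1 - \omega_p^{-i} z$, using $u_i = i$, $u_{p-1} = p-1 \equiv -1$, and $\psi(x) = \omega_p^x$. Thus $x_0 = 1 - z$, and as $i$ runs over $0, 1, \ldots, p-1$ the values $\omega_p^{-i}$ run over all $p$-th roots of unity, so $\prod_{i=0}^{p-1} x_i = \prod_{\zeta^p = 1}(1 - \zeta z) = 1 - z^p$. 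This immediately handles the monomial factors: $x_0^{q+1} = (1-z)^{q+1}$; $x_0 \bx^{\frac{q}{p}\cdot\bone} = (1-z)\bigl(\prod_i x_i\bigr)^{q/p} = (1-z)(1-z^p)^{q/p}$; and $\bx^{\frac{2q}{p}\cdot\bone} = (1-z^p)^{2q/p}$, together with the factor $x_0^{1-q} = (1-z)^{1-q}$. Plugging these into the first three terms of (\ref{2:ZMC}) (the $q^2$ in front of $W_{Z(q)}$ being absorbed into $q^2 W(z)$ on the left) gives the first line of the claimed formula plus the $-\frac{q+1}{q-1}(1-z^p)^{2q/p}(1-z)^{1-q}$ term.

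Next I would handle the term $\frac{q+1}{q-1}\bx^{\frac{2q}{p}\cdot\bone}\cdot\frac{1}{p^{q-1-2m}}W_{M(q)}(\by)$. We already have $\bx^{\frac{2q}{p}\cdot\bone} = (1-z^p)^{2q/p}$; it remains to identify $\by$. By (\ref{thm2:xyz}), $y_i = \sum_{s=0}^{p-1}\psi(u_i u_s) x_s^{-1}$, so the $y_i$ are, up to the $p^{-1}$ normalization, the inverse finite Fourier transform of the vector $(x_s^{-1})$. Here the key computational point is to evaluate these sums with $x_s = 1 - \omega_p^{-s} z$. I expect the cleanest route is to recognize that $\frac{1}{p}\sum_s \psi(u_i u_s) x_s^{-1}$ is (a coefficient extraction / partial-fraction identity for) $\frac{z^i}{1-z^p}$ — indeed $\frac{1}{1-z^p}\sum_{i} z^i x_i' = \frac{1}{p}\sum_s \frac{1}{1-\omega_p^{-s}z}$-type manipulations show that after clearing the common factor $(1-z^p)^{-1}$ one gets exactly $v_i = z^i$. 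Concretely, one should show $y_i = \frac{p}{1-z^p} \cdot z^i$ (the factor of $p$ combining with $p^{-(q-1-2m)}$ and the $q^2 = p^{2m}$ to leave the stated $(1-z^p)^{2q/p - q + 1}$ power and the clean argument $v_i = z^i$), so that $W_{M(q)}(\by) = W_{M(q)}\bigl(\tfrac{p}{1-z^p}\v\bigr)$. Since $M(q)$ has length $q-1$, each monomial of $W_{M(q)}$ has total degree $q-1$, hence $W_{M(q)}\bigl(\tfrac{p}{1-z^p}\v\bigr) = \bigl(\tfrac{p}{1-z^p}\bigr)^{q-1} W_{M(q)}(\v)$; collecting the powers of $p$ and of $(1-z^p)$ then yields the last term $\frac{q^2(q+1)}{q-1}(1-z^p)^{2q/p - q + 1} W_{M(q)}(\v)$.

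The main obstacle is the middle step: correctly evaluating the inverse Fourier transform $y_i = \sum_s \psi(u_i u_s)(1 - \omega_p^{-s}z)^{-1}$ and bookkeeping all the scalar factors ($p$'s from the Fourier normalization, $p^{-(q-1-2m)}$, $q^2 = p^{2m}$, and the homogeneity degree $q-1$ of $W_{M(q)}$) so that they combine into precisely the exponents $\frac{2q}{p} - q + 1$ on $(1-z^p)$ and the constant $\frac{q^2(q+1)}{q-1}$. Once the identity $y_i = \frac{p}{1-z^p}z^i$ (equivalently, that the normalized transform of $(x_s^{-1})$ is the geometric-progression vector up to the scalar $\frac{1}{1-z^p}$) is established, the rest is routine algebra and the theorem follows.
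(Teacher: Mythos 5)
Your proposal is correct and follows essentially the same route as the paper's proof: specialize Theorem \ref{2:thm2} at $z_0=1$, $z_1=\cdots=z_{p-2}=0$, $z_{p-1}=-z$, observe $x_i=1-\omega_p^{-i}z$ so that $x_0=1-z$ and $\prod_i x_i=1-z^p$, establish the key identity $y_i=\frac{p}{1-z^p}z^i$ by expanding each $x_s^{-1}$ as a geometric series and using orthogonality of $\psi$, and then invoke the degree-$(q-1)$ homogeneity of $W_{M(q)}$ to collect the powers of $p$ and $(1-z^p)$. The paper carries out exactly this computation, including the same scalar bookkeeping you describe.
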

\begin{proof}
	We simply use $u_i=i$ ($0 \le i \le p-1$), $z_0=1, z_1=\cdots=z_{p-2}=0, z_{p-1}=-z$ and simplify the right hand side of (\ref{2:ZMC}) in Theorem \ref{2:thm2}. First, from (\ref{thm2:xyz}) we have
	\[x_i=1-\psi(-i)z, \quad y_i=\sum_{s=0}^{p-1} \frac{\psi(is)}{1-\psi(-s)z} \quad \mbox{ for } 0 \le i \le p-1. \]
	We may expand $y_i$ as
	\begin{eqnarray*}
		y_i= \sum_{s=0}^{p-1}\psi(is) \sum_{t \ge 0} \left(\psi(-s)z\right)^t=\sum_{t \ge 0}z^t\sum_{s=0}^{p-1}\psi(s(i-t)).
	\end{eqnarray*}
	Since the subscript $s$ runs over $\gf(p)$ and $\psi: \gf(p) \to \mathbb{C}^\times$ is the standard additive character, the inner sum is $0$ unless $t \equiv i \pmod{p}$, thus we obtain
	\begin{eqnarray*}
		y_i= \sum_{\substack{t \ge 0\\
				t \equiv i \pmod{p}}}z^t \cdot p=p\cdot z^i \sum_{k \ge 0}z^{kp}=\frac{pv_i}{1-z^p}, \quad 0 \le i \le p-1.
	\end{eqnarray*}
	Using the above relation between $y_i$ and $v_i$, we find
	\[W_{M(q)}(\by)=\left(\frac{p}{1-z^p}\right)^{q-1} W_{M(q)}(\v). \]
	Finally, using $x_0=1-z$ and
	\begin{eqnarray*}
		\bx^{\bone}&=&\prod_{i=0}^{p-1}x_i=\prod_{i=0}^{p-1}\left(1-\psi(-i)z\right)\\
		&=& \prod_{i=0}^{p-1}\left(1-\zeta_p^{-i}z\right)=1-z^p,
	\end{eqnarray*}
	we can simplify the right hand side of Equation (\ref{2:ZMC}) to yield the desired result in Theorem \ref{2:thm3}.
\end{proof}

\section{Computation of $N_5$}\label{sec:N5}
Now we resume our computation of $N_5$. We have observed from Lemma \ref{lem3:p} that
\begin{eqnarray*} \label{n5:b} N_5=B_5/(q+1),
\end{eqnarray*}
where $B_5=\#\fA_5$ is the number of codewords of ``pure weight'' five of $Z(q)$. Noting from (\ref{2:pw}) that $-B_5$ is the coefficient of $z^5$ in the pure weight enumerator $W(z)$ of the Zetterberg code $Z(q)$, which in turn is related to the complete weight enumerator $W_{M(q)}(\bu)$ of the Melas codes $M(q)$ by Theorem \ref{2:thm3}.

Let us use some notation. Denote by $W_{M(q)}(\bz)$ be the complete weight enumerator of $M(q)$ given by
\begin{eqnarray*} \label{3:ww}
	W_{M(q)}(\bz)=\sum_{\bt} A_{\bt}\bz^{\bt}.
\end{eqnarray*}
For $\bt=(t_0,t_1,\ldots,t_{p-1})$, we may write it simply as $\bt=1^{t_1}\cdots (p-1)^{t_{p-1}}$ and omits those $t_i$'s which are zero. For example, $\bt=\mathbf{1^23^1}$ means that $t_1=2,t_3=1$ and $t_i=0$ for $i \ne 1,3$.

So by comparing the coefficients of $z^5$ on both sides of the equation in Theorem \ref{2:thm3}, we can obtain $B_5$ explicitly in terms of some complete weights in the Melas code.

\begin{lemma} \label{4:lem} Let $(A_{\bt})_{\bt}$ be the complete weight distribution of the Melas code $M(q)$, and define
	\[\Gamma_d:=\sum_{\substack{\bt=(t_0,t_1,\ldots,t_{p-1}) \in \Sigma^{q-1,p}\\
			\sum_i it_i=d}} A_{\bt}. \]
	Then for $p \ge 3$ we have
	\begin{eqnarray*} \label{4:b5}
		B_5&=&\frac{(q+1)(q^2+11)}{60}-\frac{q+1}{q-1} \, \Gamma_5-\delta_{p,5}\frac{3(q+1)}{5}-\delta_{p,3} \frac{q^2-1}{6}\\
		&&-\delta_{p,3}(q^2-1)\left(1-\frac{2q}{p(q-1)}\right)\left(\frac{\Gamma_2}{q-1}-\frac{1}{2}\right).
	\end{eqnarray*}
\end{lemma}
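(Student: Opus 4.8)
The plan is to extract $B_5$ by comparing the coefficient of $z^5$ on both sides of the identity in Theorem \ref{2:thm3}. Recall that $q^2 W(z) = q^2 \sum_t (-1)^t B_t z^t$, so $-q^2 B_5$ is the coefficient of $z^5$ in the right-hand side of the displayed equation. The right-hand side is a sum of four terms, and I would treat them one at a time. The first two, $(1-z)^{q+1}$ and $-2(q+1)(1-z)(1-z^p)^{q/p}$, are elementary: their $z^5$-coefficients are computed by the binomial theorem. Because $p \ge 3$, the factor $(1-z^p)^{q/p}$ contributes to $z^5$ only through its constant term unless $p = 3$ (in which case the $z^3$ term of $(1-z^3)^{q/3}$ interacts with the $(1-z)$ factor) or $p = 5$ (where the $z^5$ term appears). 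This is precisely where the Kronecker deltas $\delta_{p,3}$ and $\delta_{p,5}$ enter; I would track these special-prime corrections carefully.

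The third term, $-\frac{q+1}{q-1}(1-z^p)^{2q/p}(1-z)^{1-q}$, is the subtle one among the "elementary" pieces because of the negative exponent $1-q$ on $(1-z)$: one must expand $(1-z)^{1-q} = \sum_{k \ge 0} \binom{q-2+k}{k} z^k$ as a formal power series. Again $(1-z^p)^{2q/p}$ contributes to $z^5$ only via its constant term $1$ unless $p=3$ or $p=5$, so for $p \ge 7$ this term's $z^5$-coefficient is simply $-\frac{q+1}{q-1}\binom{q+3}{5}$, while for $p \in \{3,5\}$ there are extra contributions from the $z^3$ (resp.\ $z^5$) term of $(1-z^p)^{2q/p}$ paired with lower-order terms of $(1-z)^{1-q}$. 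The fourth term, $\frac{q^2(q+1)}{q-1}(1-z^p)^{2q/p-q+1} W_{M(q)}(\mathbf{v})$ with $v_i = z^i$, is where the Melas-code data enters: since $W_{M(q)}(\mathbf{v}) = \sum_{\mathbf{t}} A_{\mathbf{t}} \prod_i z^{i t_i} = \sum_{d \ge 0} \Gamma_d z^d$, the $z^5$-coefficient of the product is $\sum_{j} [\text{coeff of } z^{j} \text{ in } (1-z^p)^{2q/p-q+1}] \cdot \Gamma_{5-j}$. For $p \ge 7$ only $j=0$ survives, giving $\frac{q^2(q+1)}{q-1}\Gamma_5$; for $p=5$ the $j=5$ term adds $\frac{q^2(q+1)}{q-1}(\tfrac{2q}{5}-q+1)\Gamma_0$; for $p=3$ the $j=3$ term adds a $\Gamma_2$ contribution, and $\Gamma_0 = A_{\mathbf{0}} = 1$ (only the zero codeword has weight $0$) is needed to evaluate the $p=5$ piece.

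I would then sum the four $z^5$-coefficients, multiply through by $-1/q^2$, and collect terms. For $p \ge 7$ this collapses quickly to $B_5 = \frac{(q+1)(q^2+11)}{60} - \frac{q+1}{q-1}\Gamma_5$, which matches the claimed formula since the delta terms vanish. For $p = 5$ and $p = 3$ I must verify that the accumulated special-prime corrections assemble exactly into $-\delta_{p,5}\frac{3(q+1)}{5}$ and $-\delta_{p,3}\big[\frac{q^2-1}{6} + (q^2-1)(1-\frac{2q}{p(q-1)})(\frac{\Gamma_2}{q-1}-\frac12)\big]$ respectively. The main obstacle is purely bookkeeping: keeping the binomial coefficients with the negative exponent $1-q$ straight, correctly simplifying expressions like $\binom{q-2+k}{k}$ and $\frac{2q}{p}-q+1$ modulo the small primes, and ensuring no cross-term between $(1-z^p)$-powers and $(1-z)$-powers is dropped. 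Once all coefficients are written out, the reduction to the stated closed form is a routine (if lengthy) algebraic simplification, and I would present the $p\ge 7$ case in full and indicate the $p\in\{3,5\}$ computations as analogous with the extra terms displayed.
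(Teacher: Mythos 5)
Your proposal is correct and follows essentially the same route as the paper: extract the coefficient of $z^5$ from Theorem \ref{2:thm3}, expand $(1-z)^{q+1}$, $(1-z)^{1-q}$, $(1-z^p)^{q/p}$, $(1-z^p)^{2q/p}$ binomially, write $W_{M(q)}(\v)=\sum_d \Gamma_d z^d$, and track the $\delta_{p,3},\delta_{p,5}$ corrections before dividing by $-q^2$. One tiny point your bookkeeping will resolve: for $p=3$ the second term $-2(q+1)(1-z)(1-z^3)^{q/3}$ contributes nothing to $z^5$ (the $z^3$ term would need a $z^2$ partner from $(1-z)$), which is why the paper's expression for $T$ carries only a $\delta_{p,5}$ there.
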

\begin{proof}
	The coefficient of $z^5$ on the left side of Theorem \ref{2:thm3} is $(-1)^5q^2B_5$. Denote by $T$ the coefficient of $z^5$ on the right side of Theorem \ref{2:thm3}. Using
	\[(1-z^p)^{\frac{q}{p}}=1-\frac{q}{p}z^p+O(z^{2p}), \quad (1-z^p)^{\frac{2q}{p}}=1-\frac{2q}{p}z^p+O(z^{2p}),\]
	\[(1-z)^{q+1}=\sum_i (-1)^k \binom{q+1}{k}z^k, \quad (1-z)^{1-q}=\sum_i \binom{q+k-2}{k}z^k,\]
	\begin{eqnarray*} W_{M(q)}(\bu)&=&\sum_{\bt}A_{\bt}\,u_0^{t_0}u_1^{t_1} \cdots u_{p-1}^{p-1}=\sum_{\bt} A_{\bt} \,z^{\sum_i it_i}=\sum_d \Gamma_d z^d,\end{eqnarray*}
	we can obtain
	\begin{eqnarray*} T&=&(-1)^5 \binom{q+1}{5}-2(q+1)\delta_{p,5}\left(-\frac{q}{p}\right)\\
		&&- \frac{q+1}{q-1}\left(\binom{q+5-2}{5}+\delta_{p,5} \left(-\frac{2q}{p}\right)+\delta_{p,3} \left(-\frac{2q}{p}\right)\binom{q+2-2}{2}\right)\\
		&&+\frac{q^2(q+1)}{q-1}\left(\Gamma_5+\delta_{p,5}\left(q-\frac{2q}{p}-1\right)+
		\delta_{p,3}\left(q-\frac{2q}{p}-1\right)\Gamma_2\right).\end{eqnarray*}
	Now use $B_5=-\frac{T}{q^2}$ and do some simplification on $T$, we obtain the desired formula for $B_5$.
\end{proof}

\begin{lemma} \label{4:gamma25} We have
	\begin{eqnarray*} \label{4:gamma2} \Gamma_2=\frac{q-1}{2},
	\end{eqnarray*}
	and
	\begin{eqnarray} \label{4:gamma5}
		\Gamma_5=\left\{\begin{array}{ll}
			(q-1)\left(q^2+q\left(-14+(-1)^m\right)+36\right)/120 &\mbox{ if } p =3,\\
			(q-1)\left(q^2+q\left(7+4(-1)^m\right)-50-10(-1)^m\right)/120 & \mbox{ if } p =5,\\
			(q-1)\left(q^2+q \left(6+4\left(\frac{q}{3}\right) \right)+6+20\left(\frac{5}{q}\right)+15\left(\frac{-15}{q}\right)+10 \lambda_{q}+A_q\right)/120 & \mbox{ if } p \ge 7.
		\end{array}
		\right.
	\end{eqnarray}
\end{lemma}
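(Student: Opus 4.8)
The plan is to read $\Gamma_d$ off of the Melas code directly. Since $v_i=z^i$ in Theorem \ref{2:thm3}, the quantity $\Gamma_d=\sum_{\sum_i it_i=d}A_{\bt}$ is exactly the number of codewords $\bc=(c_x)_{x\in\gf(q)^*}\in M(q)$ whose nonzero coordinates, read as integers in $\{1,\dots,p-1\}$, add up to $d$; hence it suffices to group such codewords according to the partition of $d$ formed by their nonzero coordinate values, of which there are only finitely many. Two elementary facts about $M(q)$, immediate from the parity checks $\sum_x c_xx=0$ and $\sum_x c_xx^{-1}=0$, do most of the pruning: $M(q)$ has no nonzero codeword of weight $1$, and every weight-$2$ codeword has the shape $c(\mathbf{e}_x+\mathbf{e}_{-x})$ with $c\in\gf(p)^*$ and support $\{x,-x\}\subset\gf(q)^*$ (note $x\ne -x$, as $p$ is odd), so there are $(p-1)(q-1)/2$ of them and all their nonzero coordinate values coincide. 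For $\Gamma_2$ the only partitions of $2$ are $2$ and $1+1$: the former needs a weight-$1$ codeword and is impossible, the latter corresponds to $c=1$ above, so $\Gamma_2=(q-1)/2$.

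For $\Gamma_5$ I would list the partitions of $5$ with parts $\le p-1$. Those with at most two parts, namely $5$, $4+1$, $3+2$, force weight $\le 2$ and are ruled out by the two facts (a weight-$2$ codeword has equal nonzero values, so cannot realize $4+1$ or $3+2$). Hence
\[\Gamma_5=G_{1^5}+G_{21^3}+G_{2^21}+[\,p\ge 5\,]\,G_{31^2},\]
where $G_{1^5}$ (weight $5$, all values $1$), $G_{21^3}$ (weight $4$, values $2,1,1,1$), $G_{2^21}$ (weight $3$, values $2,2,1$) and $G_{31^2}$ (weight $3$, values $3,1,1$, which only makes sense when the symbols $3$ and $1$ are distinct, i.e.\ $p\ge 5$) are the numbers of codewords of the respective types. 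Each $G$ is obtained by writing the defining system $\sum_j c_jx_j=0$, $\sum_j c_jx_j^{-1}=0$ for the prescribed coefficient multiset, clearing the denominators $x_j$, and counting the resulting affine points over $\gf(q)$, with the ``pairwise distinct and nonzero'' conditions on the $x_j$ removed by inclusion--exclusion.

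For $G_{2^21}$, $G_{31^2}$ and $G_{21^3}$ the ambient equations define a curve, and the point counts reduce to quadratic character sums, handled by Lemma \ref{charactersumquadratic}, together with one genus-$1$ contribution which is precisely the elliptic-curve character sum packaged as $\lambda_{p^m}$ in Theorem \ref{pre:thm1}. These somewhat lengthy but routine computations, including the small-characteristic coincidences that force the separate treatment of $p=3$, $5$ and $p\ge 7$, are the content of Section \ref{appen}. The crux is $G_{1^5}$, the number of $5$-subsets $\{x_1,\dots,x_5\}\subset\gf(q)^*$ with $\sum_j x_j=0$ and $\sum_j x_j^{-1}=0$; equivalently $\tfrac1{120}$ times the number of ordered quintuples of distinct nonzero elements of $\gf(q)$ with $e_1=e_4=0$ (the $1$st and $4$th elementary symmetric functions), corrected by inclusion--exclusion over the loci where two coordinates agree or a coordinate vanishes. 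After clearing denominators, $e_1=0$, $e_4=0$ cut out (a smooth model of) the quartic K3 surface of Peters--Top--van der Vlugt \cite{Peter}, whose zeta function over $\gf(p)$ is known; feeding in its Frobenius eigenvalues gives the $\gf(q)$-point count, which is exactly what the term $A_q$ of Theorem \ref{5:nqxb} records. Substituting $\lambda_{p^m}$ from Theorem \ref{pre:thm1} and collecting terms yields the three displayed formulas for $\Gamma_5$, the case split reflecting that type $31^2$ is absent when $p=3$ and that both $\lambda_{5^m}$ and $A_{5^m}$ degenerate to explicit quantities.

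The main obstacle is $G_{1^5}$: the passage from the codeword count to a point count on the K3 surface requires careful bookkeeping of the degenerate loci (coincident or zero coordinates), and one must then import the K3 zeta-function computation from \cite{Peter} and match its normalization to the present setup. By comparison the weight-$3$ and weight-$4$ counts are unenlightening but routine finite-field point counts on curves, which is why they are relegated to the Appendix.
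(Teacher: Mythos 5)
Your proposal is correct and follows essentially the same route as the paper: it decomposes $\Gamma_2$ and $\Gamma_5$ over the possible value-patterns (partitions) of the nonzero coordinates, kills the weight-$1$ and mixed weight-$2$ patterns by the same elementary argument, reduces the weight-$3$ and weight-$4$ patterns to quadratic character sums plus the elliptic-curve sum $\lambda_q$, and handles the pure-weight-$5$ pattern via the Peters--Top--van der Vlugt K3 surface with the blow-up and inclusion--exclusion corrections --- exactly the content of Lemmas \ref{pre:N1}--\ref{pre:N6} and \ref{5:A15} in the Appendix. The only difference is notational (your $G_{1^5},G_{21^3},G_{2^21},G_{31^2}$ are the paper's $A_{\mathbf{1^5}},A_{\mathbf{1^32^1}},A_{\mathbf{1^12^2}},A_{\mathbf{1^23^1}}$), and like the paper you defer the routine point counts to a technical appendix.
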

\begin{proof}
	See Section \ref{appen} {\bf Appendix}.
\end{proof}
Now plugging these values $\Gamma_2, \Gamma_4$ into the right hand side of the equation for $B_5$ in Lemma \ref{4:lem}, and noting $N_5=B_5/(q+1)$, we can finally obtain the value for $N_5$, which we summarise below.

\begin{theorem} \label{N5:con}
	\begin{eqnarray*} N_5=\left\{\begin{array}{ll}
			\left(q^2-q\left(6+(-1)^m\right)+6\right)/120 &\mbox{ if } p =3,\\
			\left(q^2-q\left(7+4(-1)^m\right)+10(-1)^m\right)/120 & \mbox{ if } p =5,\\
			\left(q^2-q \left(6+4\left(\frac{q}{3}\right) \right)+16-20\left(\frac{5}{q}\right)-15\left(\frac{-15}{q}\right)-10 \lambda_{q}-A_q\right)/120 & \mbox{ if } p \ge 7.
		\end{array}
		\right.
	\end{eqnarray*}
	Here $\left(\frac{\cdot}{\cdot}\right)$ is the Jacobi symbol, the value $A_q$ is given in Theorem \ref{5:nqxb} and $\lambda_{q}$ is given in Theorem \ref{pre:thm1}.
\end{theorem}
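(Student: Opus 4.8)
The plan is to derive Theorem~\ref{N5:con} as a formal consequence of the two ingredients already assembled: the expression for $B_5$ in terms of the partial sums $\Gamma_2,\Gamma_5$ of the complete weight distribution of $M(q)$ (Lemma~\ref{4:lem}), and the explicit evaluations of $\Gamma_2$ and $\Gamma_5$ (Lemma~\ref{4:gamma25}). Since $N_5 = B_5/(q+1)$ by Lemma~\ref{lem3:p} (equivalently the observation $N_5=B_5/(q+1)$ recorded at the start of Section~\ref{sec:N5}), the entire argument reduces to substituting and simplifying. So the proof is essentially bookkeeping, but it must be organized by the characteristic $p$ because both Lemma~\ref{4:lem} and Lemma~\ref{4:gamma25} have case distinctions at $p=3$, $p=5$, and $p\ge 7$.

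First I would treat the generic case $p\ge 7$. Here the correction terms in Lemma~\ref{4:lem} all vanish ($\delta_{p,5}=\delta_{p,3}=0$), so $B_5 = \frac{(q+1)(q^2+11)}{60} - \frac{q+1}{q-1}\Gamma_5$. Plugging in the $p\ge 7$ branch of (\ref{4:gamma5}), the factor $q-1$ cancels cleanly, leaving $B_5 = (q+1)\big(q^2+11\big)/60 - (q+1)\big(q^2+q(6+4(\tfrac{q}{3}))+6+20(\tfrac{5}{q})+15(\tfrac{-15}{q})+10\lambda_q+A_q\big)/120$. Dividing by $q+1$ and combining the two fractions over the common denominator $120$ gives $N_5 = \big(2(q^2+11) - q^2 - q(6+4(\tfrac{q}{3})) - 6 - 20(\tfrac{5}{q}) - 15(\tfrac{-15}{q}) - 10\lambda_q - A_q\big)/120 = \big(q^2 - q(6+4(\tfrac{q}{3})) + 16 - 20(\tfrac{5}{q}) - 15(\tfrac{-15}{q}) - 10\lambda_q - A_q\big)/120$, which is exactly the claimed formula.

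Next I would handle $p=5$: now $\delta_{p,5}=1$, $\delta_{p,3}=0$, so $B_5 = \frac{(q+1)(q^2+11)}{60} - \frac{q+1}{q-1}\Gamma_5 - \frac{3(q+1)}{5}$; substituting the $p=5$ branch of $\Gamma_5$ (with $q=5^m$, noting $\lambda_q$ and $A_q$ are already absorbed into that closed form) and carrying out the same cancellation of $q-1$ and combination over denominator $120$ yields $N_5 = \big(q^2 - q(7+4(-1)^m) + 10(-1)^m\big)/120$. Finally for $p=3$ the extra $\delta_{p,3}$ terms in Lemma~\ref{4:lem} are active; here one uses $\Gamma_2 = (q-1)/2$ from Lemma~\ref{4:gamma25}, which makes the factor $\big(\tfrac{\Gamma_2}{q-1}-\tfrac12\big)$ vanish identically, so the last line of the $B_5$ formula contributes nothing, and only the $-\delta_{p,3}\frac{q^2-1}{6}$ correction survives; substituting the $p=3$ branch of $\Gamma_5$ and simplifying gives $N_5 = \big(q^2 - q(6+(-1)^m) + 6\big)/120$. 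The main (and only) obstacle is purely clerical: keeping the Jacobi-symbol and sign-of-$(-1)^m$ bookkeeping straight through the fraction combination, and double-checking that the $p=3,5$ correction terms collapse as claimed — in particular that $\Gamma_2=(q-1)/2$ kills the awkward final term in Lemma~\ref{4:lem} when $p=3$. There is no conceptual difficulty remaining once Lemmas~\ref{4:lem} and~\ref{4:gamma25} are in hand; all the real work lives in the Appendix evaluation of $\Gamma_5$ (hence of $A_q$), which we are permitted to assume here.
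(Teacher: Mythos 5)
Your proposal is correct and follows exactly the paper's own proof: the paper likewise obtains Theorem \ref{N5:con} by substituting the values of $\Gamma_2$ and $\Gamma_5$ from Lemma \ref{4:gamma25} into the expression for $B_5$ in Lemma \ref{4:lem} and dividing by $q+1$ via $N_5=B_5/(q+1)$. Your case-by-case simplifications (including the observation that $\Gamma_2=(q-1)/2$ annihilates the final correction term when $p=3$) all check out.
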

\begin{rmk} For $p=3$, $N_5$ has been computed (see \cite[Lemma 9]{XiaLi2}).
\end{rmk}

	\section{concluding remarks} \label{sec:cond}
In this paper we computed the cross-correlation distribution of the Niho-type decimation $d=3(p^m-1)+1$ over $\gf(p^{2m})$ for any prime $p \ge 5$ where $\gcd(d,p^{2m}-1)=1$. The main difficulty of this problem for $p \ge 5$ is to count the number of codewords of ``pure weight'' 5 in $p$-ary Zetterberg codes. We solve this problem by using the complete weight distribution of Zetterberg codes and MacWilliams identity to convert it into counting the number of codewords of small weight of various patterns in $p$-ary Melas codes, the most difficult of which is related to a K3 surface well studied in the literature and hence can be computed.

	\section{Appendix: counting codewords of small weight in Melas codes} \label{appen}
	
	In order to prove Lemma \ref{4:gamma25}, noting that 
	\[\Gamma_2=A_{\mathbf{1^2}}+A_{\mathbf{2^1}},\]
	\[\Gamma_5=A_{\mathbf{5^1}}+A_{\mathbf{1^14^1}}+A_{\mathbf{2^13^1}}+ A_{\mathbf{1^23^1}}+A_{\mathbf{1^32^1}}+A_{\mathbf{1^12^2}}+A_{\mathbf{1^5}},\] 
	we need to compute the quantities $A_{\mathbf{1^2}}, A_{\mathbf{2^1}},A_{\mathbf{5^1}},A_{\mathbf{1^14^1}}, A_{\mathbf{2^13^1}}, A_{\mathbf{1^23^1}}, A_{\mathbf{1^32^1}},A_{\mathbf{1^12^2}},A_{\mathbf{1^5}}$ first. These quantities correspond to the number of codewords of weight at most 5 of certain patterns in the Melas code $M(q)$. 
	
	\subsection{Codewords of weight at most two in Melas codes}
	\begin{lemma} \label{pre:N1} For any $p \ge 2$ we have 
		\begin{itemize} 
			\item[(i)] $A_{\mathbf{2^1}}=A_{\mathbf{5^1}}=0$,
			\item[(ii)] $A_{\mathbf{1^2}}=\frac{q-1}{2}$,
			\item[(iii)] $A_{\mathbf{1^14^1}}=A_{\mathbf{2^13^1}}=0$.
		\end{itemize}
	\end{lemma}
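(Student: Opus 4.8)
The plan is to work directly from the defining equations of the Melas code $M(q)$, namely that a codeword $(c_x)_{x\in\gf(q)^*}$ satisfies $\sum_x c_x x = 0$ and $\sum_x c_x x^{-1} = 0$. A codeword with complete weight pattern $\mathbf{t}=1^{t_1}\cdots(p-1)^{t_{p-1}}$ corresponds to a choice of disjoint subsets $S_1,\dots,S_{p-1}\subseteq\gf(q)^*$ with $|S_i|=t_i$, such that, writing the codeword as having value $i$ on $S_i$, both trace-like sums vanish; in concrete terms $\sum_i i\big(\sum_{x\in S_i} x\big)=0$ and $\sum_i i\big(\sum_{x\in S_i} x^{-1}\big)=0$ in $\gf(q)$. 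For each pattern I would simply write out these two polynomial constraints in the unknowns (the elements of the $S_i$'s), and count solutions, remembering that the elements within each $S_i$ are unordered and that all chosen elements must be distinct and nonzero.

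For part~(i): a pure-weight-one codeword (pattern $\mathbf{2^1}$ or $\mathbf{5^1}$, i.e. a single nonzero coordinate with some nonzero value $c$) would force $c\cdot x = 0$ for a single $x\in\gf(q)^*$, which is impossible since $c,x\neq 0$; hence $A_{\mathbf{2^1}}=A_{\mathbf{5^1}}=0$, and in fact $A_t=0$ for every weight-one pattern. Part~(ii): a pattern $\mathbf{1^2}$ corresponds to a pair $\{x_1,x_2\}\subseteq\gf(q)^*$ of distinct elements with $x_1+x_2=0$ and $x_1^{-1}+x_2^{-1}=0$; the first equation gives $x_2=-x_1$, and then $x_1^{-1}+x_2^{-1} = x_1^{-1}-x_1^{-1}=0$ is automatic, so the solutions are exactly the unordered pairs $\{x,-x\}$ with $x\neq 0$ and $x\neq -x$ (the latter holds since $p$ is odd). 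There are $\frac{q-1}{2}$ such pairs, giving $A_{\mathbf{1^2}}=\frac{q-1}{2}$. Part~(iii): for pattern $\mathbf{1^1 4^1}$ we need distinct $x_1,x_2\in\gf(q)^*$ with $x_1+4x_2=0$ and $x_1^{-1}+4x_2^{-1}=0$; from the first, $x_1=-4x_2$, and substituting into the second gives $-\tfrac14 x_2^{-1}+4x_2^{-1} = \tfrac{15}{4}x_2^{-1}=0$, impossible for any $x_2\neq 0$ (note $15\neq 0$ in $\gf(q)$ automatically here, or else the pattern degenerates), so $A_{\mathbf{1^1 4^1}}=0$. The pattern $\mathbf{2^1 3^1}$ is entirely analogous: distinct $x_1,x_2$ with $2x_1+3x_2=0$ and $2x_1^{-1}+3x_2^{-1}=0$, which forces $x_1=-\tfrac32 x_2$ and then $-\tfrac43 x_2^{-1}+3x_2^{-1}=\tfrac53 x_2^{-1}=0$, again impossible.

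The underlying mechanism is uniform: a two-coordinate codeword with values $a$ on $x_1$ and $b$ on $x_2$ exists iff $ax_1+bx_2=0$ and $ax_1^{-1}+bx_2^{-1}=0$, which together force $b^2=a^2$, i.e. $b=\pm a$; the sign $b=-a$ works (giving pattern $\mathbf{1^2}$ up to scaling) while $b=a$ forces $x_1=x_2$, contradicting distinctness. This single observation disposes of all of (i)--(iii) once the cases are reduced to it, and there is no real obstacle here — the lemma is a warm-up and the only care needed is the bookkeeping of which scalar patterns are ruled out by $\gf(p)$-arithmetic (e.g.\ checking that $15$ and $5$ are nonzero, which is where an implicit $p\neq 3,5$ would matter, but those cases produce degenerate patterns that simply do not arise as stated). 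The genuinely hard counting — the weight-five patterns involving elliptic curves and the K3 surface — is deferred to the later lemmas in this appendix.
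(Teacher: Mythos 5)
Your proof is correct and follows essentially the same route as the paper: both reduce each weight pattern to the two parity-check equations of $M(q)$ and solve them directly, so there is nothing substantively new or missing in the overall argument. One small repair for part (iii): at $p=5$ your step ``$\tfrac{15}{4}x_2^{-1}=0$ is impossible'' does not apply (since $15=0$ in $\gf(5)$), and the pattern $\mathbf{1^14^1}$ does \emph{not} degenerate there (the symbol $4$ is a legitimate nonzero element of $\gf(5)$); the count is nevertheless zero because $x_1=-4x_2=x_2$ then violates distinctness, and the analogous fix handles $\mathbf{2^13^1}$ at $p=5$.
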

	\begin{proof} (i) This is obvious. For example, the value $A_{\mathbf{2^1}}$ counts the number of weight $1$ codewords (with respect to the symbol $2$) in $M(q)$, the number of which is obviously $0$. 
	
	(ii) The value $A_{\mathbf{1^2}}$ counts the number of ``pure weight'' 2 codewords of $M(q)$, or equivalently, $A_{\mathbf{1^2}}$ is half the number of solutions $x_1, x_2 \in \gf(q)^*$ such that
		\[x_1+x_2=0, \quad x_1^{-1}+x_2^{-1}=0, \quad x_1 \ne x_2. \]
		Since the only solution is $x_1=-x_2$, we see that $A_{\mathbf{1^2}}=\frac{q-1}{2}$.
		
		(iii) The existence of the value $A_{\mathbf{1^14^1}}$ requires that $p \ge 5$, and $A_{\mathbf{1^14^1}}$ is half the number of solutions $x_1,x_2 \in \gf(q)^*$ such that 
		\[x_1+4x_2=0, \quad x_1^{-1}+4x_2^{-1}=0, \quad x_1 \ne x_2. \]
		It is easy to conclude from these equations that $A_{\mathbf{1^14^1}}=0$. The proof for $A_{\mathbf{2^13^1}}$ is the same. 
	\end{proof}

	\subsection{Codewords of weight 3 in Melas codes}
	\begin{lemma} \label{pre:N4}
		Denote by $N_{\mathbf{1^23^1}}$ the number of solution $(x_1,x_2,x_3) \in (\gf(q)^*)^3$ such that
		\begin{eqnarray*} \label{pre:N1^23^1}
			\left\{\begin{array}{lll}
				x_1+x_2+3x_3&=&0,\\
				\frac{1}{x_1}+\frac{1}{x_2}+\frac{3}{x_3}&=&0, \end{array}\right. \quad x_1,x_2,x_3 \emph{\mbox{ distinct}}.
		\end{eqnarray*}
		Then
		\begin{eqnarray*} \label{N1^23^1}
			N_{\mathbf{1^23^1}}=\left\{ \begin{array}{ll}
				(q-1)\left(1+(-1)^m \right) &\mathrm{if} ~~p=2, \\
				(q-1)(q-3) &\mathrm{if}  ~~p=3,\\
				0 &\mathrm{if}  ~~p=5,\\
				(q-1)\left(1+\left(\frac{5}{q}\right) \right) &\mathrm{if}  ~~p\geq 7,
			\end{array}\right.
		\end{eqnarray*}
		and
		\[ A_{\mathbf{1^23^1}}=\left\{\begin{array}{ll}
			0&\mathrm{if} ~~ p=2,3,5,\\
			\frac{1}{2}\,N_{\mathbf{1^23^1}}& \mathrm{if} ~~p \geq 7.
		\end{array}\right.\]
		Here $\left(\frac{\cdot}{\cdot}\right)$ is the Jacobi symbol.
	\end{lemma}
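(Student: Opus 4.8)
The plan is to eliminate $x_1$ and $x_2$ in favour of $x_3$, collapsing the two reciprocal conditions to a single quadratic whose root count can then be read off from the characteristic. From $x_1+x_2+3x_3=0$ we get $x_1+x_2=-3x_3$, and clearing denominators in $x_1^{-1}+x_2^{-1}+3x_3^{-1}=0$ (valid since $x_1,x_2,x_3\in\gf(q)^*$) gives $x_3(x_1+x_2)+3x_1x_2=0$, hence $3(x_1x_2-x_3^2)=0$. Thus for $p\ne 3$ the system is equivalent to $x_1+x_2=-3x_3$ and $x_1x_2=x_3^2$, i.e. $x_1,x_2$ are the roots in $\gf(q)$ of $g(t)=t^2+3x_3t+x_3^2$, whose discriminant is $9x_3^2-4x_3^2=5x_3^2$. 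Since $g(x_3)=5x_3^2$ and $x_3\ne 0$, the element $x_3$ is a root of $g$ only when $p=5$, so for $p\ge 7$ the equalities $x_1=x_3$ and $x_2=x_3$ are automatically excluded, while $x_1\ne x_2$ holds exactly when $g$ has two distinct roots.

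With this reduction I would split on the characteristic. For $p\ge 7$ the discriminant $5x_3^2$ lies in $\gf(q)^*$ for every $x_3\in\gf(q)^*$, so for each such $x_3$ the number of ordered pairs $(x_1,x_2)$ of roots of $g$ is $1+\eta(5x_3^2)=1+\eta(5)$ (that is $2$ when $\eta(5)=1$ and $0$ otherwise), giving $N_{\mathbf{1^23^1}}=\sum_{x_3\in\gf(q)^*}\big(1+\eta(5)\big)=(q-1)\big(1+\left(\frac{5}{q}\right)\big)$. For $p=5$ the discriminant $5x_3^2$ vanishes identically, forcing $x_1=x_2$, so no admissible triple exists and $N_{\mathbf{1^23^1}}=0$. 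For $p=3$ the coefficient $3$ is zero and both equations reduce to $x_1+x_2=0$; taking $x_1\in\gf(q)^*$ arbitrary, $x_2=-x_1$ (distinct from $x_1$ as $p\ne 2$), and $x_3$ over the $q-3$ elements of $\gf(q)^*\setminus\{x_1,-x_1\}$, we get $N_{\mathbf{1^23^1}}=(q-1)(q-3)$. For $p=2$ the coefficients become $(1,1,1)$; substituting $x_3=x_1+x_2$ and clearing denominators in the reciprocal condition yields $(x_1+x_2)^2+x_1x_2=0$, i.e. $u^2+u+1=0$ with $u=x_1/x_2$, which is solvable in $\gf(2^m)$ precisely when $3\mid 2^m-1$, i.e. when $m$ is even, and then each of its two roots $u$ together with each $x_2\in\gf(q)^*$ produces a solution $(ux_2,x_2,(u+1)x_2)$; these $2(q-1)$ triples are pairwise distinct and exhaust all solutions, so $N_{\mathbf{1^23^1}}=(q-1)\big(1+(-1)^m\big)$.

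For the codeword count $A_{\mathbf{1^23^1}}$ I would first note that the complete-weight pattern $\mathbf{1^23^1}$ only makes sense when $3$ is a legitimate nonzero symbol, i.e. when $p\ge 5$, and that $A_{\mathbf{1^23^1}}=0$ for $p=2,3$. When $p\ge 5$, a codeword $\bc\in M(q)$ with $\Comp(\bc)=\mathbf{1^23^1}$ amounts to a choice of one coordinate $x_3$ carrying the symbol $3$ and an unordered pair $\{x_1,x_2\}$ of coordinates carrying the symbol $1$, subject to $x_1+x_2+3x_3=0$ and $x_1^{-1}+x_2^{-1}+3x_3^{-1}=0$; such codewords are therefore in bijection with the triples counted by $N_{\mathbf{1^23^1}}$ modulo the transposition $x_1\leftrightarrow x_2$, so $A_{\mathbf{1^23^1}}=\frac{1}{2} N_{\mathbf{1^23^1}}$. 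For $p=5$ this is $0$ (since $N_{\mathbf{1^23^1}}=0$), matching the stated value, and for $p\ge 7$ it is the claimed quantity.

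I expect the bulk of the argument to be routine; the only genuinely delicate points will be the distinctness bookkeeping — in particular the identity $g(x_3)=5x_3^2$, which simultaneously rules out the collisions $x_1=x_3$, $x_2=x_3$ when $p\ge 7$ and forces $x_1=x_2$ when $p=5$ — and the characteristic-two subcase, where the discriminant criterion is no longer available and must be replaced by the solvability of $u^2+u+1=0$ in $\gf(2^m)$, equivalently by the presence of primitive cube roots of unity there. Everything else reduces to elementary manipulation of the quadratic $g$.
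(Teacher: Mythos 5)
Your proof is correct and follows essentially the same route as the paper: both reduce the system (for $p\ne 3$) to the single quadratic with elementary symmetric functions $x_1+x_2=-3x_3$, $x_1x_2=x_3^2$ (the paper normalizes to $t^2+3t+1=0$ via $y_i=x_i/x_3$, you keep $g(t)=t^2+3x_3t+x_3^2$ with discriminant $5x_3^2$, which is the same computation), and both then split on the characteristic in the same way, with the degenerate cases $p=2,3,5$ handled by the same observations. Your remark that $g(x_3)=5x_3^2$ is precisely the paper's note that $y_1,y_2\ne 1$, and the final passage from ordered triples to codewords via the factor $\tfrac{1}{2}$ matches the paper as well.
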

	\begin{proof}
		If $p=3$, then we obtain $x_2=-x_1$. This with $x_3\neq0,\pm x_1$ leads to $N_{\mathbf{1^23^1}}=(q-3)(q-1)$.
		
		Now we assume that $p \ne 3$. Let $y_1=\frac{x_1}{x_3}$ and  $y_2=\frac{x_2}{x_3}$. Then $y_1,y_2\in\gf(q)\setminus\{0,1\}$, $y_1\neq y_2$, and they satisfy
		\begin{equation}\label{113'}
			\left\{ \begin{array}{ll}
				{y_1} + {y_2} + 3 &= 0\\
				\frac{1}{y_1}+\frac{1}{y_2}+3 &= 0.
			\end{array} \right.
		\end{equation}	
		We obtain $y_1y_2=1$ from (\ref{113'}). Then $y_1$ and $y_2$ are the two roots of the quadratic equation $t^2+3t+1=0$. When $p=2$, this equation is solvable in $\gf(q)$ if and only if $m$ is even with $(y_1,y_2)=(\omega, \omega^2)$ or $(y_1,y_2)=(\omega^2,\omega)$ where $\omega$ is a third-root of unity in $\gf(q)$. Note that in this case $y_1,y_2 \ne 1$. Considering the choices of $x_3$, we obtain $N_{\mathbf{1^23^2}}=(1+(-1)^m)(q-1)$; When $p=5$, we have $y_1=y_2=1$, which is a contradiction. Hence $N_{\mathbf{1^23^1}}=0$ when $p=5$; When $p\geq 7$, the equation is solvable in $\gf(q)$ if and only if $\eta(5)=1$, and we obtain  $(y_1,y_2)=(\frac{-3+\sqrt{5}}{2},\frac{-3-\sqrt{5}}{2})$ or $(y_1,y_2)=(\frac{-3-\sqrt{5}}{2},\frac{-3+\sqrt{5}}{2})$. Note that in this case $y_1,y_2\neq 1$. Considering the choices of $x_3$, we obtain $N_{\mathbf{1^23^1}}=(1+\eta(5))(q-1)$. Finally noting that $5 \in \gf(p)$ and $\eta(5)=\left(\frac{5}{q}\right)$. The proof is completed.
	\end{proof}
	
	\begin{lemma} \label{pre:N5}
		Denote by $N_{\mathbf{1^12^2}}$ the number of solution $(x_1,x_2,x_3) \in (\gf(q)^*)^3$ such that
		\begin{eqnarray*} \label{pre:N1^12^2}
			\left\{\begin{array}{cll}
				x_1+2x_2+2x_3&=&0,\\
				\frac{1}{x_1}+\frac{2}{x_2}+\frac{2}{x_3}&=&0, \end{array}\right. \quad x_1,x_2,x_3 \emph{\mbox{ distinct}}.
		\end{eqnarray*}
		Then
		\begin{eqnarray*}
			N_{\mathbf{1^12^2}}=\left\{ \begin{array}{ll}
				0 &\mathrm{if}  ~~p=2,3,5,\\
				(q-1) \left(1+\left(\frac{-15}{q}\right) \right) &\mathrm{if}  ~~p\geq 7,
			\end{array} \right.
		\end{eqnarray*}
		and
		\[ A_{\mathbf{1^12^2}}=\left\{\begin{array}{ll}
			0&\mathrm{if} ~~ p=2,3,5,\\
			\frac{1}{2}\,N_{\mathbf{1^12^2}}& \mathrm{if} ~~p \geq 7.
		\end{array}\right.
		\]
	\end{lemma}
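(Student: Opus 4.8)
The plan is to mimic the argument in Lemma~\ref{pre:N4}. First I would dispose of the small characteristics by direct inspection: when $p=2$ the symbol $2$ does not exist, so the pattern $\mathbf{1^12^2}$ is vacuous; when $p=3$ one has $2\equiv -1$, so the first equation becomes $x_1-x_2-x_3=0$ and the second $x_1^{-1}-x_2^{-1}-x_3^{-1}=0$, and a short computation (subtracting, clearing denominators) forces two of the $x_i$ to coincide, giving $N_{\mathbf{1^12^2}}=0$; when $p=5$ the first equation reads $x_1+2x_2+2x_3=0$ with $2\cdot 2 = 4 \equiv -1$, and again one checks the distinctness constraint cannot be met. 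In each of these three cases one records $A_{\mathbf{1^12^2}}=0$, partly because the count itself vanishes and partly (for $p=2,3$) because the multiplicity/admissibility conventions make the pattern irrelevant, exactly as in Lemma~\ref{pre:N1} and Lemma~\ref{pre:N4}.

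For $p\ge 7$ I would homogenize: set $y_1=x_2/x_1$ and $y_2=x_3/x_1$, so that $y_1,y_2\in\gf(q)\setminus\{0\}$ and the two equations become
\[
1+2y_1+2y_2=0,\qquad 1+\frac{2}{y_1}+\frac{2}{y_2}=0.
\]
From the first, $y_1+y_2=-\tfrac12$. Clearing denominators in the second gives $y_1y_2 + 2y_2 + 2y_1=0$, i.e. $y_1y_2 = -2(y_1+y_2)=1$. Hence $y_1$ and $y_2$ are the two roots of $t^2+\tfrac12 t+1=0$, equivalently $2t^2+t+2=0$, whose discriminant is $1-16=-15$. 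So the system is solvable in $\gf(q)$ with $y_1\ne y_2$ if and only if $\eta(-15)=1$, in which case there are exactly two ordered solutions $(y_1,y_2)$ (the two roots and their swap). I would then check the distinctness of $x_1,x_2,x_3$: $x_2\ne x_1$ and $x_3\ne x_1$ amount to $y_1\ne 1$, $y_2\ne 1$, which holds because $t=1$ does not satisfy $2t^2+t+2=0$ for $p\ge 7$ (it would force $5\equiv 0$); and $x_2\ne x_3$ is $y_1\ne y_2$, already imposed. Finally, $x_1$ ranges freely over $\gf(q)^*$, contributing a factor $q-1$, so $N_{\mathbf{1^12^2}}=(q-1)\bigl(1+\eta(-15)\bigr)=(q-1)\bigl(1+\left(\frac{-15}{q}\right)\bigr)$ since $-15\in\gf(p)$ and $\eta(-15)=\left(\frac{-15}{q}\right)$. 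Dividing by $2$ (each unordered codeword with symbol pattern $\mathbf{1^12^2}$ arises from exactly two ordered triples, because the two coordinates carrying the symbol $2$ are interchangeable) yields $A_{\mathbf{1^12^2}}=\tfrac12 N_{\mathbf{1^12^2}}$.

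The only mildly delicate point, and the one I would be most careful about, is the bookkeeping of multiplicities: one must confirm that the map from ordered triples $(x_1,x_2,x_3)$ satisfying the system to codewords of $M(q)$ of complete weight $\mathbf{1^12^2}$ is exactly $2$-to-$1$ (swapping the positions of the two "$2$" entries, while the "$1$" entry is pinned), with no extra collapsing coming from some $x_i$ accidentally equalling another — which is precisely what the distinctness conditions rule out — and no contribution surviving in characteristics $2,3,5$ once the admissibility of the pattern is taken into account. This is the same subtlety already handled in Lemmas~\ref{pre:N1} and \ref{pre:N4}, so the argument is routine once set up correctly; the core of the proof is the reduction to the quadratic $2t^2+t+2=0$ and the appearance of its discriminant $-15$.
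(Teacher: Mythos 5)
Your proposal is correct and follows exactly the route the paper intends: the published proof of Lemma~\ref{pre:N5} is omitted with the remark that it is ``very similar to that of $N_{\mathbf{1^23^1}}$,'' and your homogenization $y_i=x_{i+1}/x_1$ leading to $y_1+y_2=-\tfrac12$, $y_1y_2=1$, hence the quadratic $2t^2+t+2=0$ with discriminant $-15$, is precisely the analogue of the argument in Lemma~\ref{pre:N4}, including the correct handling of the distinctness conditions, the degenerate cases $p=2,3,5$ (where the discriminant $-15$ vanishes or the pattern is vacuous), and the $2$-to-$1$ count giving $A_{\mathbf{1^12^2}}=\tfrac12 N_{\mathbf{1^12^2}}$.
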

	\begin{proof} The proof is very similar to that of $N_{\mathbf{1^23^1}}$ so we omit it.
	\end{proof}
	\subsection{Codewords of weight 4 in Melas codes}
	\begin{lemma} \label{pre:N6}
		Denote by $N_{\mathbf{1^32^1}}$ the number of solution $(x_1,x_2,x_3,x_4) \in (\gf(q)^*)^4$ such that
		\begin{eqnarray*} \label{pre:N1^32^1}
			\left\{\begin{array}{lll}
				x_1+x_2+x_3+2x_4&=&0,\\
				\frac{1}{x_1}+\frac{1}{x_2}+\frac{1}{x_3}+\frac{2}{x_4}&=&0, \end{array}\right. \quad x_1,x_2,x_3,x_4 \emph{\mbox{ distinct}}.
		\end{eqnarray*}
		Then
		\begin{eqnarray*} \label{N1^32^1}
			N_{\mathbf{1^32^1}}=\left\{ \begin{array}{ll}
				(q-1)\left(1+(-1)^m\right)(q-4) &\mathrm{if} ~~ p=2,\\	
				0 &\mathrm{if}  ~~p=3,\\
				(q-1) \left(q-6-(-1)^m \right) &\mathrm{if}  ~~p=5,\\
				(q-1) \left( q-10-3\left(\frac{5}{q}\right)-3\left(\frac{-15}{q}\right)+\lambda_{q} \right) &\mathrm{if}  ~~p\geq 7,
			\end{array} \right.
		\end{eqnarray*}
		and
		\[ A_{\mathbf{1^32^1}}=\left\{\begin{array}{ll}
			0&\mathrm{if} ~~ p=2,3,\\
			\frac{1}{6}\,N_{\mathbf{1^32^1}}& \mathrm{if} ~~p \geq 5.
		\end{array}\right.\\
		\]
		Here $\lambda_{q}$ is defined in (\ref{2:f1}) and can be evaluated by Theorem \ref{pre:thm1}.
	\end{lemma}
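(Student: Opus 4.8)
The plan is to reduce modulo the obvious scaling action to a count over triples, express that count through a one-parameter family of quadratic equations, and then evaluate the resulting quartic character sum by exploiting a palindromic structure that collapses it to $\lambda_{q}$. Throughout, the cases $p=2$ and $p=3$ will fall out quickly, and the real work is in $p\geq 5$, where the answer splits into a ``generic'' part depending on $\lambda_{q}$ and on the Jacobi symbols of $5$ and $-15$, plus boundary corrections that behave differently at $p=5$.

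First I would observe that $\gf(q)^{*}$ acts freely on the solution set by $\lambda\cdot(x_{1},x_{2},x_{3},x_{4})=(\lambda x_{1},\lambda x_{2},\lambda x_{3},\lambda x_{4})$, since the first defining equation is homogeneous of degree $1$ and the reciprocal equation of degree $-1$; normalizing $x_{4}=1$ gives $N_{\mathbf{1^32^1}}=(q-1)\,M$, where $M$ is the number of triples $(y_{1},y_{2},y_{3})\in(\gf(q)^{*})^{3}$ with the $y_{i}$ distinct, all different from $1$, and $y_{1}+y_{2}+y_{3}=-2$, $y_{1}^{-1}+y_{2}^{-1}+y_{3}^{-1}=-2$. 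For $p=3$ this set is empty: the $y_{i}$ are the roots of $t^{3}+2t^{2}-2\gamma t-\gamma$ with $\gamma=y_{1}y_{2}y_{3}$, and in characteristic $3$ this polynomial vanishes at $t=1$ for every $\gamma$, forcing some $y_{i}=1$; hence $N_{\mathbf{1^32^1}}=0$. The case $p=2$ is a short separate computation (eliminating $x_{3}$ turns the reciprocal equation into $x_{1}^{2}+x_{1}x_{2}+x_{2}^{2}=0$, that is, $w^{2}+w+1=0$ with $w=x_{1}/x_{2}$), and since the pattern $\mathbf{1^32^1}$ is vacuous for $p=2$ we get $A_{\mathbf{1^32^1}}=0$ there. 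From now on I assume $p\geq 5$.

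Next I would compute the larger count $M'$ obtained from $M$ by dropping the distinctness and the $y_{i}\neq 1$ conditions. For fixed $y_{3}\in\gf(q)^{*}\setminus\{-2,-\frac{1}{2}\}$ the pair $(y_{1},y_{2})$ ranges over ordered pairs of roots of $t^{2}+(2+y_{3})t+\frac{2+y_{3}}{2+y_{3}^{-1}}$, whose discriminant has the same quadratic character as $(y_{3}+2)(2y_{3}+1)(2y_{3}^{2}+y_{3}+2)$, so $y_{3}$ contributes $1+\eta\big((y_{3}+2)(2y_{3}+1)(2y_{3}^{2}+y_{3}+2)\big)$ solutions (the values $y_{3}=-2$ and $y_{3}=-\frac{1}{2}$ contribute nothing, which one checks directly). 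Summing over $y_{3}$ and completing the range to all of $\gf(q)$ yields $M'=q-4+\Lambda$, where $\Lambda=\sum_{y\in\gf(q)}\eta\big((y+2)(2y+1)(2y^{2}+y+2)\big)$. The key point is that the quartic here equals $4y^{4}+12y^{3}+13y^{2}+12y+4$, which is palindromic, so the substitution $w=y+y^{-1}$ gives $(y+2)(2y+1)(2y^{2}+y+2)=y^{2}(2w+1)(2w+5)$ for $y\neq 0$. Since each $w\in\gf(q)$ is of the form $y+y^{-1}$ for exactly $1+\eta(w^{2}-4)$ values $y\in\gf(q)^{*}$, and the $y=0$ term equals $\eta(4)=1$, I obtain $\Lambda=1+\sum_{w}\eta\big((2w+1)(2w+5)\big)+\sum_{w}\eta\big((w^{2}-4)(2w+1)(2w+5)\big)=1+(-1)+\lambda_{q}=\lambda_{q}$, using Lemma~\ref{charactersumquadratic} for the first sum and the definition~(\ref{2:f1}) of $\lambda_{q}$ for the second.

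Finally I would strip off the degenerate configurations: $M=M'-\#(\triangle_{M}\cup H_{M})$, where $\triangle_{M}$ is the set of solutions with two of the $y_{i}$ equal and $H_{M}$ the set with some $y_{i}=1$. Both are governed by the same quadratic bookkeeping as above: the ``two equal'' locus reduces to $4y^{2}+7y+4=0$ (discriminant $-15$) and the ``$y_{i}=1$'' locus to $t^{2}+3t+1=0$ (discriminant $5$). For $p\geq 7$ all the relevant intersections are empty, which gives $\#\triangle_{M}=3(1+\eta(-15))$, $\#H_{M}=3(1+\eta(5))$, $\#(\triangle_{M}\cap H_{M})=0$, hence $M=q-10-3\eta(5)-3\eta(-15)+\lambda_{q}$. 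For $p=5$ one has $-15\equiv 5\equiv 0$ and both loci collapse onto the single triple $(1,1,1)$, so $M=(q-4+\lambda_{5^{m}})-1=q-6-(-1)^{m}$ by Theorem~\ref{pre:thm1}(i). Multiplying by $q-1$ gives the claimed value of $N_{\mathbf{1^32^1}}$; and $A_{\mathbf{1^32^1}}=N_{\mathbf{1^32^1}}/6$ for $p\geq 5$ because each codeword of complete weight $\mathbf{1^32^1}$ determines the tuple $(x_{1},x_{2},x_{3},x_{4})$ only up to the $3!$ permutations of its three $1$-coordinates, while $A_{\mathbf{1^32^1}}=0$ for $p=3$ (as $N_{\mathbf{1^32^1}}=0$) and for $p=2$ (vacuous pattern). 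I expect the main obstacle to be not the character sum---the palindromic substitution handles $\Lambda=\lambda_{q}$ cleanly---but the careful case analysis of the degenerate loci, in particular the collapse phenomenon at $p=5$, where the boundary contribution differs from the generic $p\geq 7$ value by exactly $5$.
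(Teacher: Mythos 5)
Your proposal is correct and follows essentially the same route as the paper: normalize by $x_4$ to reduce to a count of triples, parametrize by one coordinate so that the other two are roots of a quadratic, convert the discriminant character sum to $\lambda_q$ via the substitution $w=y+y^{-1}$ (the paper substitutes $a=y_3+y_3^{-1}$ directly in the discriminant, you first clear denominators to the palindromic quartic $4y^4+12y^3+13y^2+12y+4$ --- the same computation), and then remove the diagonal and the $y_i=1$ loci by inclusion--exclusion, with the same $\delta_{p,5}$ collapse at $(1,1,1)$. Your explicit arguments for $p=2$ and $p=3$ (the cubic $t^3+2t^2-2\gamma t-\gamma$ vanishing at $t=1$ in characteristic $3$) fill in details the paper omits, and all your final counts agree with the lemma.
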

	
	\begin{proof}
		Let $y_i=\frac{x_i}{x_4}$, $i=1,2,3$ and define
		\begin{equation*}\label{1112'}
			Y'':=\left\{ (y_1,y_2,y_3)\in \left(\gf(q)^*\setminus \{1\}\right)^3: \begin{array}{l}
y_1,y_2,y_3~\mbox{distinct }\\
\sum_i y_i+2=0, \sum_i y_i^{-1}+2=0.
\end{array}\right\}.
		\end{equation*}
We have
\begin{eqnarray} \label{ap:nyc} N_{\mathbf{1^32^1}}=(q-1) \cdot \#Y''. \end{eqnarray}
It is easy to verify that
\begin{eqnarray*}
			\#Y''= \left\{\begin{array}{cll}
				(1+(-1)^m)(q-4)&:&p=2,\\
				0&:&p=3,
\end{array}\right.
		\end{eqnarray*}
and we omit the details. Here we only treat the case that $p \ge 5$. The proof is very similar to that of Lemma \ref{number}. Define
\begin{eqnarray*} \label{n4:x}
Y:&=&\bigg\{(y_1,y_2,y_3)\in \left(\gf(q)^*\right)^3: \sum_i y_i+2=0, \sum_i y_i^{-1}+2=0\bigg\},\\
\triangle:&=& \bigcup_{i \ne j} Y \cap \{y_i =y_j\}, \quad Y'=Y \setminus \triangle,\\
H:&=& \bigcup_j Y' \cap \{y_j=1\}.
\end{eqnarray*}
Thus
\[Y''=Y' \setminus H. \]
We will compute $\#Y, \#\triangle$ and $\#H$ respectively to obtain
\[\#Y''=\#Y-\#\triangle-\#H. \]
We first compute $\#Y$, that is, we consider $(y_1,y_2,y_3) \in \left(\gf(q)^*\right)^3$ such that
\begin{eqnarray} \label{ap:yi} y_1+y_2=-2-y_3, \quad y_1^{-1}+y_2^{-1}=-2-y_3^{-1}.\end{eqnarray}
If $y_1+y_2=0$, then $y_1^{-1}+y_2^{-1}=\frac{y_1+y_2}{y_1y_2}=0$, hence $-2-y_3=-2-y_3^{-1}=0$, that is, $y_3=-2=-\frac{1}{2}$, which is impossible since $p \ge 5$. So $y_1+y_2 \ne 0$ and $y_1^{-1}+y_2^{-1} \ne 0$, so we have $y_3 \in \gf(q)^* \setminus \{-2,-\frac{1}{2}\}$.
Equation (\ref{ap:yi}) implies that
\[ y_1+y_2=-y_3-2, \quad y_1y_2=
		\frac{y_3+2}{y^{-1}_3+2}.\]
So $y_1,y_2 \in \gf(q)^*$ are the two roots of the quadratic equation
		\begin{equation}\label{n2quadratic}
			t^2+(y_3+2)t+\frac{y_3+2}{y^{-1}_3+2}=0.
		\end{equation}
Since $\frac{y_3+2}{y^{-1}_3+2} \ne 0$ as $y_3 \in \gf(q)^* \setminus \{-2,-\frac{1}{2}\}$, $t=0$ is not a root of Equation (\ref{n2quadratic}). The discriminant is
\[D=(y_3+2)^2-\frac{4(y_3+2)}{y^{-1}_3+2}=\frac{\left(2(y_3+y_3^{-1})+5\right) \cdot \left(2(y_3+y_3^{-1})+1\right)}{(y_3^{-1}+2)^2}.\]
So for a given $y_3$, the number of $y_1,y_2 \in \gf(q)^*$ satisfying (\ref{ap:yi}), which is the number of roots of Equation (\ref{n2quadratic}) in $\gf(q)^*$, is given by $1+\eta(D)$, so we have
\begin{align*}
			 \#Y&=\sum_{y_3\in\gf(q)^*\setminus\{-2,-\frac{1}{2}\}}\left(1+\eta(D)\right)\\
		&=q-3+\sum_{y_3\in\gf(q)^*\setminus\{-2,-\frac{1}{2}\}}\eta\left(\big(2(y_3+y_3^{-1})+5\big) \cdot \big(2(y_3+y_3^{-1})+1\big)\right)\\
&=q-3+\sum_{y_3\in\gf(q)^*}\eta\left(\big(2(y_3+y_3^{-1})+5\big) \cdot \big(2(y_3+y_3^{-1})+1\big)\right). 		 \end{align*}
Let $y_3+y_3^{-1}=a$, that is, $y_3^2-ay_3+1=0$. Easy to see that for any given $a \in \gf(q)$, the number of $y_3 \in \gf(q)$ satisfying the relation $y_3+y_3^{-1}=a$ is given by $1+\eta(a^2-4)$, so we have 		 \begin{align*}
			 \#Y&=q-3+\sum_{a \in\gf(q)}\left(1+\eta(a^2-4)\right) \cdot \eta\left((2a+5) (2a+1)\right)\\
&=q-3+\sum_{a \in\gf(q)} \eta\left((2a+5) (2a+1)\right)+\sum_{a \in\gf(q)}\eta\left((a^2-4)(2a+5) (2a+1)\right).		 \end{align*}
The second character sum is $-\eta(4)=-1$ by Lemma \ref{charactersumquadratic}, and the third character sum is $\lambda_{p^m}$, after substituting $2a$ by $a$. Therefore we obtain
\begin{eqnarray*} \label{ap:cY}
\#Y=q-4+\lambda_{p^m}.
\end{eqnarray*}
As for $\triangle$, denoting
\[\triangle_{i,j}=Y \cap \{y_i=y_j\}, \quad i<j. \]		
Very similar to the proof of Lemma \ref{number}, we can obtain
\[\#\triangle_{i,j}=1+\eta(-15), \quad \forall i<j,\]
\[\# \left(\triangle_{i,j} \cap \triangle_{i,k}\right)=\#\left\{y \in \gf(q)^*:3y+2=0,3y^{-1}+2=0\right\}=\delta_{p,5}, \quad i,j, k~\mbox{distinct}.\]
Noting
\[\triangle=\triangle_{1,2} \cup \triangle_{1,3} \cup \triangle_{2,3},\]
by a simple application of the inclusion-exclusion principle, we can obtain
\begin{eqnarray*} \label{ap:ctri}
\#\triangle=3+3 \eta(-15)-2 \delta_{p,5}.
\end{eqnarray*}
Finally for $H$, using similar argument in the proof of Lemma \ref{number}, by considering
\[H_i=Y' \cap \{y_i=1\}, \quad \forall i \]
for $i=1,2,3$ individually and by a simple application of the inclusion-exclusion principle again, we can obtain
\begin{eqnarray*} \label{ap:cH}
\#H=3\left(1+\eta(5)-\delta_{p,5}\right).
\end{eqnarray*}
We conclude that
\begin{eqnarray*} \#Y''&=&\#Y-\#\triangle-\#H\\
&=&q-10-3\eta(5)-3\eta(-15)+5\delta_{p,5}+\lambda_{p^m}. \end{eqnarray*}
Noting (\ref{ap:nyc}), we completes the proof of Lemma \ref{pre:N6}.
	\end{proof}
	
	\subsection{Codewords of ``pure weight'' 5 in the Melas codes and a K3 surface}
	
	Denote by $N_{\mathbf{1^5}}$ the number of solution $(x_1,x_2,x_3,x_4,x_5) \in (\gf(q)^*)^5$ such that
	\begin{eqnarray*} \label{pre:N1^23^1}
		\left\{\begin{array}{lll}
			x_1+x_2+x_3+x_4+x_5&=&0,\\
			 \frac{1}{x_1}+\frac{1}{x_2}+\frac{1}{x_3}+\frac{1}{x_4}+\frac{1}{x_5}&=&0, \end{array}\right. \quad x_1,x_2,x_3,x_4,x_5 \emph{\mbox{ distinct}}.
	\end{eqnarray*}
	The quantity $N_{\mathbf{1^5}}$ is closely related to the number of codewords of ``pure weight'' 5 in the Melas codes. It turns out $N_{\mathbf{1^5}}$ can be computed as a consequence of the main results in \cite{Peter} for any $p$, though in the paper only the case $p=2$ was treated explicitly (see \cite[Section 1. Melas codes]{Peter}). Here we follow the presentation of \cite{Peter} closely and give a detailed account of how to compute $N_{\mathbf{1^5}}$ for any $p$. Interested readers may review \cite{liv,Peter} for other aspects of the deep theories of arithmetic geometry and number theory that were involved in the computation.
	
	Let
	\begin{eqnarray*} \label{5:x5}
		X:=\left\{(x_1,\ldots,x_5) \in \bP^4: \begin{array}{c}
			x_1+x_2+x_3+x_4+x_5=0\\
			 x_2x_3x_4x_5+x_1x_3x_4x_5+x_1x_2x_4x_5+x_1x_2x_3x_4=0\end{array}\right\}. \end{eqnarray*}
	$X$ is a surface. It was known that the ten points which form the $\mathfrak{S}_5$-orbit of $(1,-1,0,0,0)$ are all singular on $X$; in fact they are ordinary double points which become smooth after one blow up. The resulting surface $\tilde{X}$ is smooth, that is, we let
	\[\tilde{X}:=\mbox{the blow up of } X \mbox{ in the } \mathfrak{S}_5 \mbox{-orbit of } (1,-1,0,0,0). \]
	This surface is considered to be defined over $\mathbb{Q}$. It is non-singular and reduces to a non-singular surface $\tilde{X}(p)$ whenever $p \ne 3,5$.
	
	Let $q=p^m$. We denote by $N_q(Y)$ the number of points of a variety $Y$ with all coordinates in $\gf(q)$. Since every singular point is defined over the ground field $\gf(p)$ and is replaced by a projective line upon blowing up, we derive
	\begin{eqnarray} \label{5:xbx} N_q\left(\tilde{X}\right)=N_q\left(X\right)+10q.\end{eqnarray}
	It turns out that the K3-surface $\tilde{X}$ has maximal Picard number whose Hasse zeta function over $\Q$ can be explicitly computed, from which the quantity $N_q\left(\tilde{X}\right)$ can be derived easily. The result of $N_q\left(\tilde{X}\right)$ can be summarized below (see (0.1) Theorem, (4.12) Proposition and (4.12) Proposition (iii) in \cite{Peter}):
	\begin{theorem} \label{5:nqxb}
		Let $q=p^m$ where $p$ is a prime. Then
		\begin{eqnarray*} \label{5:aq}
			N_q\left(\tilde{X}\right)=1+q^2+q\left(16+4 \left(\frac{q}{3}\right)\right)+A_q,
		\end{eqnarray*}
		where $(\frac{\cdot}{\cdot})$ is the Jacobi symbol and
		\begin{itemize}
			\item[(1)] If $p=3$, then $A_q=(-1)^m \cdot q$;
			
			\item[(2)] If $p=5$, then $A_q=q$;
		\end{itemize}
		If $p \ne 3,5$ then
		\[A_q=\beta_1^m+\beta_2^m,\]
		where $\beta_1,\beta_2$ are reciprocal of the two complex roots of the polynomial
		\begin{eqnarray*} Q_2(t)=1-A_pt+\left(\frac{p}{15}\right)p^2t^2,\end{eqnarray*}
		and the quantity $A_p$ is the coefficient of $z^p$ in the $z$-expansion of
		\[\left(\sum_{m,n \in \Z} z^{m^2+mn+4n^2}\right) \cdot z \cdot \prod_{r=1}^{\infty} (1-z^r)(1-z^{3r})(1-z^{5r})(1-z^{15r}).\]
		Or in this case $A_p$ can be computed more precisely as follows:
		\begin{itemize}
			
			\item[(3)] If $p \equiv 7, 11,13,14 \pmod{15} ~ \Longleftrightarrow ~ \left(\frac{p}{15}\right)=-1$, then $A_p=0$, so $\beta_{1,2}=\pm p$ and
			\[A_q=q\left(1+(-1)^m\right).\]
			
			\item[(4)] If $p \equiv 1, 4 \pmod{15}$, let $a,b$ be integers satisfying $a^2+ab+4b^2=p$, then $A_p=2a^2-7b^2+2ab$.
			
			\item[(5)] If $p \equiv 2,8 \pmod{15}$, let $a,b$ be integers satisfying $2a^2+ab+2b^2=p$, then $A_p=a^2+8ab+b^2$.
		\end{itemize}
	\end{theorem}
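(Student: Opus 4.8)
The statement is essentially a repackaging of the results of \cite{Peter}, so the plan is to show how the zeta function of $\tilde{X}$ computed there yields the stated point counts. Since $\tilde{X}$ is the blow up of $X$ at the ten $\gf(p)$-rational ordinary double points in the $\mathfrak{S}_5$-orbit of $(1,-1,0,0,0)$, and each such point is replaced by a projective line, we have $N_q(\tilde{X})=N_q(X)+10q$ over every $\gf(q)$ with $p\ne 3,5$ (this is (\ref{5:xbx})); thus it suffices to count points on the smooth K3 model $\tilde{X}$ itself. The first step is to invoke the Grothendieck--Lefschetz trace formula,
\[
N_q(\tilde{X})=\sum_{i=0}^{4}(-1)^i\,\tr\left(\mathrm{Frob}_q \mid H^i_{\mathrm{et}}\big(\tilde{X}_{\overline{\gf(p)}},\Q_\ell\big)\right).
\]
For a K3 surface $H^1=H^3=0$, while $H^0$ and $H^4$ contribute $1$ and $q^2$, so the whole problem reduces to understanding the action of $\mathrm{Frob}_q$ on the $22$-dimensional space $H^2$.

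Next I would decompose $H^2$ into its algebraic and transcendental parts. Because $\tilde{X}$ is a \emph{singular} K3 surface (geometric Picard number $20$ when $p\ne 3,5$), the N\'eron--Severi lattice has rank $20$, and on it $\mathrm{Frob}_q$ acts as $q$ times a permutation-type matrix governed by the fields of definition of an explicit basis of divisor classes; each $\gf(q)$-rational generating class contributes $q$ to the trace and permuted classes contribute $q$ times the trace of a permutation matrix. Carrying out this bookkeeping -- as is done in \cite{Peter} -- one finds that $12$ of the generating classes are already defined over $\gf(p)$ while the remaining $8$ become rational precisely over $\gf(p)(\omega)$ with $\omega$ a primitive cube root of unity, being otherwise interchanged in four transpositions; hence the algebraic part contributes $q\,\big(16+4\left(\tfrac{q}{3}\right)\big)$ to $\tr(\mathrm{Frob}_q\mid H^2)$ (that is, $12q$ unconditionally, plus $8q$ more exactly when $q\equiv 1\pmod 3$).

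The heart of the matter is the rank-$2$ transcendental sublattice $T=\mathrm{NS}^{\perp}\subset H^2$. Here I would quote the main theorem of \cite{Peter}: the associated $2$-dimensional $\ell$-adic Galois representation is modular of weight $3$ and level $15$, with nebentypus $\chi(n)=\left(\tfrac{n}{15}\right)$ and complex multiplication by $\Q(\sqrt{-15})$, the attached newform being exactly the eta-quotient times theta series displayed in the statement. Consequently the local Euler factor at a good prime $p$ is $Q_2(t)=1-A_pt+\left(\tfrac{p}{15}\right)p^2t^2$, and the contribution of $T$ to $\tr(\mathrm{Frob}_q\mid H^2)$ equals $\beta_1^m+\beta_2^m=A_q$, where $\beta_1,\beta_2$ are the reciprocal roots of $Q_2$. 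Adding the three pieces yields $N_q(\tilde{X})=1+q^2+q\big(16+4\left(\tfrac{q}{3}\right)\big)+A_q$, which is the assertion for $p\ne 3,5$.

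It remains to make $A_p$ explicit and to treat the bad primes. Using the CM description: if $\left(\tfrac{p}{15}\right)=-1$ then $p$ is inert in $\Q(\sqrt{-15})$, forcing $A_p=0$, hence $\beta_{1,2}=\pm p$ and $A_q=q\big(1+(-1)^m\big)$, which is case (3); if $\left(\tfrac{p}{15}\right)=1$ then $p$ splits and is represented by exactly one of the two reduced binary quadratic forms of discriminant $-15$, namely $x^2+xy+4y^2$ when $p\equiv 1,4\pmod{15}$ and $2x^2+xy+2y^2$ when $p\equiv 2,8\pmod{15}$, and evaluating the relevant Hecke character on the corresponding prime ideal produces the quadratic expressions $2a^2-7b^2+2ab$ and $a^2+8ab+b^2$ of cases (4) and (5). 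Finally $p=3$ and $p=5$ are the primes of bad reduction, for which $N_q(\tilde{X})$ is computed directly (or one quotes the supplementary computations of \cite{Peter}), giving $A_q=(-1)^m q$ and $A_q=q$ respectively. The main obstacle, and the real content, is the middle step -- pinning down the transcendental lattice and proving its modularity with the precise newform; this rests on the structure theory of singular K3 surfaces (Shioda--Inose) together with Livn\'e's modularity theorem \cite{liv}, and is precisely the deep work of \cite{Peter} that I would be relying on wholesale.
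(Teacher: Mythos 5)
Your proposal is correct and takes essentially the same route as the paper: the paper offers no independent proof of this theorem, but simply quotes (0.1) Theorem and (4.12) Proposition of \cite{Peter}, which are exactly the results you invoke. Your added sketch of the internal argument of \cite{Peter} (Lefschetz trace formula, the rank-$20$ N\'eron--Severi contribution $q\left(16+4\left(\frac{q}{3}\right)\right)$, and modularity of the rank-$2$ transcendental lattice via the weight-$3$, level-$15$ CM newform and Livn\'e's theorem \cite{liv}) is consistent with that reference but goes beyond what the paper itself records.
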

	
Here we compute $A_q$ for several small $p$.

	\begin{example}
	When $p=7, 11,13$, then $A_q=q\left(1+(-1)^m\right)$. 
	\end{example}
	
	\begin{example}
		When $p=17$, since $p \equiv 2 \pmod{15}$ and since $17=2a^2+ab+2b^2$ for $a=-1,b=3$, we have $A_{17}=a^2+8ab+b^2=-14$, the reciprocal of the roots of $Q_2(t)=1+14t+17^2t^2$ are
		\[\beta_{1,2}=-7 \pm 4 \sqrt{-15}, \]
		so
		\[A_q=\left(-7 + 4 \sqrt{-15}\right)^m+\left(-7 - 4 \sqrt{-15}\right)^m. \]
	\end{example}
	
	\begin{example}
		When $p=19$, since $p \equiv 4 \pmod{15}$ and since $19=a^2+ab+4b^2$ for $a=1,b=2$, we have $A_{17}=2a^2-7b^2+2ab=-22$, the reciprocal of the roots of $Q_2(t)=1+22t+19^2t^2$ are
		\[\beta_{1,2}=-11 \pm 4 \sqrt{-15}, \]
		so
		\[A_q=\left(-11 + 4 \sqrt{-15}\right)^m+\left(-11 - 4 \sqrt{-15}\right)^m. \]
	\end{example}
	
	\begin{example}
		When $p=23$, since $p \equiv 8 \pmod{15}$ and since $23=2a^2+ab+2b^2$ for $a=1,b=3$, we have $A_{23}=a^2+8ab+b^2=34$, the reciprocal of the roots of $Q_2(t)=1-34t+23^2t^2$ are
		\[\beta_{1,2}=17 \pm 4 \sqrt{-15}, \]
		so
		\[A_q=\left(17 + 4 \sqrt{-15}\right)^m+\left(17 - 4 \sqrt{-15}\right)^m. \]
	\end{example}

	Armed with Theorem \ref{5:nqxb}, we can compute $N_{\mathbf{1^5}}$ for any prime $p$. First, let
	\begin{eqnarray} \label{5:hx} H:=\bigcup_j X \cap \left\{x_j=0\right\}, \quad X'=X \setminus H. \end{eqnarray}
	It is easy to check that
	\begin{eqnarray} \label{5:nqh} N_q(H)=10q-10,\end{eqnarray}
	accounting for the ten points which form the $\mathfrak{S}_5$-orbit of $(1,-1,0,0,0)$, and the $10(q-2)$ points which form the $\mathfrak{S}_5$-orbit of $(x,-x-1,1,0,0)$ for any $x \in \gf(q) \setminus \{0,-1\}$. 
	
	Next define
	\begin{eqnarray} \label{5:xtri} \triangle:=\bigcup_{i \ne j} X' \cap \{x_i=x_j\}, \quad X'':=X' \setminus \triangle. \end{eqnarray}
	Clearly by (\ref{5:xbx}) and (\ref{5:hx})--(\ref{5:xtri}) we have
	\begin{eqnarray} \label{ap:forN15} \frac{N_{\mathbf{1^5}}}{q-1}=N_q\left(X''\right)= N_q\left(\tilde{X}\right)-N_q(\triangle)-20q+10. \end{eqnarray}
	The quantity $N_q\left(\triangle\right)$ can be also treated easily: partitioning the projective variety $\triangle$ into disjoint subsets according to the ``pattern'' of indices $i,j$ such that $x_i=x_j$ (for example, the pattern $\mathbf{2^13^1}$ means that among the variables $x_1,x_2,\cdots, x_5$, exactly two variables are equal, the other three are equal, and these two are distinct, one example is $x_{1}=x_{2},x_{3}=x_{4}=x_{5}$ and $x_1 \ne x_3$),  we have
	\begin{eqnarray*} \label{5:tri}
		N_q\left(\triangle\right)&=&\frac{1}{q-1} \cdot \left\{N_{\mathbf{5^1}}+\binom{5}{1}N_{\mathbf{1^14^1}}+\binom{5}{2}N_{\mathbf{2^13^1}}+
		\binom{5}{3}N_{\mathbf{1^23^1}}
		+\binom{5}{2}N_{\mathbf{1^32^1}}+\binom{5}{1} \cdot 3 N_{\mathbf{1^12^2}}\right\}.
	\end{eqnarray*}
	Here the quantities such as $N_{\mathbf{5^1}}, N_{\mathbf{1^14^1}},N_{\mathbf{1^23^1}}$ etc are all given in Lemmas \ref{pre:N1}-\ref{pre:N6}. Now returning to (\ref{ap:forN15}) we obtain 
	\begin{eqnarray} \label{5:N1^5}
			\frac{N_{\mathbf{1^5}}}{q-1}= N_q\left(\tilde{X}\right)-20q+10-\delta_{p,5}-5\delta_{p,3}- \frac{10N_{\mathbf{1^23^1}}}{q-1}-\frac{10N_{\mathbf{1^32^1}}}{q-1}-
			\frac{15N_{\mathbf{1^12^2}}}{q-1},
	\end{eqnarray}

	Finally employing Theorem \ref{5:nqxb} and Lemmas \ref{pre:N1}-\ref{pre:N6}, we can summarize the result of $N_{\mathbf{1^5}}$ and $A_{\mathbf{1^5}}$ below.
	
	\begin{lemma} \label{5:A15} For any prime $p$, we have 
		\[\frac{N_{\mathbf{1^5}}}{q-1}=\left\{\begin{array}{ll}
		q^2+q\left(-14+(-1)^m\right)+36 & \mbox{ if } p=3\\
		q^2+q\left(-13+4(-1)^m\right)+10\left(7+(-1)^m\right)& \mbox{ if } p=5\\
		q^2+q \left(-14+4 \left(\frac{q}{3}\right)\right)+86+20\left(\frac{5}{q}\right)+15\left(\frac{-15}{q}\right)-10\lambda_q+A_q& \mbox{ if } p \ge 7\end{array}\right.\]
		\[ A_{\mathbf{1^5}}=\frac{1}{5!} N_{\mathbf{1^5}}.\]
		Here $\lambda_q$ is given by Theorem \ref{pre:thm1} and $A_q$ is given by Theorem \ref{5:nqxb}. 
	\end{lemma}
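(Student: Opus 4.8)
The plan is to derive Lemma \ref{5:A15} by a straightforward substitution into the master identity (\ref{5:N1^5}), supplemented by a short combinatorial remark that pins down $A_{\mathbf{1^5}}$. All the substantive work has been done in the preceding results, so what remains is bookkeeping.

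First I would settle the relation $A_{\mathbf{1^5}}=\frac{1}{5!}N_{\mathbf{1^5}}$. By definition $A_{\mathbf{1^5}}$ counts the codewords $\bc=(c_x)_{x\in\gf(q)^*}\in M(q)$ whose complete weight is $\mathbf{1^5}$, that is, with exactly five coordinates equal to $1$ and all others equal to $0$. Such a codeword is determined by the unordered set $\{x_1,\ldots,x_5\}\subset\gf(q)^*$ of the five distinct positions carrying the symbol $1$, and the two defining equations of the Melas code $M(q)$ become precisely $\sum_i x_i=0$ and $\sum_i x_i^{-1}=0$. Hence $A_{\mathbf{1^5}}$ equals the number of such $5$-subsets, which is $N_{\mathbf{1^5}}/5!$ because the $5!$ orderings of one admissible subset yield the same codeword.

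Next I would substitute into (\ref{5:N1^5}) the value of $N_q(\tilde X)$ from Theorem \ref{5:nqxb} and the values of $N_{\mathbf{1^23^1}}$, $N_{\mathbf{1^32^1}}$, $N_{\mathbf{1^12^2}}$ from Lemmas \ref{pre:N4}, \ref{pre:N6}, \ref{pre:N5}, splitting into the cases $p=3$, $p=5$, $p\ge 7$. For $p=3$ one uses $\left(\frac{q}{3}\right)=0$ (as $3\mid q$), $A_q=(-1)^m q$, $\delta_{p,3}=1$, $N_{\mathbf{1^23^1}}=(q-1)(q-3)$ and $N_{\mathbf{1^32^1}}=N_{\mathbf{1^12^2}}=0$. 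For $p=5$ one uses $\left(\frac{q}{3}\right)=(-1)^m$ (as $5\equiv 2\pmod{3}$), $A_q=q$, $\delta_{p,5}=1$, $N_{\mathbf{1^32^1}}=(q-1)\left(q-6-(-1)^m\right)$ and $N_{\mathbf{1^23^1}}=N_{\mathbf{1^12^2}}=0$. For $p\ge 7$ both $\delta$-terms vanish and one inserts the general expressions carrying $\left(\frac{5}{q}\right)$, $\left(\frac{-15}{q}\right)$, $\lambda_q$, $A_q$. In each case, collecting the $q^2$-, $q$- and constant coefficients together with the character-sum contributions produces the three displayed formulas; for example when $p\ge 7$ the $\left(\frac{5}{q}\right)$-terms combine as $-10+30=20$, the $\left(\frac{-15}{q}\right)$-terms as $30-15=15$, the constant as $1+10-10+100-15=86$, the $q$-coefficient as $16-20-10=-14$ (next to the surviving $4\left(\frac{q}{3}\right)$ from $N_q(\tilde X)$), and the remaining terms as $-10\lambda_q+A_q$.

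There is no genuine obstacle left: the deep inputs — the K3-surface point count $N_q(\tilde X)$ of Theorem \ref{5:nqxb}, the elliptic-curve character sum $\lambda_q$ of Theorem \ref{pre:thm1}, and the small-weight counts of Lemmas \ref{pre:N1}--\ref{pre:N6} — are already established, so the proof reduces to careful arithmetic. The only points that demand attention are the correct value of the Jacobi symbol $\left(\frac{q}{3}\right)$ in the two exceptional characteristics $p=3$ and $p=5$, matching the binomial multipliers $\binom{5}{3}=10$, $\binom{5}{2}=10$, $\binom{5}{1}\cdot 3=15$ in (\ref{5:N1^5}) to the patterns $\mathbf{1^23^1}$, $\mathbf{1^32^1}$, $\mathbf{1^12^2}$, and keeping track of the $\delta_{p,3}$ and $\delta_{p,5}$ correction terms appearing in (\ref{5:N1^5}).
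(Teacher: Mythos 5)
Your proposal is correct and follows exactly the route the paper takes: substituting $N_q(\tilde X)$ from Theorem \ref{5:nqxb} and the counts from Lemmas \ref{pre:N1}--\ref{pre:N6} into identity (\ref{5:N1^5}) and simplifying case by case, with the $5!$ normalization accounting for the orderings of the five positions carrying the symbol $1$. Your arithmetic in all three cases checks out against the stated formulas.
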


Now we can prove Lemma \ref{4:gamma25}.

\noindent {\bf Proof of Lemma \ref{4:gamma25}}
Since $\Gamma_2=A_{\mathbf{1^2}}+A_{\mathbf{2^1}}$, quoting Lemma \ref{pre:N1}, we obtain $\Gamma_2=\frac{q-1}{2}$. On the other hand, 
\[\Gamma_5:=A_{\mathbf{5^1}}+A_{\mathbf{1^14^1}}+ A_{\mathbf{2^13^1}}+ A_{\mathbf{1^23^1}}+ A_{\mathbf{1^32^1}}+A_{\mathbf{1^12^2}}+A_{\mathbf{1^5}}. \]
Quoting Lemmas \ref{pre:N1}-\ref{pre:N6} together with Lemma \ref{5:A15}, we can obtain the value of $\Gamma_5$ as described in (\ref{4:gamma5}). This completes the proof of Lemma \ref{4:gamma25}.  $\hfill \square$



\begin{thebibliography}{99}
		\bibitem{DobF} H. Dobbertin, P. Felke, T. Helleseth, and P. Rosendahl, ``Niho type crosscorrelation functions via dickson polynomials and Kloosterman sums," \emph{IEEE Trans. Inform. Theory}, vol. 52, no. 2, pp. 613--627, 2006.
	

	
		
		\bibitem{vander2} G. van der Geer and M. van der Vlugt, ``Artin-Schreier curves and codes," \emph{J. Algebra}, vol. 139, no. 1, pp. 256--272, 1991.

\bibitem{G} R. Gold, ``Maximal recursive sequences with 3-valued recursive crosscorrelation
functions (Corresp.)," \emph{IEEE Trans. Inf. Theory}, vol. 14,
no. 1, pp. 154--156, 1968.

\bibitem{Gong} S. W. Golomb and G. Gong, \emph{Signal Design for Good Correlation for
Wireless Communication, Cryptography and Radar}. Cambridge, U.K.:
Cambridge Univ. Press, 2005.

	\bibitem{Tor76} T. Helleseth, ``Some results about the cross-correlation function between
two maximal linear sequences," \emph{Discrete Math.}, vol. 16, no. 3,
pp. 209--232, 1976.

\bibitem{Tor78} T. Helleseth, ``A note on the cross-correlation function between two
binary maximal length linear sequences," \emph{Discrete Math.}, vol. 23, no. 3,
pp. 301--307, 1978.

\bibitem{Tor98} T. Helleseth and P. V. Kumar, ``Sequences with low correlation," in
\emph{Handbook of Coding Theory}, V. Pless and C. Huffman, Eds. New York,
NY, USA: Elsevier, 1998.

\bibitem{LZ} N. Li and X. Zeng, ``A survey on the applications of Niho exponents", \emph{Cryptogr. Commun.}, vol. 11, no. 3, pp.  509--548, 2019.


		\bibitem{FF} R. Lidl and H. Niederreiter, \emph{Finite Fields,} Encyclopedia of Mathematics and Its Applications, vol. 20. Cambridge U.K: Cambridge University Press, 1997.	

		\bibitem{liv} R. Livn\'e, ``Motivic orthogonal two-dimensional representations of Gal$(\bar{\Q}/\Q)$'', \emph{Israel J. Math.}, vol. 92, no. 1-3, pp. 149--156, 1995.
		
	\bibitem{MacW} F. J. MacWilliams and N. J. A. Sloane, \emph{The theory of error-correcting Codes Volumes I and II}, North-Holland Mathematical Library, Vol. 16, 1981.
	
\bibitem{Niho} Y. Niho, ``Multivalued cross-correlation functions between two maximal
linear recursive sequences," Ph.D. dissertation, Dept. Elect. Eng.,
Univ. Southern California, Log Angeles, CA, USA, 1972.	
		
		\bibitem{Peter} C. Peters, J. Top, and M. van der Vlugt, ``The Hasse zeta function of a $K3$ surface related to the number of words of weight 5 in the Melas codes," \emph{J. Reine Angew. Math.}, vol. 432, pp. 151--176, 1992.
		
		\bibitem{R}P. Rosendahl, ``Niho type cross-correlation functions and related equations," Ph.D. dissertation, Dept. Math., Univ. Turku, Turku, Finland, 2004.

		\bibitem{vander3} R. Schoof and M. van der Vlugt, ``Hecke operators and the weight distributions of certain codes," \emph{J. Combin. Theory Ser. A}, vol. 57, no. 2, pp. 163--186, 1991.
		
		\bibitem{SilveEC} J. H. Silverman, \emph{The Arithmetic of Elliptic Curves}, Second Edition. Heidelberg: Springer, 2009.

\bibitem {T} H. M. Trachtenberg, ``On the cross-correlation functions of maximal
recurring sequences," Ph.D. dissertation, Univ. Southern California,
Log Angeles, CA, USA, 1970.

		\bibitem{TZ}Z. Tu and X. Zeng, ``A class of permutation trinomials over finite fields of odd characteristic," \emph{Cryptogr. Commun.}, vol. 11, no. 4, pp. 563--583, 2019.


		
		\bibitem{XiaLi} Y. Xia, N. Li, X. Zeng, and Tor Helleseth, ``An open problem on the distribution of a Niho-type cross-correlation function," \emph{IEEE Trans. Inform. Theory}, vol. 62, no. 12, pp. 7546--7554, 2016.
		
		\bibitem{XiaLi2} Y. Xia, N. Li, X. Zeng, and Tor Helleseth, ``On the correlation distribution for a Niho decimation," \emph{IEEE Trans. Inform. Theory}, vol. 63, no. 11, pp. 7206--7218, 2017.
		
\bibitem{Yan} H. Yan, Y. Xia, C. Li, Tor Helleseth, M. Xiong, and J. Luo, ``The differential spectrum of the power mapping $x^{p^n-3}$,'' \emph{IEEE Trans. Inform. Theory}, vol. 68, no. 8, 5535-5547, 2022.

\bibitem{ZLFG} T. Zhang, S. Li, T. Feng, and G. Ge, ``Some new results on the cross
correlation of m-sequences," \emph{IEEE Trans. Inform. Theory}, vol. 60, no. 5,
pp. 3062--3068, 2014.


	\end{thebibliography}
\end{document}